\documentclass[final]{dmtcs-episciences}

\pagestyle{headings}

\usepackage{thm-restate}

\usepackage[utf8]{inputenc}
\usepackage{subfigure}

%

\usepackage[round,numbers]{natbib}

\author[Ma{\"e}l Dumas et al.]{Ma{\"e}l Dumas\affiliationmark{1}\thanks{A majority of the work for this paper was conducted when M. Dumas was affiliated with Universit{\'e} d'Orl{\'e}ans.}
  \and Anthony Perez\affiliationmark{2}
  \and Mathis Rocton\affiliationmark{3}
  \and Ioan Todinca\affiliationmark{2}
}
\title[Polynomial kernels for block and strictly chordal editing]{Polynomial kernels for edge modification problems towards block and strictly chordal graphs\thanks{A preliminary version of this work appeared in IPEC'21~\cite{DPT21}. All results concerning \BGE{} are new to this submission.}}
\affiliation{
    Institute of Informatics, University of Warsaw, Warsaw, Poland\\
    Université d’Orléans, INSA CVL, LIFO UR 4022, Orléans, France\\
    Algorithms and Complexity Group, TU Wien, Vienna, Austria
  }
\keywords{parameterized complexity, kernelization, graph modification, block graphs, strictly chordal graphs}

\usepackage{enumerate}
\bibliographystyle{plainurl}


\def\ie{{\em i.e.}~}

\newcommand{\C}{\mathcal{C}}

\renewcommand{\leq}{\leqslant}
\renewcommand{\geq}{\geqslant}

\def\SCC{{\textsc{Strictly Chordal Completion}}}
\def\SCD{{\textsc{Strictly Chordal Deletion}}}
\def\SCE{{\textsc{Strictly Chordal Editing}}}

\def\BGC{{\textsc{Block Graph Completion}}}
\def\BGD{{\textsc{Block Graph Deletion}}}
\def\BGE{{\textsc{Block Graph Editing}}}
\def\CD{{\textsc{Cluster Deletion}}}
\def\CE{{\textsc{Cluster Editing}}}
\def\LPC{{\textsc{3-Leaf Power Completion}}}

\newcommand{\bdsc}{strictly chordal}
\newcommand{\bg}{block}

\usepackage[svgnames,x11names]{xcolor}

\usepackage[utf8]{inputenc}
 
\usepackage{fixmath}
\usepackage{stmaryrd}

\usepackage{amsmath, amssymb}
\newtheorem{theorem}{Theorem}
\newtheorem{lemma}{Lemma}

\newtheorem{definition}{Definition}

\newtheorem{conjecture}{Conjecture}
\newtheorem{observation}{Observation}

\newtheorem{polyrule}{Rule}[section]

\usepackage[capitalise,nameinlink]{cleveref}
\crefname{lemma}{Lemma}{Lemmata}
\crefname{figure}{Figure}{Figures}
\crefname{polyrule}{Rule}{Rules}
\crefname{observation}{Observation}{Observations}
\crefname{conjecture}{Conjecture}{Conjectures}


\usepackage[linesnumbered,vlined,ruled]{algorithm2e}



\usepackage[defaultlines=3,all]{nowidow}

\usepackage[skins]{tcolorbox}
\tcbuselibrary{many}

\newtcolorbox{mypb}[2][]
{
    enhanced,
    boxed title style = {colframe=white},
    attach boxed title to top left={
        xshift=0.5cm,
        yshift= -3.5mm,     
    },
    top=4mm,
    coltitle=black,
    beforeafter skip=\baselineskip,
    colframe = lightgray,
    colback  = white,
    colbacktitle  = white,
    coltitle = black,  
    fonttitle = \scshape,
    titlerule = 0mm, 
    title    = {#2},
    #1
}

\newcommand{\Pb}[4]{%
    \begin{mypb}{#1}
       \textbf{\textsf{Input}}: #2%
       \par\noindent%
       \textbf{\textsf{#4}}: Does there exist #3?%
       \smallskip%
       \par\noindent%
    \end{mypb}
}

\usepackage{tikz}
\usetikzlibrary{positioning}
\usetikzlibrary{calc}
\tikzstyle{fptbox}=[line width=0.5mm,rectangle, minimum height=.8cm,fill=white!70,rounded corners=1mm,draw]
\tikzstyle{us}=[line width=1mm,rectangle, minimum height=.8cm,fill=white!70,rounded corners=1mm,draw]
\tikzstyle{fptedge}=[line width=0.5mm,->]

\usepackage[framemethod=tikz]{mdframed}

\definecolor{darkred}{RGB}{170,10,50}
\definecolor{normalred}{RGB}{220,0,0}
\definecolor{grassgreen}{RGB}{100,160,20}
\definecolor{darkgreen}{RGB}{0,150,20}
\definecolor{pacificorange}{RGB}{220,100,0}

\begin{document}
\publicationdata{vol. 27:2}{2025}{5}{10.46298/dmtcs.12998}{2024-02-05; 2024-02-05; 2024-11-12}{2025-02-14}

\maketitle

\begin{abstract}
We consider edge modification problems towards block and strictly chordal graphs, where one is given an undirected graph $G = (V,E)$ and an integer $k \in \mathbb{N}$ and seeks to \emph{edit} (add or delete) at most $k$ edges from $G$ to obtain a block graph or a strictly chordal graph. The \emph{completion} and \emph{deletion} variants of these problems are defined similarly by only allowing edge additions for the former and only edge deletions for the latter. 
Block graphs are a well-studied class of graphs and admit several characterizations, \emph{e.g.} they are diamond-free chordal 
graphs. Strictly chordal graphs, also referred to as \emph{block duplicate graphs}, are a natural generalization of block graphs where one can add true twins of cut-vertices. 
Strictly chordal graphs are exactly dart and gem-free chordal graphs.
We prove the NP-completeness for most variants of these problems and provide:
\begin{itemize}
    \item $O(k^2)$ vertex-kernels for \textsc{Block Graph Editing} and \textsc{Block Graph Deletion}
    \item $O(k^3)$ vertex-kernels for \textsc{Strictly Chordal Completion} and \textsc{Strictly Chordal Deletion}
    \item an $O(k^4)$ vertex-kernel for \textsc{Strictly Chordal Editing}
\end{itemize} 
\end{abstract}

\section{Introduction}
Parameterized algorithms are among the most natural approaches to tackle NP-hard optimization problems~\cite{CFK+15}. 
 In particular, they have been very successful in dealing with so-called edge modification problems on graphs: given as input an arbitrary graph $G=(V,E)$ and a parameter $k \in \mathbb{N}$, the goal is to transform $G$ into a graph with some specific properties (i.e., belonging to a specific graph class $\mathcal{G}$) by adding and/or deleting at most $k$ edges. Parameterized algorithms (also called FPT for \emph{fixed parameter tractable}) aim at a time complexity of type $f(k)\cdot n^{O(1)}$, where $f$ is some computable function, hence the combinatorial explosion is restricted to parameter $k$. 

When the target class $\mathcal{G}$ is characterized by a finite family of forbidden induced subgraphs, modification problems are FPT by a result of Cai~\cite{Cai96}. Indeed, as long as the graph contains one of the forbidden subgraphs, one can try each possibility to correct this obstruction and branch by recursive calls. On each branch, the budget $k$ is strictly diminished, therefore the whole algorithm has a number of calls bounded by some function $f(k)$. The situation is more complicated when the target class $\mathcal{G}$ is characterized by an infinite family of forbidden induced subgraphs. Nonetheless, a large literature is devoted to edge modification problems towards chordal graphs (where we forbid all induced cycles with at least four vertices) as well as sub-classes of chordal graphs, typically obtained by requiring some fixed set of obstructions, besides the long cycles. Observe that, in this case, the situation remains relatively simple if we restrict ourselves to \emph{edge completion} problems, where we are only allowed to \emph{add} edges to the input graphs. Indeed, in this case, if a graph has an induced cycle of length longer than $k+3$, it cannot be made chordal by adding at most $k$ edges. Therefore we can use again the approach of Cai to deal with cycles of length at most $k+3$ and other obstructions, and either the algorithm finds a solution in $f(k)$ recursive calls, or we can conclude that we have a no-instance. The cases of \emph{edge deletion} problems (where we are only allowed to remove edges) and \emph{edge editing} problems (where we are allowed to both remove edges or add missing edges) are more complicated, since even long cycles can be eliminated by a single edge removal. Therefore more efforts and more sophisticated techniques were necessary in these  situations, but several such problems turned out to be FPT~\cite{CM16,DGH+06,DGHN08}.  
The interested reader can refer to~\cite{CDF+23} for a broad and comprehensive survey on parameterized algorithms for edge modification problems. \\

We focus here on a sub-family of parameterized algorithms, namely on \emph{kernelization}. The goal of kernelization is to provide a polynomial algorithm transforming  any instance $(I, k)$ of the problem into an equivalent instance $(I',k')$ where $k'$ is upper bounded by some function of $k$ (in our case we will simply have $k' \leq k$), and the size of the new instance $I'$ is upper bounded by some function $g(k)$. Hence the size of the reduced instance does not depend on the size of the original instance. Kernels are obtained through a set of \emph{reduction rules}. 
While kernelization is possible for all FPT problems (the two notions are actually equivalent), 
the interesting question is whether a given FPT problem admits \emph{polynomial kernels}, where the size of the reduced instance is bounded by some polynomial in $k$. Note that, under some complexity assumptions, not all FPT problems admit polynomial kernels~\cite{BDF+09,BJK14,BTY11,CC15,GHP+13,KW09}.

In this paper we focus on \BGE, \BGD{} and all three 
variants of modification problems towards strictly chordal graphs. \emph{Block} graphs 
are a well-studied subclass of chordal graphs, that are diamond-free (\cref{fig:obs}). 
They are moreover the graphs in which every biconnected component induces a clique. Notice that this observation allows for a simple polynomial-time algorithm for \BGC{} since one needs to turn every biconnected component of the input graph into a clique.
\emph{Strictly chordal} graphs are another subclass of chordal graphs, also known as \emph{block duplicate} graphs~\cite{Kennedy05,KLY06,GP02}. They can be obtained from block graphs by repeatedly choosing some cut-vertex $u$ and adding a \emph{true twin} $v$ of $u$, that is a vertex $v$ adjacent to $u$ and all neighbors of $u$~\cite{GP02}. They can also be characterized as dart, gem-free chordal graphs (see \cref{fig:obs} and next section) and are thus ptolemaic (\ie chordal distance-hereditary~\cite{BLS99}). 
Strictly chordal graphs are also known to be a subclass of 4-leaf power graphs~\cite{KLY06}, and a super-class of 3-leaf power graphs~\cite{DGH+06}. 

\paragraph{Related work}
Kernelization for \textsc{chordal completion} goes back to the '90s and the seminal paper of Kaplan, Shamir and Tarjan~\cite{KST99}. In this work, the 
authors provide an $O(k^3)$ vertex-kernel for \textsc{Chordal Completion} (also known as \textsc{Minimum Fill-In}) that was later improved to $O(k^2)$ by a tighter analysis~\cite{NSS00}. Since then, several authors addressed completion, deletion and/or editing problems towards sub-classes of chordal graphs, such as 3-leaf power graphs~\cite{BPP10}, split and threshold graphs~\cite{Guo07}, proper interval graphs~\cite{BP13}, trivially perfect graphs~\cite{BBC+22,DP18,DPT23,Guo07,DP23} or ptolemaic graphs~\cite{CGP21}. We provide a general picture for some of those classes \cref{fig:polykernel_diagramme}.

\begin{figure}[!hbt]
\centering
\scalebox{0.85}{\begin{tikzpicture}[node distance=10mm]
    \begin{scope}
      \node[fptbox] (chordal) at (0,0) {chordal~\cite{KST99,NSS00}};
      
      \node[fptbox] (4leaf) [left = of chordal] {$4$-leaf power} edge[fptedge] (chordal);
      
      \node[fptbox] (it) [right = of chordal, xshift=14mm] {interval} edge[fptedge] (chordal);
      
      \node[fptbox] (pto) [below = of chordal, xshift=-30mm, yshift=5mm] {ptolemaic~\cite{CGP21}} edge[fptedge] (chordal); 
      
      \node[fptbox] (tp) [below = of it, right = of pto, yshift=-15mm] {trivially perfect~\cite{BBC+22,DP23}} edge[fptedge] (pto) edge[fptedge] (it);

      \node[us] (sc) [left = of pto, xshift=4mm, yshift=-15mm] {strictly chordal} edge[fptedge] (pto) edge[fptedge] (4leaf);
      
      \node[fptbox] (split) [right = of tp] {split~\cite{BBC+22,hammer1981splittance}} edge[fptedge] (chordal);
      
      \node[fptbox] (threshold) [below  = of tp, yshift=5mm] {threshold~\cite{DDL+22}} edge[fptedge] (split) edge[fptedge] (tp);
      
      \node[fptbox] (pint) [below = of it, yshift=5mm] {proper interval~\cite{BP13}} edge[fptedge] (it); 
      
      \node[fptbox] (3leaf) [left = of sc, below = of sc, xshift=-5mm, yshift=5mm] {$3$-leaf power~\cite{BPP10}} edge[fptedge] (sc);
      
      \node[us] (block) [right = of 3leaf, below = of sc, xshift=15mm, yshift=5mm] {block} edge[fptedge] (sc);
      
      \draw[color=Coral3,line width=2pt,-,rounded corners] 
      ($(pto.west |- pto.north)+(-4.5,0.3)$) -- 
      ($(chordal.west |- chordal.south)+(-0.6,-0.3)$) -- 
      ($(chordal.north west)+(-0.1,0.2)$) -- 
      ($(chordal.north east)+(0.1,0.2)$) -- 
      ($(chordal.east|- chordal.south)+(0.6,-0.3)$) -- 
      ($(pint.north east)+(0.2,0.2)$);

      \draw[color=DeepSkyBlue4,line width=2pt,-,rounded corners] 
      ($(pto.west |- pto.south)+(-4.5,-0.3)$) -- 
      ($(pint.east |- pint.south)+(0.2,-0.3)$);

      
      
      
      \node (pk)[xshift=5mm] at ($(4leaf.north west)+(-3.55,-1.4)$) {\textbf{poly-kernel $\downarrow$ }};

      \node (comp) at ($(4leaf.north west)+(-3.2,-0.8)$) {{\color{Coral3}\textbf{Completion}}};
      \node (ed) at ($(4leaf.south west)+(-2.7,-1.4)$) {{\color{DeepSkyBlue4}\textbf{Editing, Deletion}}};
      
    \end{scope}
\end{tikzpicture}}
     \caption{Kernelization status of subclasses of chordal graphs. 
     A class below the red or blue line indicates that the corresponding completion, deletion or editing problem admits a polynomial kernel.  All the edge modification problems for the presented classes are NP-complete at the exception of the ones for $4$-leaf power for which the complexity is unknown, \textsc{Split Editing} which is surprisingly in P~\cite{hammer1981splittance}, and \BGC{} which is also in $P$ as observed previously.
    }
     \label{fig:polykernel_diagramme}
\end{figure}

All these classes have in common that they can be defined as chordal graphs, plus a constant number of obstructions. 
 Moreover, 
several known results deal with classes that share a very common feature (see e.g.~\cite{BP13,KU14,BPP10,DP23}).  
Very informally, the target class $\mathcal G$ admits a tree-like decomposition, in the sense that the vertices of any graph $H \in {\mathcal G}$ can be partitioned into clique modules (having the same neighborhood in $H$), and these modules can be mapped onto the nodes of a decomposition tree, the structure of the tree describing the adjacencies between modules. Therefore, if an arbitrary graph $G$ can be transformed into graph $H$ by at most $k$ edge additions or deletions, at most $2k$ modules can be affected by the modifications. By removing the affected nodes from the decomposition tree, we are left with several components (\emph{chunks}) that correspond, in the initial graph $G$ as well as in $H$, to induced subgraphs that may be large but that already belong to the target class. Moreover, these chunks are attached to the rest of graph $G$ in a very regular way, through one or two nodes of the decomposition tree. The kernelization algorithms need to analyze these chunks and provide reduction rules, typically by ensuring a small number of nodes in the decomposition tree, plus the fact that each node corresponds to a module of small size. %
However, finding a general framework that captures this particular decomposition does not seem to be an easy task as each class exhibits particular structural properties that may need to be dealt with independently. 

As mentioned earlier, when restricted to completion problems any input graph cannot have large induced cycles and this property may be exploited to design kernelization algorithms. This observation lead to the following conjecture by Bessy and Perez~\cite{BP13}. 
\begin{conjecture}[\cite{BP13}]
\label{conj:bp}
    Let $\mathcal{G}$ be a class of graphs that can be defined as chordal graphs plus a constant number of obstructions
    Then, the \textsc{$\mathcal{G}$-Completion} problem admits a polynomial kernel. 
\end{conjecture}
Several other questions remain open, for example it is not known whether \textsc{chordal deletion} or \textsc{chordal editing} admit polynomial kernels~\cite{CDF+23}. 


\paragraph{Our contribution}
We first prove that most variants of the considered problems are NP-Complete, the 
only exception being the \BGC{} problem that admits a polynomial-time algorithm. 
Secondly, we give kernelization algorithms for all these problems, with different 
bounds. 
First of all, we begin by illustrating the techniques we will use on the \BGE{} problem and obtain the following result.

\begin{restatable}{theorem}{THMBGE} 
\label{thm:taille_noyau_BGE}
\BGE{} and \BGD{} admit a kernel with $O(k^2)$ vertices.
\end{restatable}

Next, we present kernelization algorithms for the \SCE{} problem and obtain better vertex-kernels for both \SCC{} and \SCD{}.

\begin{restatable}{theorem}{THMSCE}
    \label{thm:taille_noyau_SCE}
\SCE{} admits a kernel with $O(k^4)$ vertices.
\end{restatable}

\begin{restatable}{theorem}{THMSCCD}
    \label{thm:taille_noyau_SCCD}
    \SCC{} and \SCD{} admit kernels with $O(k^3)$ vertices.
\end{restatable}

Above all, our purpose is to exhibit general techniques that might, we hope, be extended to kernelizations for edge modification problems towards other graph classes. In particular, unlike the general strategy we previously described, both the classes of \bg\ graphs and of \bdsc\ graphs do not have exactly a tree-like decomposition. Still, they can be decomposed into structures than can be seen as a generalization of a tree. Our algorithms exploit these informal observations and provide the necessary reduction rules together with the combinatorial analysis for the kernel size. 
 Finally, we note that while both kernels use structures and rules that are similar, the ones for \BGE{} are significantly simpler, hence the kernel for this problem is a nice warm-up for the one for \SCE{}. Moreover, even though reduction rules for \SCE{} may be directly adapted for \BGE, it would lead to a significantly larger kernel.  



\paragraph{Outline} The paper is organized as follows. We begin with some preliminary results and definitions, including the 
NP-completeness proofs of aforementioned problems (\cref{sec:prelim}). 
We then describe a kernel with 
$O(k^2)$ vertices for \BGE{} and \BGD{} (\cref{sec:block}), introducing the techniques that will be adapted to obtain an $O(k^4)$ vertex-kernel for \SCE{} (\cref{sec:sc}). 
We then explain how these techniques produce $O(k^3)$ vertex-kernels for both 
\SCC{} and \SCD{} (\cref{sec:sccd}) and then conclude with some perspective 
and open problems (\cref{sec:conclu}). 

\section{Preliminaries} 
\label{sec:prelim}

We consider simple, undirected graphs $G = (V,E)$ where $V$ denotes the \emph{vertex set} and $E \subseteq (V \times V)$ the %
\emph{edge set} of $G$. We will sometimes use $V(G)$ and $E(G)$ to clarify the context. 
Given a vertex $u \in V$, the \emph{open neighborhood} of $u$ is the set %
$N_G(u) = \{v \in V:\ uv \in E\}$. 
The \emph{closed neighborhood} of $u$ is defined as $N_G[u] = N_G(u) \cup \{u\}$. 
Two vertices $u$ and $v$ are \emph{true twins} if $N_G[u] = N_G[v]$. 
Given a subset of vertices $S \subseteq V$, $N_G[S]$\ is the set $\cup_{v \in S} N_G[v] $
and $N_G(S)$ is the set $N_G[S] \setminus S$. 
We will omit the mention to $G$ whenever the context is clear. A subset of vertices $M \subseteq V$ is a \emph{module} if for every vertices $x,y \in M$, $N(x) \setminus M = N(y) \setminus M$. 
The subgraph \emph{induced} by $S$ is defined as %
$G[S] = (S,E_S)$ where $E_S = \{uv \in E:\ u \in S, v \in S\}$. 
For the sake of readability, given a subset $S \subseteq V$ 
we define $G\setminus S$ as $G[V \setminus S]$.
A subgraph $C$ is a \emph{connected component} of $G$ if it is a maximal connected subgraph of $G$. 
A graph is \emph{biconnected} if it is still connected after removing any vertex.
A subgraph $C$ is a \emph{biconnected component} of $G$ if it is a maximal biconnected subgraph of $G$. 
A set $S \subseteq V$ is a \emph{separator} of $G$ %
if $G\setminus S$ is not connected. Given two vertices $u$ and $v$ of $G$, the separator $S$ is a $uv$-separator if $u$ and $v$ lie in distinct connected components of $G \setminus S$. Moreover, $S$ is a \emph{minimal} $uv$-separator if no proper subset 
of $S$ is a $uv$-separator. Finally, a separator $S$ is \emph{minimal} if there exists 
a pair $\{u,v\}$ such that $S$ is a minimal $uv$-separator.

\begin{figure}[ht]
    \centering
    \includegraphics[width=11cm]{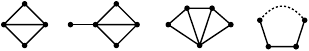}
    \caption{The \emph{diamond}, \emph{dart}, \emph{gem} and \emph{hole} (cycle of length at least $4$)}
    \label{fig:obs}
\end{figure}

\begin{definition}
\label{def:cc}
    Given a graph $G = (V,E)$, a \emph{critical clique} of $G$ is a set $K \subseteq V$ such that $G[K]$ is a clique, $K$ is a module and is inclusion-wise maximal under this property.
\end{definition}

Notice that $K$ is a maximal set of true twins and that the set $\mathcal{K}(G)$ of critical cliques of any graph $G$ partitions its vertex set $V(G)$.  This leads to the following definition. 

\begin{definition}[Critical clique graph]
\label{def:ccg}
Let $G = (V,E)$ be a graph. The \emph{critical clique graph} of $G$ is the graph $\C(G) = (\mathcal{K}(G), E_\mathcal{C})$ with $E_\mathcal{C} = \{KK'\;|\;\forall u\in K, \forall v\in K', uv \in E\}$.
\end{definition}

\paragraph{Block and strictly chordal graphs}
Block graphs are graphs in which every biconnected component is a clique. They can also be characterized as chordal graphs that do not contain diamonds as induced subgraph~\cite{BM86} (see \cref{fig:obs}).
A natural generalization of block graphs are \bdsc\ graphs, also known as block 
duplicate graphs, that are obtained from block graphs by adding true twins~\cite{Kennedy05,KLY06,GP02}. Note that by definition the critical clique graph $\C (G)$ is a block graph.

\begin{theorem}[Theorem~8~\cite{MW15}] 
\label{thm:caracBD}
Let $G = (V,E)$ be a graph. The following properties are equivalent:
    \begin{enumerate}
        \item $G$ is a \bdsc\ graph,
        \item $G$ does not contain any dart, gem or hole as an induced subgraph (see \cref{fig:obs}),
        \item $G$ is chordal and the minimal separators of $G$ are pairwise disjoint.
    \end{enumerate}
\end{theorem}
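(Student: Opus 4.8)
The plan is to route everything through the critical clique graph $\C(G)$, establishing $(1)\Leftrightarrow(2)$, $(2)\Leftrightarrow(3)$ and $(2)\Leftrightarrow(4)$. Two structural observations about $\C(G)$ carry most of the weight. First, since any two distinct critical cliques are either completely joined or completely non-adjacent (each being a module), choosing one representative per critical clique induces a copy of $\C(G)$ inside $G$, and conversely $G$ is the clique blow-up of $\C(G)$ obtained by expanding each vertex $K$ into a clique on $|K|$ vertices. In particular $G$ has an induced hole iff $\C(G)$ does: a hole of length at least $4$ in $G$ uses vertices from pairwise distinct critical cliques (two consecutive vertices of such a hole have a common non-neighbour on the cycle, hence are not true twins), so it projects to a hole of $\C(G)$, while representatives of a hole of $\C(G)$ induce a hole of $G$; consequently $G$ is chordal iff $\C(G)$ is chordal. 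Second, every minimal separator $S$ of $G$ is a union of critical cliques: if $S$ is a minimal $uv$-separator and $s\in S$, then $s$ has a neighbour in the component $C_u$ of $u$ and a neighbour in the component $C_v$ of $v$ in $G\setminus S$, so any true twin of $s$ lying outside $S$ would be adjacent to both and thus link $C_u$ to $C_v$, a contradiction. It then follows that $S\mapsto\{K\in\K(G):K\subseteq S\}$ is a bijection between the minimal separators of $G$ and those of $\C(G)$, and it preserves (non-)disjointness; hence the minimal separators of $G$ are pairwise disjoint iff those of $\C(G)$ are.

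Granting these facts, $(2)\Leftrightarrow(4)$ reduces to: a graph $H$ with no pair of true twins — such as $H=\C(G)$, all of whose critical cliques are singletons — is a block graph iff it is chordal with pairwise disjoint minimal separators. The forward direction is clear since the minimal separators of a block graph are its cut-vertices. For the converse, suppose $H$ is chordal but not a block graph; then it is not diamond-free, so it contains an induced diamond on $\{a,b,c,d\}$ with $cd\notin E(H)$ and $a,b$ the two degree-$3$ vertices. As $H$ has no true twins, $a$ and $b$ are distinguished by some vertex $e\notin\{a,b,c,d\}$, say $e\sim a$ and $e\not\sim b$. Let $\widehat S$ be a minimal $cd$-separator and $\widehat T$ a minimal $be$-separator: hitting the $c$--$d$ paths through $a$ and through $b$ forces $a,b\in\widehat S$, and hitting the $b$--$e$ path through $a$ forces $a\in\widehat T$. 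Then $a\in\widehat S\cap\widehat T$, while $b\in\widehat S\setminus\widehat T$ (no vertex of a minimal $be$-separator equals $b$), so $\widehat S$ and $\widehat T$ are two distinct minimal separators that intersect — a contradiction.

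For $(2)\Leftrightarrow(3)$ I would verify a small-graph correspondence directly. In an induced dart or gem of $G$, no two of the five vertices are true twins (true twins must be adjacent, and a short check of the few adjacent pairs shows each such pair is already distinguished inside the obstruction), so a dart or gem of $G$ projects to a dart or gem of $\C(G)$, each of which contains an induced diamond; together with the hole correspondence this yields ``$\C(G)$ a block graph $\Rightarrow$ $G$ has no dart, gem or hole''. Conversely, if $\C(G)$ is not a block graph it contains an induced hole (giving a hole of $G$) or an induced diamond $\{K_a,K_b,K_c,K_d\}$ with $K_cK_d$ the missing edge; picking $K_e\notin\{K_a,K_b,K_c,K_d\}$ distinguishing $K_a$ and $K_b$ with $K_e\sim K_a$, $K_e\not\sim K_b$, and representatives $a,b,c,d,e$ of these cliques in $G$, the induced subgraph $G[\{a,b,c,d,e\}]$ is a dart when $ce,de\notin E(G)$, a gem when exactly one of $ce,de$ is present (with $a$ dominating the induced $P_4$ on the other four vertices), and contains an induced $C_4$ when $ce,de\in E(G)$. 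Finally $(1)\Leftrightarrow(2)$ is the definition unrolled: adding a true twin of a vertex merges it into an existing critical clique and leaves the critical clique graph unchanged, so starting from a block graph and duplicating vertices keeps $\C$ a block graph; conversely, if $\C(G)$ is a block graph $B$ then $G$ is a clique blow-up of $B$, reached from $B$ by duplicating its vertices one copy at a time (each new copy being a true twin of an existing one), up to the usual conventions in degenerate cases.

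I expect the main obstacle to be making the correspondence between $G$ and $\C(G)$ fully airtight — in particular the claim that the minimal separators of $G$ are exactly the unions of critical cliques corresponding to the minimal separators of $\C(G)$, where one must check that full components and the minimality condition transfer cleanly in both directions. Once this is secured, $(2)\Leftrightarrow(4)$ is a three-line argument and $(2)\Leftrightarrow(3)$ reduces to a finite inspection of graphs on at most five vertices.
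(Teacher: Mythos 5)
The paper offers no proof of this theorem to compare against: it is imported verbatim from~\cite{MW15} as a black box. What you have written is therefore a self-contained replacement proof, and as far as I can check it is correct. The architecture --- push everything through $\C(G)$, observe that $G$ is the clique blow-up of the twin-free graph $\C(G)$, handle $(2)\Leftrightarrow(4)$ by the diamond-plus-distinguisher argument and $(2)\Leftrightarrow(3)$ by a finite case analysis on five vertices --- is sound, and extracting two intersecting minimal separators from a diamond ($\widehat S$ separating $c,d$ must contain both $a$ and $b$ because of the paths $c\text{--}a\text{--}d$ and $c\text{--}b\text{--}d$, while $\widehat T$ separating $b$ from a distinguisher $e$ must contain $a$ but cannot contain $b$) is a genuinely short route to $(4)\Rightarrow(2)$. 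The separator correspondence you flag as the main risk does go through: once you know every minimal separator $S$ of $G$ is a union of critical cliques (your twin argument), $G\setminus S$ is exactly the blow-up of $\C(G)\setminus\mathcal{S}$, so full components and the ``every vertex of $S$ has a neighbour in both full components'' characterization of minimality transfer verbatim in both directions, and disjointness is preserved.

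Two places deserve more care than you give them, though neither is a gap in substance. First, in $(2)\Rightarrow(1)$ the definition of \bdsc\ (block duplicate) graphs only permits adding true twins of \emph{cut-vertices}, so ``$G$ is a clique blow-up of the block graph $\C(G)$'' is not literally the definition. This is repaired by noting that duplicating a non-cut-vertex of a block graph keeps it a block graph (the enlarged block is still a clique), so all non-cut-vertex multiplicities can be absorbed into the choice of the base block graph, after which only cut-vertices are duplicated; one must also adopt the formulation in which several twins of the same cut-vertex may be added, since after the first duplication that vertex is no longer a cut-vertex of the current graph. Second, in $(2)\Leftrightarrow(3)$ the assertion that the five vertices of a dart or gem land in five distinct critical cliques requires checking every \emph{adjacent} pair against a third vertex of the obstruction; this does hold for both graphs, but it is the load-bearing step and should be written out rather than left as ``a short check''.
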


We consider the following problems:

\Pb{\textsc{Block Graph Editing}}%
{A graph $G=(V,E)$, an integer $k \in \mathbb{N}$}%
{a set of pairs $F \subseteq (V \times V)$ of size at most $k$ such that the graph 
$H = (V, E \triangle F)$ is a block graph, with $E \triangle F = (E \setminus F) \cup (F \setminus E)$}
{Question}

The \SCE{} problem is defined similarly, requiring the graph $H = (V, E \triangle F)$ to be strictly chordal. 
The \BGC{} and \SCC{} (resp. \BGD{} and \SCD{}) problems are defined similarly by requiring $F$ to be 
disjoint from (resp. included in) edge set $E$. Given a graph 
$G = (V,E)$, a set $F \subseteq (V \times V)$ such that the graph $H = (V, E \triangle F)$ belongs to the target graph class is called an \emph{edition} 
of $G$. When $F$ is disjoint from $E$ (resp. included in $E$) it is 
called a \emph{completion} (resp. a \emph{deletion}) of $G$. 
For the sake of simplicity we use $G \triangle F$, $G+F$ and $G-F$ to denote 
the resulting graphs in all versions of the problems. 
In all cases, $F$ is \emph{optimal} whenever it is minimum-sized. Given an 
instance $(G = (V,E), k)$ of any of the aforementioned problems, we 
say that $F$ is a \emph{$k$-edition} (resp. $k$-completion, $k$-deletion) whenever $F$ is an edition (resp. completion, deletion) 
of size at most $k$. A vertex is \emph{affected} by $F$ 
whenever it is contained in some pair of $F$. We say that an 
instance $(G=(V,E),k)$ is a yes-instance whenever it admits a $k$-edition (resp. $k$-completion, $k$-deletion). When applied to an instance $(I,k)$ of the problem, a reduction rule is said to be \emph{safe} if $(I, k)$ is a yes-instance if and only if the reduced instance $(I',k')$ is a yes-instance. \\

We will use the following result that guarantees that any 
clique module of a given graph $G$ will remain a clique module in any optimal edition towards some hereditary class of graphs closed under true twin addition, in particular towards \bdsc\ graphs. \\

\begin{lemma}[\cite{BPP10}]
    \label{lem:homogen-compl-here} 
    Let $\mathcal{G}$ be an hereditary class of graphs closed under true twin addition. %
    For every graph $G=(V,E)$, there exists an optimal edition (resp. completion, deletion) 
    $F$ into a graph of $\mathcal{G}$ such that for any two %
    critical cliques $K$ and $K'$ either $(K \times K') \subseteq F$ or %
    $(K \times K') \cap F = \emptyset$.
\end{lemma}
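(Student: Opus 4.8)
The plan is to start from an arbitrary optimal edition $F_0$ into a graph $H_0 \in \mathcal{G}$ and transform it, without increasing its size, into an edition $F$ that treats every pair of critical cliques uniformly. First I would fix a critical clique $K$ of $G$ and a vertex $u \in K$, and argue that among the vertices of $K$ we may as well ``copy the behaviour'' of $u$: concretely, let $F$ be obtained from $F_0$ by replacing, for every $v \in K \setminus \{u\}$, the set of pairs of $F_0$ incident to $v$ by the translated set of pairs that makes $N_{G \triangle F}[v] = N_{G \triangle F}[u]$. In other words, in the new graph $H = G \triangle F$ every vertex of $K$ becomes a true twin of $u$; equivalently $K$ becomes a clique module of $H$.

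The key step is to check two things about this re-routing: (i) $|F| \le |F_0|$, and (ii) $H \in \mathcal{G}$. For (i), the pairs of $F$ incident to a fixed $v \in K \setminus\{u\}$ are in bijection (via the translation $u \leftrightarrow v$) with the pairs of $F_0$ incident to $u$, so $v$ contributes at most as many pairs as $u$ did in $F_0$ — and summing over $K$, using that in $G$ the vertices of $K$ already have the same neighbourhood outside $K$ and form a clique inside $K$, one gets that the total count does not go up (here one must be a little careful about pairs internal to $K$ and about double-counting pairs with both endpoints affected, but these are bookkeeping). For (ii), the point is that $H$ is obtained from the induced subgraph $H[(V \setminus K) \cup \{u\}]$, which equals $H_0[(V\setminus K)\cup\{u\}]$ and hence lies in $\mathcal{G}$ since $\mathcal{G}$ is hereditary, by repeatedly adding true twins of $u$; since $\mathcal{G}$ is closed under true twin addition, $H \in \mathcal{G}$.

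Having done this for one critical clique, I would iterate over all critical cliques of $G$. The mild subtlety is that the operation for $K$ might disturb the uniformity already achieved for some other critical clique $K'$; to handle this cleanly I would process the critical cliques in some fixed order and, when processing $K$, choose the representative $u \in K$ to be one whose incident pairs in the current edition are fewest, and observe that the translation operation for $K$ does not change, for any other critical clique $K'$ and any $v' \in K'$, whether the pair $vv'$ with $v \in K$ is present — it only possibly changes which $v \in K$ it attaches to, and since after processing $K$ all of $K$ is a module this is exactly what yields $(K \times K') \subseteq F$ or $(K \times K') \cap F = \emptyset$. Once every critical clique has been processed, every pair of distinct critical cliques $K, K'$ satisfies the homogeneity condition, and $F$ is still an optimal edition, which is what we want. (The completion and deletion cases are identical: the translation of a pair in $E$ stays in $E$ and the translation of a non-edge stays a non-edge, because the vertices of $K$ have identical neighbourhoods outside $K$ and are mutually adjacent, so $F \subseteq E$ resp.\ $F \cap E = \emptyset$ is preserved.)

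The main obstacle I anticipate is the size bound (i): one has to set up the translation so that pairs with \emph{both} endpoints in affected critical cliques are not counted twice or lost, and so that pairs internal to $K$ are correctly accounted for. The cleanest way around this is to phrase the argument as a single global re-routing — define $F$ directly as the set of pairs $\{xy : $ the ``canonical representatives'' of the critical cliques of $x$ and $y$ form a pair of $F_0\}$, restricted appropriately inside critical cliques — and then compare $|F|$ to $|F_0|$ by a counting argument over pairs of critical cliques, using that each critical clique of $G$ is already a clique module of $G$ so that grouping $F_0$'s pairs by the pair of critical cliques they join is well-defined. I would also invoke the characterizations of Theorem~\ref{thm:caracBD} only implicitly here: the lemma is stated for general hereditary classes closed under true twin addition, and \bdsc\ graphs form such a class, so nothing class-specific beyond heredity and twin-closure is needed.
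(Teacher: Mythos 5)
The paper does not prove this lemma at all: it is imported verbatim from~\cite{BPP10}, so there is no in-paper proof to compare against. Your argument is essentially the canonical one from that reference (pick in each critical clique $K$ a representative $u$ incident to the fewest pairs of the current edition going outside $K$, re-route every $v\in K$ to mimic $u$, use heredity to see that the graph minus $K\setminus\{u\}$ stays in $\mathcal{G}$ and twin-closure to put the twins back, and note that the re-routing preserves the completion/deletion constraint because $K$ is already a clique module of $G$), and it is sound. One small imprecision: your claim that processing $K$ ``does not change whether the pair $vv'$ is present'' is false for $v\in K\setminus\{u\}$ (such pairs can be added to or removed from $F$); the correct statement you need, and which does hold, is that processing $K$ preserves the property that any previously processed $K'$ is still a clique module of the current graph, since the new adjacency of each $v\in K$ to $K'$ copies that of $u$, which is uniform over $K'$. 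With that fix the induction over critical cliques closes and the homogeneity $(K\times K')\subseteq F$ or $(K\times K')\cap F=\emptyset$ follows for every pair.
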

 
Notice in particular that \cref{lem:homogen-compl-here} implies that whenever 
the target graph class is hereditary and closed under true twin addition one may 
reduce the size of critical cliques to $k+1$. We shall see \cref{sec:block} that this result does not apply to block graphs. However, we will still be able to bound the size of critical cliques by $O(k)$. 

\paragraph{Join composition $\otimes$} 
We now turn our attention to the notion of \emph{join composition} (operation $\otimes$) between two graphs. This operation will play an important part in the proofs of our reduction rules for all considered problems. 
\begin{definition} 
    Let $G_1 = (V_1,E_1)$ and $ G_2 = (V_2,E_2)$ be two vertex-disjoint graphs and let $S_1 \subseteq V_1, S_2 \subseteq V_2$. The \emph{join composition} of $G_1$ and $G_2$ on $S_1$ and $S_2$, denoted $(G_1,S_1) \otimes (G_2,S_2)$, is the graph: $$(G_1,S_1) \otimes (G_2,S_2) = (V_1 \cup V_2, E_1 \cup E_2 \cup (S_1 \times S_2))$$
\end{definition} 



\begin{lemma} \label{lem:joincomp_sep}
Let $G_1 = (V_1,E_1)$ and $ G_2 = (V_2,E_2)$ be two disjoint chordal graphs, $S_1 \subseteq V_1$ and $S_2 \subseteq V_2$ be non-empty cliques of respectively $G_1$ and $G_2$ and $G= (G_1,S_1) \otimes (G_2,S_2)$. Then graph $G$ is chordal, and any minimal separator $S$ of $G$ is also a minimal separator of $G_1$ or $G_2$, except  if $S= S_i$ for $i  \in \{1,2\}$ and $S_i$ is strictly included in a maximal clique of $G_i$. 
\end{lemma}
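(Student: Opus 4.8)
The plan is to argue in two stages: first, that $G = (G_1, S_1) \otimes (G_2, S_2)$ is chordal, and second, to analyze the structure of its minimal separators.

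\textbf{Chordality.} First I would show $G$ is chordal. Suppose $G$ contains an induced cycle $C$ of length at least $4$. Since $G_1$ and $G_2$ are chordal, $C$ cannot lie entirely inside $V_1$ or entirely inside $V_2$, so $C$ has vertices on both sides and must cross the join $S_1 \times S_2$ at least twice. Walking around $C$, it enters and leaves $V_1$ (and likewise $V_2$) an even number of times; since $C$ has no chord, it can cross at most twice, hence exactly twice. Thus $C \cap V_1$ is an induced path $P_1$ with both endpoints in $S_1$, and $C \cap V_2$ is an induced path $P_2$ with both endpoints in $S_2$. But $S_1$ is a clique of $G_1$, so if $P_1$ has two distinct endpoints they are adjacent, giving a chord of $C$ unless $P_1$ is a single edge; symmetrically for $P_2$. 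If both $P_1$ and $P_2$ are single edges then $C$ has length $4$ with all four vertices pairwise adjacent across the join, again a chord. In every case we contradict $C$ being an induced cycle of length $\geq 4$, so $G$ is chordal.

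\textbf{Minimal separators.} Let $S$ be a minimal $uv$-separator of $G$, say with $u, v$ in distinct components $C_u, C_v$ of $G \setminus S$. In a chordal graph every minimal separator is a clique, so $S$ is a clique of $G$. I would split into cases according to how $u$ and $v$ sit relative to $V_1, V_2$. If $u, v \in V_1$ (the case $u,v \in V_2$ being symmetric): note that if $S \not\supseteq S_1$, then $S_1 \setminus S$ is nonempty and, because every vertex of $S_1$ is adjacent to every vertex of $S_2$ and $V_2 \cup S_1$ induces a connected subgraph, all of $V_2$ together with $S_1 \setminus S$ lies in a single component of $G \setminus S$; one then checks that $S$ acts as a $uv$-separator already inside $G_1$ (any path in $G$ from $u$ to $v$ avoiding $S$ can be rerouted to avoid $V_2$, since a detour through $V_2$ must enter and leave via $S_1 \setminus S$ and can be short-circuited using the clique $S_1$), and minimality of $S$ in $G$ forces minimality in $G_1$. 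Hence either $S$ is a minimal separator of $G_1$, or $S \supseteq S_1$; in the latter case, since $S$ is a clique containing the clique $S_1$ and the only vertices adjacent to all of $S_1$ besides $S_1$ itself are those in $N_{G_1}(S_1) \cup S_2$, I would argue $S = S_1$ (it cannot properly contain $S_1$ using a vertex of $S_2$, as that would merge $C_u$ with $V_2$-side without separating, contradicting minimality; and it cannot use a vertex $w \in N_{G_1}(S_1) \setminus S_1$ adjacent to all of $S_1$ without violating minimality similarly), and moreover $S_1$ is not a maximal clique of $G_1$ precisely when it is a separator — more carefully, if $S = S_1$ is a minimal separator of $G$ then $S_1$ cannot be a maximal clique of $G_1$, since otherwise one shows $G_1$ is entirely on one side and $S_1$ does not separate. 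The remaining case is $u \in V_1 \setminus S_1$, $v \in V_2 \setminus S_2$ (or with one of them in an $S_i$): here I claim $S \supseteq S_1$ or $S \supseteq S_2$, because any vertex of $S_1 \setminus S$ together with any vertex of $S_2 \setminus S$ would be adjacent, linking the $u$-side to the $v$-side; so one of $S_1, S_2$ is fully contained in $S$, and then the previous analysis applies to conclude $S$ is a minimal separator of the corresponding $G_i$ or equals $S_i$ with $S_i$ non-maximal.

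\textbf{Main obstacle.} The delicate point I expect to be the main obstacle is the bookkeeping around the case $S \supseteq S_i$: showing that minimality of $S$ in $G$ forces $S$ to be \emph{exactly} $S_i$ (rather than a larger clique), and pinning down the exact equivalence ``$S_i$ is a minimal separator of $G$ $\iff$ $S_i$ is not a maximal clique of $G_i$''. This requires carefully using that removing $S_i$ disconnects the two sides only when there is still something of $G_i$ on the far side of $S_i$, i.e.\ when $S_i$ fails to be a maximal clique of $G_i$ (if $S_i$ were a maximal clique of $G_i$, then $N_{G_1}[S_1]=S_1$ would mean $V_1 \setminus S_1$... no: one must instead observe $G \setminus S_i$ keeps $V_1$ connected to nothing on that side only if $S_i$ is maximal, and handle the general structure via the clique tree of $G_i$). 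I would make this rigorous by working with a clique tree of $G_i$ and the standard fact that the minimal separators of a chordal graph are exactly the intersections of adjacent bags, checking that $S_i$ appears as such an intersection in (a clique tree of) $G$ iff it is properly contained in some maximal clique of $G_i$.
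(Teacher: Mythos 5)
Your chordality argument is fine and matches the spirit of the paper's (which simply observes that a cycle meeting both sides must contain at least three vertices of the clique $S_1\cup S_2$). The problem is in the separator analysis, where two of your sub-cases assert false intermediate claims. First, for $u,v\in V_1$ with $S\supseteq S_1$ you claim $S=S_1$, arguing that no vertex $w\in N_{G_1}(S_1)\setminus S_1$ can belong to $S$ ``without violating minimality.'' This is false: take $G_1$ on $\{u,v,a,b\}$ with all edges except $uv$, let $S_1=\{a\}$, and attach any $G_2$ via the join. Then $S=\{a,b\}$ is a minimal $uv$-separator of $G$ properly containing $S_1$, and $b$ is essential (removing it leaves $u$ and $v$ connected through $b$). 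The lemma survives because $\{a,b\}$ is a minimal separator of $G_1$ --- but that is not the conclusion your branch reaches. Second, in the mixed case $u\in V_1\setminus S_1$, $v\in V_2\setminus S_2$ you claim $S\supseteq S_1$ or $S\supseteq S_2$ because a surviving vertex of $S_1$ and one of $S_2$ would ``link the $u$-side to the $v$-side.'' That presumes $S_1\setminus S$ lies in the component of $u$ and $S_2\setminus S$ in the component of $v$, which need not hold: for the path $u-a-x-y-v$ with $S_1=\{x\}$, $S_2=\{y\}$, the set $S=\{a\}$ is a minimal $uv$-separator containing neither $S_1$ nor $S_2$ (again it is a minimal separator of $G_1$, so the lemma holds, but not via your claimed route).

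The underlying issue is that your case split on the location of $u$ and $v$ does not line up with the trichotomy in the statement; the missing observation is that whenever $S$ fails to be exactly $S_i$, it must be a minimal separator of one of the $G_i$, and your two faulty branches are precisely the places where this alternative is the right one. The paper avoids this entirely by using the standard fact that in a chordal graph every minimal separator is the intersection of two maximal cliques $\Omega,\Omega'$ and is a minimal $ab$-separator for any $a\in\Omega\setminus S$, $b\in\Omega'\setminus S$; since every maximal clique of $G$ is a maximal clique of $G_1$, of $G_2$, or equals $S_1\cup S_2$, the case analysis is on $\Omega,\Omega'$ rather than on $u,v$, and the only situation forcing $S=S_i$ is $\Omega=S_1\cup S_2$ with $S$ not a minimal separator of $G_i$. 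Your closing remark about clique trees points in the right direction, but as written the proof would need both sub-cases reworked along these lines.
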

\begin{proof}
The graph $G$ is chordal since any cycle of $G$ is either contained in one of $G_1$, $G_2$ or, if it intersects both of them, contains at least three vertices of $S_1 \cup S_2$, which is a clique in $G$. In both cases, if the cycle has four or more vertices then it contains a chord.

 Any minimal separator $S$ of chordal graph $G$ is the intersection of two maximal cliques $\Omega,\Omega'$ of $G$, and moreover $S$ is a minimal $ab$-separator for any $a \in  \Omega \setminus S$, $b \in  \Omega' \setminus S$ --- in particular, $S$ is strictly included in each of the cliques (see, e.g., Theorem~7 and Lemma~5 in~\cite{GHP95}). By construction of $G$, any of its maximal cliques is either a maximal clique of $G_1$ or of $G_2$, or it is exactly $S_1 \cup S_2$. 

If $S = \emptyset$, then graph $G$ is not connected. This only happens if one of the graphs $G_i$ is not connected, since neither $S_1$ nor $S_2$ are empty.
In this case, $S$ is a minimal separator of $G_1$ or $G_2$.

From now on, we assume that $S$ is not empty.
Consider first the case where none of $\Omega, \Omega'$ equals $S_1 \cup S_2$. Then both $\Omega, \Omega'$ are contained in the same $G_i$, for $i \in \{1,2\}$ (otherwise their intersection $S$ would be empty). Assume w.l.o.g they are both in $G_1$, we claim that $S = \Omega \cap \Omega'$ is a minimal separator of $G_1$. Take $a \in  \Omega \setminus S$ and $b \in  \Omega' \setminus S$. Since $S$ separates $a$ and $b$ in $G$, it also separates them in $G_1$, which is a subgraph of $G$. But $S$ is in the common neighborhood of $a$ and $b$ in $G_1$, thus $S$ is minimal among the $ab$-separators of $G_1$, and the conclusion follows.

We are left with the case when $\Omega = S_1 \cup S_2$ and $\Omega'$ is contained in one of the two graphs $G_i$, say in $G_1$.
Therefore $S \subseteq S_1$. If this inclusion is strict, we claim that $S$ is also a minimal separator of $G_1$. Indeed $S$ is a minimal separator in $G$ for any $a \in  \Omega \setminus S$ and $b \in  \Omega' \setminus S$, therefore we can choose $a \in S_1 \setminus S$. Again $S$ separates $a$ and $b$ in the subgraph $G_1$ of $G$, and since $S$ is in the common neighborhood of $a$ and $b$ in $G_1$, it is necessarily a minimal separator of $G$. We deduce that, if $S$ is not a minimal separator of $G_1$, we must have $S = S_1$. We conclude that $S$ is strictly contained in the maximal clique  $\Omega'$ of $G_1$, which proves our lemma.
\end{proof}

We now present sufficient conditions on the sets $S_1$ and $S_2$ so that if the graphs $G_1$ and $G_2$ are block or strictly chordal, then so is the graph $(G_1,S_1) \otimes (G_2,S_2)$.

\begin{lemma}
\label{lem:constructionBG}
Let $G_1 = (V_1,E_1)$ and $ G_2 = (V_2,E_2)$ be two disjoint block graphs and let $S_1 \subseteq V_1, S_2 \subseteq V_2$. The graph $G= (G_1,S_1) \otimes (G_2,S_2)$ is a block graph if for $i \in \{1,2\}$, $S_i$ is a single vertex or a maximal clique of $G_i$.
\end{lemma}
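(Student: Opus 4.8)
The plan is to rely on the characterization of block graphs as diamond-free chordal graphs, and on the structural description of the minimal separators of a join composition already established in Lemma~\ref{lem:joincomp_sep}. Since a single vertex is trivially a clique, the hypothesis says that each $S_i$ is a clique of $G_i$ (either trivial, i.e.\ a single vertex, or maximal), so Lemma~\ref{lem:joincomp_sep} applies and $G$ is chordal. Moreover, since in a block graph every biconnected component is a clique, and a clique of a chordal graph is a block graph, it suffices to show that $G$ contains no induced diamond.

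First I would recall the other useful equivalent description of block graphs: a chordal graph is a block graph if and only if all its minimal separators have size at most one (equivalently, are single vertices or empty). Indeed a diamond, or more generally a non-clique biconnected component, forces a minimal separator of size $\geq 2$, and conversely a minimal separator $\{u,v\}$ with $u\neq v$ together with vertices on both sides produces a $K_4$ minus an edge, i.e.\ a diamond. So the statement reduces to: every minimal separator of $G$ has size at most $1$. Now apply Lemma~\ref{lem:joincomp_sep}: any minimal separator $S$ of $G$ is either a minimal separator of $G_1$ or of $G_2$ (hence of size $\leq 1$, since $G_1,G_2$ are block graphs), or $S = S_i$ for some $i$ with $S_i$ not a maximal clique of $G_i$. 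In the latter case the hypothesis forces $S_i$ to be a single vertex, so $|S|\leq 1$ as well. In all cases $|S|\leq 1$, and $G$ is a block graph.

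The main obstacle — really the only thing needing care — is the clean justification that ``chordal with all minimal separators of size $\leq 1$'' is equivalent to ``block graph'', and making sure the application of Lemma~\ref{lem:joincomp_sep} covers the empty-separator case (disconnected $G$): if $S=\emptyset$ is a minimal separator of $G$, the lemma already tells us $S$ is a minimal separator of some $G_i$ or $S = S_i$ with $S_i$ not maximal, and in both sub-cases $|S|=0\leq 1$, so each connected component of $G$ is handled by the same argument. I expect this to be short; alternatively, if one prefers to avoid invoking the separator characterization, one can argue directly that any induced diamond in $G$ must use at least two vertices from one side and at least two from the other (it cannot lie entirely within $G_1$ or $G_2$, which are diamond-free, and any vertex of $V_1$ is adjacent to a vertex of $V_2$ only through $S_1\times S_2$), then do a short case analysis on how the four diamond vertices split between $V_1$ and $V_2$ to contradict the hypothesis that each $S_i$ is a single vertex or a maximal clique. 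Either route is routine; I would present the separator-based one for brevity.
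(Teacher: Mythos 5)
Your proposal is correct and follows essentially the same route as the paper's proof: both apply Lemma~\ref{lem:joincomp_sep} to get chordality and the classification of minimal separators of $G$, observe that minimal separators of block graphs are single vertices, and conclude that all minimal separators of $G$ are single vertices (the non-maximal $S_i$ case being forced to be a singleton by hypothesis), hence $G$ is a block graph. The only difference is that you spell out the equivalence ``chordal with all minimal separators of size at most one $\Leftrightarrow$ block graph'' and the empty-separator case, which the paper leaves implicit.
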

\begin{proof} Observe that minimal separators of block graphs are single vertices since the biconnected components of block graphs are cliques. From \cref{lem:joincomp_sep}, $G$ is chordal and its minimal separators are the ones of $G_1$ and $G_2$ plus possibly $S_1$ (resp. $S_2$) if it is a not maximal clique of $G_1$ (resp. $G_2$). Hence minimal separators of $G$ are single vertices and it follows that $G$ is a block graph.
\end{proof}

\begin{lemma}
\label{lem:constructionSC}
Let $G_1 = (V_1,E_1)$ and $ G_2 = (V_2,E_2)$ be two disjoint strictly chordal graphs and let $S_1 \subseteq V_1, S_2 \subseteq V_2$. The graph $G= (G_1,S_1) \otimes (G_2,S_2)$ is strictly chordal if for $i \in \{1,2\}$, $S_i$ is a critical clique, a maximal clique or intersects exactly one maximal clique of $G_i$. 

\end{lemma}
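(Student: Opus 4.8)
The plan is to prove strict chordality of $G$ through characterization~(4) of Theorem~\ref{thm:caracBD}, exactly as Lemma~\ref{lem:constructionBG} was derived from Lemma~\ref{lem:joincomp_sep}: I must show that $G$ is chordal and that its minimal separators are pairwise disjoint. First I would observe that in each of the three cases $S_i$ is a clique of $G_i$; this is clear when $S_i$ is a critical clique or a maximal clique, and when $S_i$ meets exactly one maximal clique $\Omega$ of $G_i$ it holds because every vertex of $S_i$ lies in some maximal clique, necessarily $\Omega$, so $S_i\subseteq\Omega$. Lemma~\ref{lem:joincomp_sep} then gives that $G$ is chordal and that every minimal separator of $G$ is a minimal separator of $G_1$, a minimal separator of $G_2$, or equals some $S_i$ which is not a maximal clique of $G_i$.

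Next I would dispose of the easy disjointness cases. Two minimal separators that both come from $G_1$ (resp. $G_2$) are disjoint because $G_1$ (resp. $G_2$) is strictly chordal; a minimal separator of $G_1$ and one of $G_2$, the sets $S_1$ and $S_2$, and $S_i$ together with a minimal separator of $G_{3-i}$, all have disjoint vertex sets. Hence it suffices to establish: \emph{if $S_i$ is a minimal separator of $G$ but not a minimal separator of $G_i$, then $S_i$ is disjoint from every minimal separator of $G_i$.} By Lemma~\ref{lem:joincomp_sep} such an $S_i$ is not a maximal clique of $G_i$, so the ``maximal clique'' alternative of the hypothesis is vacuous here and only two cases remain.

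If $S_i$ meets exactly one maximal clique of $G_i$, then by the remark above each vertex of $S_i$ lies in a unique maximal clique, i.e.\ is simplicial, hence belongs to no minimal separator of $G_i$, and we are done. If $S_i$ is a critical clique of $G_i$, I would combine two facts. The first, valid in any graph, is that a critical clique $K$ is either disjoint from or contained in every minimal separator $S$: if $v\in K\cap S$ and $w\in K\setminus S$, then $w$ lies in a single component of $G_i\setminus S$ (a clique cannot be split by a separator), $v$ has a neighbour outside $K$ in some other full component of $G_i\setminus S$ (every vertex of a minimal $ab$-separator has a neighbour in the component of $a$ and in that of $b$), and modularity of $K$ forces $w$ to be adjacent to that neighbour --- impossible. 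The second fact is that in a strictly chordal graph every minimal separator $T$ is a single critical clique. Here I would first show $T$ is a module: if $c\notin T$ were adjacent to $a\in T$ but not to $b\in T$, pick maximal cliques $\Omega_1,\Omega_2$ with $\Omega_1\cap\Omega_2=T$ whose private parts $a_1\in\Omega_1\setminus T$ and $a_2\in\Omega_2\setminus T$ lie in distinct components of $G_i\setminus T$; since $c$ cannot be adjacent to vertices of two different components, $\{a,b,a_1,a_2,c\}$ induces a dart (if $c$ is adjacent to neither $a_1$ nor $a_2$) or a gem (if $c$ is adjacent to exactly one of them), contradicting Theorem~\ref{thm:caracBD}(3). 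Being a clique-module, $T$ is contained in a critical clique $K$, and a further ``different components'' argument shows $T=K$. Combining the two facts: if $S_i$ met a minimal separator $T$ of $G_i$, then $S_i\subseteq T$, and $T$ is a critical clique, so $S_i=T$ since critical cliques partition $V(G_i)$ --- but then $S_i$ would itself be a minimal separator of $G_i$, a contradiction. This proves the claim, and hence the lemma.

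The main obstacle is the structural fact that a minimal separator of a strictly chordal graph is exactly one critical clique (equivalently, one can pass to the block graph $\C(G_i)$ and use that minimal separators of block graphs are single vertices). Establishing that such a separator is a module via the dart/gem case analysis is the most delicate step, and throughout one must keep careful track of the standard chordal-graph facts about minimal separators --- that a clique cannot be split between two components, that every vertex of a minimal $ab$-separator has a neighbour in the component of $a$ and in that of $b$, and that a minimal separator is strictly contained in a maximal clique inside each of its full components. With this fact available (or quoted from~\cite{MW15}), the remainder is a routine combination of Lemma~\ref{lem:joincomp_sep}, Theorem~\ref{thm:caracBD}, and the partition of $V$ into critical cliques.
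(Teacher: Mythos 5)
Your proof is correct and follows the paper's overall strategy: reduce to characterization (4) of Theorem~\ref{thm:caracBD} via Lemma~\ref{lem:joincomp_sep}, dispose of the cross cases ($S_1$ vs.\ $S_2$, separators coming from different $G_i$'s), note that the ``intersects one maximal clique'' case gives simplicial vertices, and concentrate on showing that a critical clique $S_i$ which is not itself a minimal separator of $G_i$ cannot overlap one. Where you diverge is in how that last case is handled. The paper argues locally: if $S_i\subsetneq S$ for a minimal separator $S$ of $G_i$, maximality of the clique module $S_i$ yields $u\in S_i$, $v\in S\setminus S_i$ and a witness $w$ adjacent to exactly one of them, from which a second minimal separator overlapping $S$ is exhibited, contradicting characterization (4) for $G_i$ directly. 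You instead prove the stronger structural fact that every minimal separator of a strictly chordal graph is exactly one critical clique (via the dart/gem forbidden-subgraph analysis), combine it with the general fact that a critical clique is contained in or disjoint from any minimal separator, and conclude that $S_i$ meeting a separator $T$ would force $S_i=T$, i.e.\ $S_i$ would be a minimal separator of $G_i$ after all. Your route is longer but yields a reusable statement (essentially characterization (2) restated at the level of separators), and it makes explicit the containment step ($S_i$ meets $T$ implies $S_i\subseteq T$) that the paper's proof passes over when it jumps straight to the case $S_i\subset S$. Both arguments are sound.
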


\begin{proof}
From \cref{lem:joincomp_sep}, $G$ is chordal and its minimal separators are the ones of $G_1$ and $G_2$ plus possibly $S_1$ or $S_2$ if they are not maximal cliques. If $S_i$ intersects exactly one maximal clique, $i\in\{1,2\}$, it is clear that it does not intersect any minimal separator of $G_i$. If $S_i$ is a critical clique, we claim that either $S_i$ is a minimal separator of $G_i$ or does not intersect any minimal separator of $G_i$. Indeed, if there is a minimal (clique) separator $S$ of $G_i$ such that $S_i \subset S$, then since $S_i$ is a critical clique, there exist $u\in S_i, v\in S\setminus S_i$ and $w\in V(G_i)\setminus S$ such that $w$ is adjacent to exactly one of $u$ and $v$. Suppose w.l.o.g. that $w$ is adjacent to $u$ and not to $v$, then $u$ is in a minimal $vw$-separator that intersects $S$ and is not equal to it. This is a contradiction since $G_i$ is strictly chordal and by \cref{thm:caracBD} its minimal separators are pairwise disjoint. It follows that the minimal separators of $G$ are pairwise disjoint and by \cref{thm:caracBD} $G$ is strictly chordal.  
\end{proof}

In the proofs of some rules for the kernel for \SCE{} and its variants, we will have to do a join composition of some graphs $G_2, \ldots, G_r$ on a graph $G_1$ with the same set of vertices $S_1\subseteq V(G_1)$. The following observation guarantees us that this operation is possible.

\begin{observation} \label{obs:SCmulticomp}
Let $G_1 = (V_1,E_1), \dots, G_r = (V_r,E_r)$ be disjoint strictly chordal graphs and for each $i\in\{1,\dots , r\}$ let $S_i \subseteq V_i$ be a critical clique, a maximal clique or a set that intersects exactly one maximal clique of $G_i$. If $S_1$ is a critical clique or intersects exactly one maximal clique of $G_1$, then $S_1$ is a critical clique or intersects exactly one maximal clique of $(G_1,S_1) \otimes (G_2,S_2)$. This implies by \cref{lem:constructionSC} that $(G_1, S_1) \otimes (\bigcup_{2 \leq i \leq r}G_i, \bigcup_{2 \leq i \leq r}S_i)$ is strictly chordal (see \cref{fig:BG_join_multicomp}). 
\end{observation}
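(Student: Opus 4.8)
The plan is to prove the two claims of the observation in sequence, using Lemma~\ref{lem:constructionSC} as the main engine and then iterating. First I would fix notation: write $G' = (G_1,S_1) \otimes (G_2,S_2)$. Since $S_1$ is assumed to be a critical clique of $G_1$ or to intersect exactly one maximal clique of $G_1$, and $S_2$ satisfies the hypothesis of Lemma~\ref{lem:constructionSC} (critical clique, maximal clique, or intersecting exactly one maximal clique of $G_2$), that lemma applies directly and gives that $G'$ is strictly chordal. So the only real content is to show that $S_1$ \emph{retains} its status --- critical clique or ``intersects exactly one maximal clique'' --- now viewed inside $G'$ rather than inside $G_1$.

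For the ``intersects exactly one maximal clique'' case: by construction of the join, every maximal clique of $G'$ is either a maximal clique of $G_1$, a maximal clique of $G_2$, or the set $S_1 \cup S_2$ (this is the decomposition of maximal cliques used already in the proof of Lemma~\ref{lem:joincomp_sep}). If $S_1$ meets exactly one maximal clique $\Omega$ of $G_1$, then $S_1 \subseteq \Omega$; I would check $S_1$ is disjoint from every maximal clique of $G_2$ (immediate, since $V_1 \cap V_2 = \emptyset$) and that among the maximal cliques $\Omega$ of $G_1$ and $S_1 \cup S_2$, the set $S_1$ is contained in $S_1\cup S_2$, and in exactly one of the $G_1$-maximal-cliques, but we must be careful: $\Omega \cup S_2$ need not be a maximal clique of $G'$ unless $\Omega \subseteq S_1$. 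I would argue that $S_1$ is still contained in only one maximal clique of $G'$: if $S_1$ were a subset of $S_1 \cup S_2$ as a maximal clique and also of some maximal clique coming from $G_1$, those would have to coincide outside $S_2$, forcing $\Omega = S_1$, in which case $S_1 \cup S_2$ is the unique maximal clique containing $S_1$. Either way exactly one maximal clique of $G'$ contains $S_1$. For the ``critical clique'' case, I would show $S_1$ remains a module of $G'$ (every vertex of $S_1$ gains exactly the neighbours $S_2$, so $N_{G'}(x)\setminus S_1 = N_{G'}(y)\setminus S_1$ for $x,y \in S_1$ by the module property in $G_1$), that $G'[S_1]$ is still a clique, and that $S_1$ is still inclusion-maximal with this property: no vertex of $V_2$ can join $S_1$ into a larger clique-module since such a vertex would need the whole of $S_1 \cup S_2 \setminus\{\text{itself}\}$ in its neighbourhood and the same external neighbourhood, which $S_2$-vertices do not have relative to $G_1$-side non-neighbours; and no new $G_1$-vertex can be added because $S_1$ was already maximal in $G_1$ and the added $S_2$-adjacencies are uniform on $S_1$. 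Hence $S_1$ is a critical clique of $G'$.

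Finally, for the last assertion, I would iterate: having shown $(G_1,S_1)\otimes(G_2,S_2)$ is strictly chordal and that $S_1$ still is a critical clique or meets exactly one maximal clique of it, I can apply Lemma~\ref{lem:constructionSC} again with $G_3$ (still valid since $S_3$ satisfies the hypothesis), then $G_4$, and so on, which by a straightforward induction on $r$ shows $(G_1,S_1) \otimes (\bigcup_{2\leq i\leq r} G_i, \bigcup_{2\leq i\leq r} S_i)$ is strictly chordal --- noting that this iterated composition is exactly $(((G_1,S_1)\otimes (G_2,S_2)) , S_1) \otimes (G_3,S_3) \cdots$ since the composition only adds edges between $S_1$ and the respective $S_i$, and the $G_i$ for $i\geq 2$ remain pairwise non-adjacent.

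The main obstacle I anticipate is the bookkeeping in the ``intersects exactly one maximal clique'' case: one has to rule out the possibility that $S_1$ ends up inside two distinct maximal cliques of $G'$ (one inherited from $G_1$, one being $S_1 \cup S_2$), which forces a careful case split on whether the $G_1$-maximal-clique containing $S_1$ equals $S_1$ itself or strictly contains it. The critical-clique maximality argument is also slightly delicate because one must verify that adjoining a vertex from the $G_2$ side genuinely fails the module condition, not just the clique condition.
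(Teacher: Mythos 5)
The paper records this as an observation with no proof at all, so there is no ``paper proof'' to compare against; judging your argument on its own merits, the overall strategy --- apply Lemma~\ref{lem:constructionSC} once, show that $S_1$ keeps one of the two admissible statuses in $G'=(G_1,S_1)\otimes(G_2,S_2)$, then iterate --- is the right one, but both branches of your case analysis try to preserve the \emph{same} disjunct, and that is precisely where they fail. In the ``intersects exactly one maximal clique'' case, your claim that the inherited $G_1$-maximal clique and $S_1\cup S_2$ ``would have to coincide, forcing $\Omega=S_1$'' is unfounded: if $\Omega$ is the unique maximal clique of $G_1$ meeting $S_1$ and $\Omega\supsetneq S_1$ with $S_2\neq\emptyset$, then $\Omega$ and $S_1\cup S_2$ are two \emph{distinct} maximal cliques of $G'$ both containing $S_1$. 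Concretely, take $G_1$ a triangle $abc$ with $S_1=\{a\}$ and $G_2$ a single vertex $d=S_2$: in $G'$ the set $\{a\}$ lies in both $\{a,b,c\}$ and $\{a,d\}$, so ``exactly one maximal clique of $G'$ contains $S_1$'' is simply false here. The observation survives because $S_1$ is then a \emph{critical clique} of $G'$: meeting a single maximal clique of $G_1$ already forces $S_1$ to be a clique module of $G_1$ (a vertex adjacent to some but not all of $S_1$ would place part of $S_1$ in a second maximal clique), hence of $G'$; and it is a maximal clique module of $G'$ because a candidate twin in $V_1\setminus S_1$ would have to see $S_2$, while a candidate twin in $V_2$ would have to see $\Omega\setminus S_1\neq\emptyset$, which vertices of $V_2$ cannot.

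Symmetrically, in the critical-clique case your maximality argument against $V_2$-vertices tacitly assumes $S_1$ has a ``$G_1$-side non-neighbour''. If $S_1$ is a clique connected component of $G_1$ (so $N_{G_1}[u]=S_1$ for $u\in S_1$) and likewise $N_{G_2}[w]=S_2$ for $w\in S_2$, then every vertex of $S_1\cup S_2$ has closed neighbourhood $S_1\cup S_2$ in $G'$, so $S_1$ is \emph{not} a critical clique of $G'$ --- but it then meets exactly one maximal clique of $G'$, namely $S_1\cup S_2$. So the correct statement of the inductive invariant is the disjunction, and the proof must allow $S_1$ to switch from one disjunct to the other when passing from $G_1$ to $G'$; your proof as written does not, and the two specific claims it makes instead are false. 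Once both cases are repaired as above, the induction on $r$ and the identification of the iterated composition with $(G_1,S_1)\otimes(\bigcup_{2\leq i\leq r}G_i,\bigcup_{2\leq i\leq r}S_i)$ go through exactly as you describe.
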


\begin{figure}[ht!]
    \centering
    \includegraphics[scale=1.3]{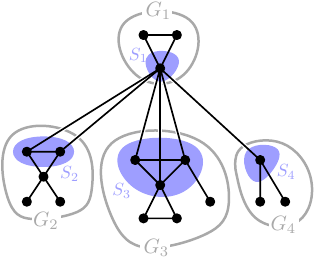}
    \caption{Example of a graph $G= (G_1, S_1) \otimes (\bigcup_{2 \leq i \leq 4}G_i, \bigcup_{2 \leq i \leq 4}S_i)$ as constructed in \cref{obs:SCmulticomp}. There, $S_1$ intersects exactly one maximal clique of $G_1$ and is a critical clique of $G$; $S_2$ intersects exactly one maximal clique of $G_2$; $S_3$ is a maximal clique of $G_3$; $S_4$ is a critical clique of $G_4$.} 
    \label{fig:BG_join_multicomp}
\end{figure}

\newcommand{\sd}[1]{d(#1)}
\paragraph{Spanning block subgraph} The following result is crucial to bound the size of the kernels for all variants of the studied modification problems. For a graph $G$, let $\sd{G}$ be the set of vertices of degree at least $3$ in $G$.

\begin{lemma} \label{lem:bloc-span}
Let $G= (V,E)$ be a connected block graph, $A$ be a subset of $V(G)$ and $T_A$ be a minimal connected induced subgraph of $G$ that spans all vertices of $A$. Denote by $\sd{T_A}$ the set of vertices of degree at least $3$ in $T_A$. The following holds:
\begin{enumerate}[(i)]
    \item\label{lem:bloc-spani} The subgraph $T_A$ is unique,
    \item\label{lem:bloc-spanii} $|\sd{T_A}| \leq 3 \cdot |A|$, 
    \item\label{lem:bloc-spaniii} The graph $T_A \setminus (A\cup \sd{T_A})$ contains at most $2\cdot |A|$ connected components.
\end{enumerate}
\end{lemma}

\begin{proof} 
We first prove the uniqueness of $T_A$. Observe that the internal vertices of a shortest path $P$ in $G$ between two vertices $u$ and $v$ are cut-vertices of $G$, since it is a block graph. Moreover, for any such internal vertex $w$ of $P$, vertices $u$ and $v$ are in different components of $G-w$. 
Therefore, $P$ is unique, and any path between $u$ and $v$ must contain every vertex of $P$. Hence, the vertices of $T_A$ are either in $A$ or separating vertices of $A$, and the uniqueness of $T_A$ follows, proving \cref{lem:bloc-spani}.

To prove \cref{lem:bloc-spanii} we will proceed by induction on the size of $A$. 
For $|A| \leq 2$ we have that $|\sd{T_A}|=0$. Suppose now that $|A| \geq 3$.
Let  $a\in A$ be a non-cut-vertex of $T_A$. Such a vertex exists, e.g., it is sufficient to choose a longest path of $T_A$, and let $a$ be one of the endpoints. Then $a$ is not a cut vertex of $T_A$ (as the end of a longest path), and $a \in A$ by minimality of $T_A$. Notice moreover that $a$ is in only one maximal clique of $T_A$. 
Let $A' =  A\setminus\{a\}$ and let $T_{A'}$ be a minimal connected induced subgraph of $G$ that spans all vertices of $A'$. 
Observe that $T_{A'}$ is an induced subgraph of $T_A$. 
We show that $|\sd{T_A}| \leq |\sd{T_{A'}}|+3$. 
Since $T_A$ is connected, there exists a path in $T_A$ from $a$ to a vertex $u \in V(T_A)\setminus V(T_{A'})$ adjacent to some vertices of $T_{A'}$ ($a=u$ is possible), moreover $u$ is unique. 
Any vertex on the path except for $u$ is of degree at most $2$ in $T_A$. 
Let $N= N_G(u) \cap V(T_{A'})$ and note that since $T_{A'}$ is a connected block graph, $N$ is a clique. 
By construction, the vertices of $T_{A'}$ have the same degree in $T_A$ and $T_{A'}$, except for the ones in $N$. 
Since $u$ might be of degree $3$ or more, if $|N| \leq 2$ then $|\sd{T_A}| \leq |\sd{T_{A'}}|+3$. 
If $|N| \geq 4$, then all vertices in $N$ already have degree $3$ in $T_{A'}$, hence $|\sd{T_A}| = |\sd{T_{A'}}|+1$. 
If $|N|=3$ and $N$ contains a vertex of degree $3$ in $T_{A'}$ then $|\sd{T_A}| \leq |\sd{T_{A'}}|+3$. 
Otherwise, we claim that $V(T_{A'}) = N$. Indeed, if all vertices in $N$ have degree $2$ in $T_{A'}$, 
the only possibility is that $N$ is exactly the vertex set of $T_{A'}$, because $T_{A'}$ is connected, $N$ induces a clique of size $3$ in this graph and no vertex of $N$ has other neighbors than the ones in $N$.
In this case we have $|A| = 4$ and $|\sd{T_A}| = 4$, which satisfies \cref{lem:bloc-spanii}.
Therefore, except for the latter case, we have $|\sd{T_A}| \leq |\sd{T_{A'}}|+3$, and using the induction hypothesis we have $|\sd{T_A}| \leq |\sd{T_{A'}}|+3 \leq 3 \cdot |A'| + 3 = 3 \cdot |A|$.

We are left with \cref{lem:bloc-spaniii}, which can be proven in a similar way as the previous one. Using the same notation, 
 the only vertices of $T_{A'}$ whose degree may increase in $T_A$ are in $N$. Therefore, every connected component of $T_{A'} \setminus (A'\cup \sd{T_{A'}})$ that does not contain  vertices of $N$ is also a connected component of $T_A \setminus (A\cup \sd{T_A})$. Moreover, a connected component of $T_{A'} \setminus (A'\cup \sd{T_{A'}})$ may contain vertices of $N$ only if these vertices have degree at most $2$ in $T_{A'}$ and are not in $A$. In particular $|N| \leq 2$
 and both vertices lie in the same connected component of $T_{A'} \setminus (A'\cup \sd{T_{A'}})$ (if any).
Therefore, this connected component may be split in two disjoint ones in $T_A \setminus (A\cup \sd{T_A})$ if the vertices of $N$ are removed. Moreover, an additional connected component corresponding to the path between $a$ and $u$ may exist in $T_A$. Hence, we can only increase the number of connected components of $T_A \setminus (A\cup \sd{T_A})$ by  compared to $T_{A'} \setminus (A'\cup \sd{T_{A'}})$. Since the property is verified for $|A|=1$, it follows by induction that there are $2\cdot|A|$ connected components in $T_A \setminus (A\cup \sd{T_A})$. 
\end{proof}

\subsection{Hardness results}

We first show the NP-completeness of \BGD{} and \BGE{} by giving a reduction from \CD{} and \CE{}, known to be NP-complete~\cite{KM86,SST04,Natanzon99}. Notice that \BGC{} admits 
a polynomial-time algorithm due to the very nature of block graphs: 
any biconnected component  of the input graph (computable in linear time~\cite{HT73}) must be turned into a clique. 

\begin{lemma}
\BGD{} and \BGE{} are NP-Complete.
\end{lemma}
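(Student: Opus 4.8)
The statement asks to show NP-completeness of \BGD{} and \BGE{} by reduction from \CD{} and \CE{}. Membership in NP is immediate: given a candidate set $F$ with $|F| \leq k$, one verifies in polynomial time that $G \triangle F$ is chordal and diamond-free (equivalently, that every biconnected component is a clique). For hardness, the plan is to take an instance $(G,k)$ of \CD{} (resp.\ \CE{}) — where one seeks to delete (resp.\ edit) at most $k$ edges to turn $G$ into a disjoint union of cliques (a cluster graph) — and build an instance $(G',k)$ of \BGD{} (resp.\ \BGE{}) with the same budget. The key idea is that cluster graphs are exactly the block graphs with no cut-vertices, so one needs a gadget that ``penalizes'' any solution that would merge two would-be clusters of $G$ into a single biconnected component through a cut-vertex path, forcing the optimal block-graph modification to coincide with an optimal cluster modification.

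The construction I would use: attach to every vertex $v \in V(G)$ a large pendant clique (a set of $k+1$ fresh vertices forming a clique together with $v$, or simply $k+1$ pendant true twins of $v$, i.e.\ vertices adjacent only to $v$). Call $G'$ the resulting graph. First I would argue that these pendant gadgets are never worth editing: since each has $k+1$ vertices, touching it to ``fix'' something would cost more than $k$, and by a twin/module argument (in the spirit of Lemma~\ref{lem:homogen-compl-here}, restricted here to the relevant classes) one may assume an optimal solution leaves the edges inside and incident to each pendant clique untouched, and treats the $k+1$ pendant vertices uniformly. Consequently, in $G' \triangle F$ each original vertex $v$ remains a cut-vertex separating its pendant clique from the rest, so no biconnected component of $G' \triangle F$ may contain two original vertices $u \neq v$ unless $uv$ lies in that component together with common structure — more precisely, the restriction of $G' \triangle F$ to $V(G)$ must itself be a block graph in which every biconnected component is a clique and, crucially, two original vertices lie in the same biconnected component of $G'\triangle F$ only if they are already in a common block of $G[V(G)] \triangle (F \cap (V(G)\times V(G)))$. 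The pendant cliques force the induced subgraph on $V(G)$ to be a disjoint union of cliques: any induced path $u - w - v$ among original vertices together with the pendant cliques at $u,w,v$ would create a structure that is not a block graph (a cut-vertex $w$ whose removal does not disconnect, contradiction) — I would phrase this as: $w$ together with $u$, $v$ and one pendant neighbor each forms a forbidden configuration unless $uv \in E(G'\triangle F)$, so every connected component of $(G'\triangle F)[V(G)]$ is a clique.

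From this I would conclude the two directions of the equivalence. If $F$ is a $k$-deletion (resp.\ $k$-editing) of $G$ into a cluster graph, then the same $F$, viewed inside $G'$, turns $G'$ into a disjoint union of ``stars of cliques'' — each cluster with its pendant cliques dangling off — which is diamond-free and chordal, hence a block graph; so $(G',k)$ is a yes-instance. Conversely, given a $k$-solution $F'$ for $(G',k)$, the argument above lets us assume $F'$ touches no pendant edge, so $F' \subseteq V(G)\times V(G)$, and $(G'\triangle F')[V(G)] = G \triangle F'$ is a cluster graph with $|F'|\le k$; thus $(G,k)$ is a yes-instance of \CD{} (resp.\ \CE{}). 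The same construction and argument handle both problems simultaneously since the deletion case is just the editing case restricted to $F \subseteq E$, and cluster graphs and block graphs behave compatibly under this restriction.

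The main obstacle I expect is the ``assume the pendant gadgets are untouched'' step: one must argue carefully that an optimal (or merely $\le k$) solution can be taken to leave the pendant cliques intact and acting as indivisible modules. With pendant cliques of size $k+1$ this is a counting argument — you cannot afford to delete all $k+1$ edges to a pendant vertex, and a twin-exchange argument shows you may as well delete none of them; deleting a partial set of edges inside or incident to a pendant clique can always be ``undone'' without breaking the block-graph property, since the pendant clique plus $v$ is itself a block. Getting this exchange argument clean — especially in the editing version, where one must also rule out adding edges between two different pendant cliques — is where the real work lies; everything else is the structural observation that cluster graphs are precisely the block graphs obtainable while keeping every original vertex a cut-vertex.
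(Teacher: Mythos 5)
Your reduction is broken at its core: pendant cliques do not penalize induced $P_3$'s among the original vertices, so the backward direction fails. Concretely, take $G$ to be the path $u\,\text{--}\,w\,\text{--}\,v$ on three vertices and $k=0$. This is a no-instance of \CD{} (a $P_3$ must lose an edge to become a cluster graph), but your $G'$ --- the path with a clique of $k+1$ fresh vertices hanging off each of $u,w,v$ --- is already a block graph: its biconnected components are the three pendant cliques (each together with its attachment vertex) and the two edges $uw$, $wv$, all of which are cliques. So $(G',0)$ is a yes-instance of \BGD{}, and the reduction does not preserve the answer. The specific step that fails is your claim that ``$w$ together with $u$, $v$ and one pendant neighbor each forms a forbidden configuration unless $uv\in E$'': that configuration is a tree of cliques, hence a perfectly valid block graph, and removing the cut-vertex $w$ does disconnect the graph, so there is no contradiction. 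Block graphs freely admit induced $P_3$'s (any tree is a block graph), so no gadget that merely keeps the original vertices as cut-vertices can force $(G'\triangle F')[V(G)]$ to be a cluster graph.

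The paper's construction is different in exactly the way needed: it adds a single \emph{universal} vertex $u$ adjacent to all of $V(G)$. Then any induced $P_3$ $\{x,y,z\}$ surviving in $V(G)$ together with $u$ forms a diamond, which is forbidden in block graphs; this is what transfers the ``no $P_3$'' requirement of cluster graphs into the ``no diamond'' requirement of block graphs. The technical work in the paper then goes into the exchange argument showing that a $k$-deletion (or $k$-editing) of $G'$ can be assumed not to delete any edge incident to $u$ (using the join-composition Lemma~\ref{lem:constructionBG} to reroute deletions away from $u$ at no extra cost), after which the diamond argument applies. Your proposal spends its effort on the analogous ``gadgets are untouched'' step, which is salvageable, but the construction it protects does not encode the source problem, so the proof cannot be repaired without replacing the gadget by something that turns a $P_3$ into a block-graph obstruction.
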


\begin{proof}
We first prove that \BGD{} is NP-Complete by providing a reduction from \CD. A graph is a cluster graph if it does not contain any induced path on three vertices (so-called $P_3$). Given an instance $(G=(V,E),k)$ of \CD{}, we construct an instance of \BGD{} by adding a universal vertex $u$ adjacent to all vertices of $V$. Let $(G'=(V',E'),k)$ be the produced instance. 

We show that the graph $G$ admits a $k$-deletion into a cluster graph if and only if $G'$ admits a $k$-deletion into a block graph. Suppose first that there is a $k$-deletion $F$ of $G$ into a cluster graph. The graph $G - F$ is a graph without any $P_3$ as induced subgraph. Now consider the graph $H' = G' - F$. By construction $H'[V]$ contains no $P_3$, so $H'$ is chordal and contains no diamond, thus $H'$ is a block graph. 

Now suppose that there exists a $k$-deletion $F'$ of $G'$ into a block graph. 
We claim that either $F'$ does not affect the universal vertex $u$ or we can construct a $k$-deletion $F^*$ from $F'$ that does not affect $u$. 
To support this claim, assume that $F'$ contains at least one edge incident to $u$.

First, suppose that $H' = G' - F'$ is not connected. Let $C$ be the connected component that contains $u$. For any other connected component $C'$ of $H'$, take a vertex $v\in V(C')$ and construct the deletion set $F'' = F'\setminus \{(u,v)\}$. By \cref{lem:constructionBG}, $(C, \{u\}) \otimes (C',\{v\})$ is a block graph, and the other connected components of $H'$ are unchanged. Hence $G' - F''$ is a block graph and since $|F''| < |F'|$, $F''$ is a $k$-deletion. We iterate this construction until we get an intermediate solution $F^*_1$ such that $H^*_1 = G' - F^*_1$ is connected. 

If $F^*_1$ does not affect $u$ we set $F^*=F^*_1$ and we are done. Otherwise there is at least one vertex $v$ not adjacent to $u$ in $H^*_1$. Since $H^*_1$ is connected we can choose $v$ at distance exactly $2$ from $u$. Since $H^*_1$ is a block graph we have $N(u) \cap N(v)=\{w\}$. 
Consider the biconnected component $C$ containing $v$ and $w$, 
and let $C'=C\setminus \{w\}$. Since $H^*_1$ is a block graph, $C$ (and thus $C'$) is a clique. Let $K_u$ and $K_v$ be the connected components of $H^*_1 - (\{w\} \times C')$ that contain the vertices $u$ and $v$ respectively, they are block graphs by heredity. By \cref{lem:constructionBG} the graph $H^*_2 = (K_u, \{u\}) \otimes (K_v,\{C'\})$ is a block graph. Observe that $(\{u\} \times C')\subseteq F^*_1$ since $u$ is an universal vertex and let $F^*_2$ be the deletion set such that $H^*_2 = G' - F^*_2$. By construction we can observe that $|F^*_2| = |F^*_1|$, hence $F^*_2$ is a $k$-deletion of $G'$ into a block graph. We iterate this construction until we get a $k$-deletion $F^*$ of $G$ that does not affect $u$. 

Finally, $H = G - F^*$ does not contains any $P_ 3$ or else there would be a diamond in $G' - F^*$ with the universal vertex $u$, thus $H$ is a cluster graph.

The same reduction can be done for \BGE{}, one needs only to observe that there are no added edges incident to $u$, and we can use the same argument as above on the deleted edges: if there is no deleted edge incident to $u$ we are done, otherwise we construct another set of deleted edges of equal size, but containing fewer edges incident to $u$. 
If in the construction we have to delete an edge that was added by the edition, it is clear that removing this edge from the edition set result in a strictly smaller edition set.
\end{proof}

The NP-completeness of \SCC{} follows directly from the proof of NP-completeness of \LPC{} from~\cite[Theorem 3]{DGH+06}.

\begin{lemma}
\SCC{} is NP-complete.
\end{lemma}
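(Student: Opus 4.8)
The plan is to recycle, almost verbatim, the reduction of Dom, Guo, H\"uffner and Niedermeier~\cite{DGH+06}. Their Theorem~3 gives a polynomial-time reduction from some NP-hard problem $\Pi$ producing, for every instance $I$, an instance $(G_I,k_I)$ of \LPC{} such that $I$ is a yes-instance if and only if $G_I$ admits a completion of size at most $k_I$ into a $3$-leaf power. I would first recall this construction and, importantly, isolate what is actually established about the ``no'' case: which induced subgraphs of $G_I$ force edges into any completion, and what obstruction remains when $I$ is a no-instance. Two ingredients then turn this into a reduction to \SCC: the inclusion of the relevant graph classes, and an argument that on the instances $G_I$ the two completion notions coincide.

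For the forward direction, recall that $3$-leaf powers are exactly the (bull, dart, gem)-free chordal graphs, whereas strictly chordal graphs are the (dart, gem)-free chordal graphs by Theorem~\ref{thm:caracBD}; in particular $3$-leaf powers form a subclass of strictly chordal graphs (as already noted in the introduction, \cf~\cite{DGH+06,KLY06}), so
\[ \{\text{$3$-leaf powers}\} \subseteq \{\text{strictly chordal graphs}\} \subseteq \{\text{chordal graphs}\}. \]
Hence any $k_I$-completion of $G_I$ into a $3$-leaf power is in particular a $k_I$-completion into a strictly chordal graph, so if $I$ is a yes-instance then $(G_I,k_I)$ is a yes-instance of \SCC. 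Membership of \SCC{} in NP is immediate, since a completion is a polynomial-size certificate that can be checked in polynomial time (recognition of strictly chordal graphs being polynomial, \eg via Theorem~\ref{thm:caracBD}).

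The converse --- if $G_I$ admits a $k_I$-completion into a strictly chordal graph $H=G_I+F$, then $I$ is a yes-instance --- is where the real work lies, and I expect it to be the main obstacle; checking which of two routes applies is precisely the task. \emph{First route:} if the analysis in~\cite{DGH+06} of the ``no'' case already shows that $G_I$ has \emph{no} chordal completion of size at most $k_I$ (the forced edges leaving a hole whenever $I$ is a no-instance), then by the displayed inclusion an $H$ that is strictly chordal with $|F|\leq k_I$ forces $I$ to be a yes-instance, and we are done. \emph{Second route:} if the ``no'' case only rules out $3$-leaf-power completions, I would inspect the gadget to show that the edges any completion of size $\leq k_I$ is forced to add already pin $H$ down up to the same case analysis as in~\cite{DGH+06}, and that a surviving or newly created induced bull in $H$ would either exceed the budget $k_I$ or create an induced dart, gem or hole --- contradicting that $H$ is strictly chordal. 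In either route $H$ is in fact a $3$-leaf power (bull-freeness being the only property separating the two classes among chordal graphs), so the implication of~\cite{DGH+06} applies and yields that $I$ is a yes-instance, completing the reduction and hence the proof.
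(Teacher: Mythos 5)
Your proposal takes essentially the same approach as the paper: the paper's entire proof of this lemma is the one-line assertion that NP-completeness ``follows directly from the proof of NP-completeness of \LPC{} from~\cite[Theorem 3]{DGH+06}'', i.e.\ exactly the recycling of that reduction you describe. You are in fact more explicit than the paper about where the real verification burden lies (the converse direction, via the inclusion of $3$-leaf powers in strictly chordal graphs), which the paper leaves entirely implicit.
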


We show the NP-completeness of \SCE{} and \SCD{} by giving a reduction from \CE{} and \CD{}, respectively. 

\begin{lemma}
\SCE{} and \SCD{} are NP-complete.
\end{lemma}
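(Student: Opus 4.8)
The plan is to reduce \CE{} to \SCE{} and \CD{} to \SCD{} (both source problems being NP-complete~\cite{KM86,SST04,Natanzon99}), using one and the same gadget. Given an instance $(G=(V,E),k)$, construct $G'$ from $G$ by adding two new disjoint sets of vertices $U=\{u_1,\dots,u_{k+1}\}$ and $\mathcal{E}=\{e_1,\dots,e_{k+1}\}$, turning $U$ into a clique, making every $u_i$ adjacent to every vertex of $V$ and to every $e_j$, and adding no other edge (so $\mathcal{E}$ is an independent set with no neighbour in $V$); the produced instance is $(G',k)$, clearly computable in polynomial time. The intuition is that $U$ behaves like a robust universal clique and $\mathcal{E}$ like a robust set of private pendants hanging on $U$: neither can be substantially altered within budget $k$. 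The key structural observation is that an induced $P_3$ on $\{b,c,d\}$ of $G$ with center $b$, together with any $u_i$ and any $e_j$, induces a \emph{dart} in $G'$: $u_i$ is adjacent to $b,c,d,e_j$, the vertex $b$ is adjacent to $c,d$ (and $u_i$), the vertices $c$ and $d$ are adjacent only to $b$ and $u_i$, and $e_j$ is adjacent only to $u_i$ (compare Figure~\ref{fig:obs}).

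\emph{Forward direction.} If $F$ is a $k$-edition (resp.\ $k$-deletion) of $G$ into a cluster graph $C=G\triangle F$, then, since $F$ involves only pairs with both endpoints in $V$, the graph $G'\triangle F$ (resp.\ $G'-F$) is exactly $C$ together with the clique $U$ completely joined to $V$ and to $\mathcal{E}$, with $\mathcal{E}$ independent. Its critical cliques are $U$, the nonempty maximal cliques of $C$, and the singletons $\{e_j\}$, and its critical clique graph is a star centered at $U$; this is a block graph, so $G'\triangle F$ (resp.\ $G'-F$) is strictly chordal by Theorem~\ref{thm:caracBD}. Hence $(G',k)$ is a yes-instance of \SCE{} (resp.\ \SCD).

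\emph{Backward direction.} Let $F'$ be a $k$-edition (resp.\ $k$-deletion) of $G'$ into a strictly chordal graph $H'=G'\triangle F'$, and let $F''$ be the set of pairs of $F'$ with both endpoints in $V$; then $|F''|\le k$ and the restriction of $H'$ to $V$ equals $G\triangle F''$ (resp.\ $G-F''$). Assume for contradiction that this restriction contains an induced $P_3$, say $c-b-d$. Call a pair $(i,j)\in\{1,\dots,k+1\}^2$ \emph{spoiled} if $F'$ contains one of the pairs $u_ib,u_ic,u_id,u_ie_j$, or (in the editing case only) one of $be_j,ce_j,de_j$; if $(i,j)$ is not spoiled then $\{b,c,d,u_i,e_j\}$ induces a dart in $H'$, contradicting Theorem~\ref{thm:caracBD}. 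But each modification in $F'$ spoils at most $k+1$ pairs (all pairs sharing a fixed first index, or all pairs sharing a fixed second index, or a single pair), so at most $k(k+1)<(k+1)^2$ pairs are spoiled and an unspoiled one exists. Therefore $G\triangle F''$ (resp.\ $G-F''$) is $P_3$-free, i.e.\ a cluster graph, and $(G,k)$ is a yes-instance of \CE{} (resp.\ \CD).

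\emph{Main obstacle.} Membership in NP is immediate (strict chordality, being chordality plus the absence of the dart, gem and hole, is checkable in polynomial time), so the above yields NP-completeness. The delicate point is the robustness of the gadget. A single universal vertex plus a single pendant does not suffice: an optimal solution could spend a couple of edge modifications to destroy the unique dart witnessing a surviving $P_3$, breaking the equivalence — this is exactly why the \BGE{}/\BGD{} proof needs the ad hoc argument that the universal vertex may be assumed unaffected, via join compositions. Taking $k+1$ copies of each gadget part, so that no budget-$k$ modification can touch all of them simultaneously, replaces that normalization by the simple counting argument above. A secondary point, handled by the critical-clique-graph computation in the forward direction, is to verify that the added clique $U$ and independent set $\mathcal{E}$ do not themselves introduce a forbidden dart, gem or hole.
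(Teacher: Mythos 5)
Your proof is correct and follows essentially the same route as the paper: a reduction from \CE{} (resp.\ \CD{}) that adds a universal clique $U$ of size $k+1$ together with pendant vertices so that any surviving $P_3$ yields an induced dart, contradicting Theorem~\ref{thm:caracBD}. The only difference is cosmetic — the paper hangs $k+1$ pendants on each vertex of $V$ (so the dart's pendant sits on the centre of the $P_3$) whereas you hang $k+1$ pendants on $U$ itself (so it sits on $u_i$) — and your explicit count of spoiled pairs $(i,j)$ makes precise the paper's terse ``there exist $i,j$'' step.
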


\begin{proof}
Given an instance $(G=(V,E),k)$ of \CE{}, we construct an instance of \SCE{} by adding a clique $U = \{u_1, \dots , u_{k+1}\}$ of size $k+1$ adjacent to all vertices of $V$, and for each vertex $x$ in $V$, $k+1$ vertices $\{v^x_1,\dots ,v^x_{k+1}\}$ adjacent only to $x$. Let $(G'=(V',E'),k)$ be the produced instance. 
We show that the graph $G$ admits a $k$-edition into a cluster graph if and only if $G'$ admits a $k$-edition into a strictly chordal graph. Suppose first that there is a $k$-edition $F$ of $G$ into a cluster graph. The graph $G \triangle F$ is a graph without any $P_3$ as induced subgraph. Consider the graph $H' = G' \triangle F$. By construction $H'[V]$ contains no $P_3$, so $H'$ is chordal and contains neither gems nor darts since these obstructions contain an induced $P_3$ and a vertex adjacent to every vertex of this $P_3$. By \cref{thm:caracBD}, it follows that $H'$ is strictly chordal. Now suppose that there exists a $k$-edition $F'$ of $G'$ into a strictly chordal graph $H' = G' \triangle F'$. 
We claim that $H = G \triangle F'$ is a cluster graph. By contradiction, suppose that $H = H'[V]$ contains a $P_3$ $\{x,y,z\}$ where $x,z$ are the ends of the path. Then, there exist $i,j \in \{1, \dots , k+1\}$ such that $\{x,y,z,u_i,v^y_j\}$ forms a dart in $H'$, contradicting that $H'$ is strictly chordal, and thus $H$ is a cluster graph.

The same reduction can be done from \CD{} to \SCD{}. This concludes the proof. 
\end{proof}

\section{Kernelization algorithm for {\sc \BGE{}}}
\label{sec:block}

We begin this section by providing a high-level description of our kernelization algorithm. A similar technique will be applied for modification problems towards 
strictly chordal graphs but on the critical clique graph rather than the original one. Let us consider a positive instance $(G = (V,E),k)$ of \BGE{}, $F$ a suitable solution and $H = G \triangle F$. Since $|F| \leq k$, we know that at most $2k$ vertices of $H$ may be affected vertices. 
Let $A$ be the set of such vertices, $T$ the minimum induced subgraph of $H$ that spans all vertices of $A$ and  $A'$ the set of vertices of degree at least $3$ in $T$. 
From \cref{lem:bloc-span} we have $|A'\setminus A| \leq 3\cdot |A|$. We will define the notion of BG-branch, corresponding to subgraphs of $G$ that induce block graphs. We will focus our interest on two types of BG-branches: the ones that are connected to the rest of the graph by only one (cut) vertex, called \emph{$1$-BG-branches}, and the ones that are connected to the rest of the graph by exactly two non-adjacent vertices, called \emph{$2$-BG-branches}. We will prove that we can keep only the cut-vertices of $1$-BG-branches and reduce the $2$-BG-branches containing more than $k+3$ vertices. 
We will see that there are two kinds of connected components remaining in the graph $H \setminus (A\cup A')$, the ones adjacent to maximal cliques of vertices of $A\cup A'$, corresponding to critical cliques (which can be bounded to a linear number of vertices) and the ones adjacent to two non-adjacent vertices of $A\cup A'$ (which correspond to $2$-BG-branches in $H$). Since $H$ is a block graph, we know that there is at most $|A\cup A'| = O(k)$ maximal cliques of vertices of $A\cup A'$. From \cref{lem:bloc-span} we have that there are at most $4k$ connected components in $T\setminus (A \cup A')$, and each one corresponds to a $2$-BG-branch in the graph $H$. It remains that $H \setminus (A\cup A')$ contains a linear number of connected components, each one containing $O(k)$ vertices. Altogether, the graph $H$ contains $O(k^2)$ vertices.

\subsection{Classical reduction rules}
We first give classical reduction rules when dealing with modification problems. Notice that \cref{lem:homogen-compl-here} cannot be 
directly applied to \BGE{} and \BGD{} since block graphs are not closed under true twin addition. Hence one cannot directly reduce the size of critical cliques to $k+1$ as mentioned \cref{sec:prelim}. To circumvent this issue, we provide a slightly weaker reduction rule. 

\begin{polyrule} \label{rule:compBG}
    Let $C$ be a \bg{} connected component of $G$. 
    Remove $V(C)$ from $G$.
\end{polyrule}

\begin{polyrule}\label{rule:borneCC-BG}
Let $K \subseteq V$ be a set of true twins of $G$ such that $|K| > 2k+2$. Remove $|K|-(2k+2)$ arbitrary vertices from $K$ in $G$. 
\end{polyrule}

\begin{lemma}\label{lem:safecc-BG}
\cref{rule:compBG,rule:borneCC-BG} are safe and can be computed in linear time. 
\end{lemma}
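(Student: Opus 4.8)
The plan is to prove safeness of each rule via its two implications, and then to note the linear-time bound.

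For Rule~\ref{rule:compBG} I would argue as follows. The forward implication is pure heredity: if $F$ is a $k$-edition of $G$, then restricting $F$ to the pairs inside $V \setminus V(C)$ gives a $k$-edition of $G \setminus V(C)$, because $(G \triangle F)[V \setminus V(C)]$ is an induced subgraph of a block graph. For the converse, I would use that $C$ is a connected component of $G$ already inducing a block graph and disjoint, in $G$, from the rest of the graph; hence for any $k$-edition $F'$ of $G \setminus V(C)$ (whose pairs lie outside $V(C)$), the graph $G \triangle F'$ is the disjoint union of the block graph $(G \setminus V(C)) \triangle F'$ and the block graph $C$, so it is a block graph, and $F'$ is a $k$-edition of $G$. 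For the complexity, one computes the connected components and tests each for being a block graph (e.g.\ via biconnected components, each checked to be a clique) in linear time.

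The substantial part is Rule~\ref{rule:borneCC-BG}. I would write $K = K' \cup R$ with $K'$ the $2k+2$ retained vertices and $R$ the removed ones, set $G' := G - R$, and let $Z := N_G(K)$, so that every vertex of $K$ has closed neighborhood $K \cup Z$ in $G$. The forward implication is again immediate by heredity. For the converse, starting from a $k$-edition $F'$ of $G'$ with $H' := G' \triangle F'$, I would note that since $|F'| \le k$ at most $2k$ vertices are affected, so at least two vertices $a, b \in K'$ are unaffected; being unaffected, $a$ and $b$ keep in $H'$ the closed neighborhoods they had in $G'$, which coincide, so $a, b$ are true twins in $H'$. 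A short diamond-freeness argument on $\{a, b, w, w'\}$, for $w, w' \in N_{H'}(a)$, then shows $N_{H'}[a] = N_{H'}[b]$ is a clique, hence a maximal clique of $H'$, and moreover $N_{H'}[a] = N_{G'}[a] = K' \cup Z$. Finally I would reinsert $R$ into this maximal clique by a join composition, $H := (H', N_{H'}[a]) \otimes (K_{|R|}, R)$, which is a block graph by Lemma~\ref{lem:constructionBG}; since $N_H[r] = N_{H'}[a] \cup R = K' \cup Z \cup R = K \cup Z = N_G[r]$ for each $r \in R$, the graphs $H$ and $G$ agree on all pairs meeting $R$, so $H = G \triangle F'$ and $F'$ (of size $\le k$) is a $k$-edition of $G$. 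The critical-clique partition of $G$, hence all maximal true-twin sets, is computable in linear time, giving the complexity bound; and the very same construction works for \BGD{}, since $F' \subseteq E(G') \subseteq E(G)$ remains a deletion.

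The hard part will be the claim that two unaffected twins suffice together with the fact that $N_{H'}[a]$ is a (maximal) clique of $H'$: this is exactly where the threshold $2k+2$ (rather than $2k+1$) is needed, and where the failure of block graphs to be closed under true-twin addition matters — with only one unaffected twin, $N_{H'}[a]$ need not be a clique, and reinserting $R$ could then create a diamond. Making this sublemma precise (including the maximality of $N_{H'}[a]$, which is what lets us invoke Lemma~\ref{lem:constructionBG}) is the technical core; the rest is routine bookkeeping.
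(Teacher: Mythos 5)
Your proof is correct, and for the substantive half (Rule~\ref{rule:borneCC-BG}) it takes a genuinely different route from the paper. The paper argues by contradiction with an obstruction-substitution: if $H=G\triangle F'$ contained a diamond or a hole $W$, then $W$ meets the reinserted twins in at most two vertices (a clique module of size $3$ fits in neither obstruction), and replacing those by the two unaffected retained twins produces the same obstruction inside $H'$, a contradiction. You instead argue constructively: the two unaffected twins $a,b$ force $N_{H'}[a]$ to be a clique (via the diamond on $\{a,b,w,w'\}$), hence a maximal clique of $H'$, and you reattach the removed vertices by the join composition of Lemma~\ref{lem:constructionBG}, then verify that the resulting graph literally equals $G\triangle F'$. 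Both arguments hinge on the same pigeonhole (the $2k+2$ threshold guaranteeing two unaffected twins), and your diamond on $\{a,b,w,w'\}$ is essentially the paper's substitution step run in reverse. What your version buys is that it isolates exactly why one unaffected twin would not suffice, reuses the paper's own composition machinery (making this rule stylistically uniform with Rules~\ref{rule:1BGbranch} and~\ref{rule:2BGbranch}), and transfers verbatim to \BGD{}; what the paper's version buys is brevity and the fact that it generalizes mechanically to any target class defined by holes plus finitely many obstructions with no large clique module, without needing the composition lemma. The treatment of Rule~\ref{rule:compBG} and of the running time is the same in both.
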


\begin{proof}
\cref{rule:compBG} is safe since block graphs are hereditary and closed under disjoint union.

We now turn our attention to \cref{rule:borneCC-BG}. Let $G'$ denote the 
graph obtained from the removal of $|K|-(2k+2)$ arbitrary vertices from $K$ 
in $G$ and $K' = V(G) \setminus V(G')$. 
Assume first that $G$ admits a $k$-edition $F$. Observe that 
$F$ is a $k$-edition of $G'$ by heredity. Conversely, assume that $G'$ admits 
a $k$-edition $F$ and let $H' = G' \triangle F$. 
Notice that since $F$ affects at most $2k$ vertices and since 
$|K \setminus K'| = 2k+2$, there 
exist two unaffected vertices $u$ and $v$ in $K \setminus K'$. 
We claim that $H = G \triangle F'$ is a block graph. Assume for a contradiction that $H$ contains an obstruction $W$. 
By construction, $W$ must intersect $K'$ since otherwise $W$ would exist in $H'$. Moreover, notice that 
$|W \cap K'| \leq 2$, otherwise $W$ would contain a clique module 
with $3$ or $4$ vertices, which is impossible in a cycle or 
a diamond. Hence assume first that $|W \cap K'| = \{a\}$: in this case, since $u$ is unaffected we have $N_{H}[a]\setminus K' = N_{H'}[u]$ and thus the 
set $(W \setminus \{a\}) \cup \{u\}$ induces an obstruction in $H'$, a contradiction. Similarly, if $|W \cap K'| = \{a,b\}$ the set 
$(W \setminus \{a,b\}) \cup \{u,v\}$ induces an obstruction in $H'$, leading 
once again to a contradiction. 
Finally, since the biconnected components can be computed in $O(n+m)$ time~\cite{HT73}, since block graphs can be recognized in $O(n+m)$ time and since true twins can be detected in $O(n+m)$ time (using modular decomposition~\cite{PDS09}), both rules can be computed in linear time. 
\end{proof}

\subsection{Reducing Block Graph branches}

Let $G=(V,E)$ be a graph, a \emph{BG-branch} is a connected induced subgraph $B$ of $G$ such that $B$ is a block graph. A vertex $v\in V(B)$ is an \emph{attachment point} of $B$ if $N_G(v)\setminus V(B) \neq \emptyset$. A BG-branch is a $p$-BG-branch if it has exactly $p$ attachment points. 
We denote $B^R$ the induced subgraph of $B$ in which all attachment points have been removed.
We first define a simple rule that allows to remove all the vertices of a $1$-BG-branch except for its attachment point.

\begin{polyrule} \label{rule:1BGbranch} 
    Let $B$ be a $1$-BG-branch of $G$ such that $B^R$ is connected and not empty. 
    Remove $V(B^R)$ from $G$.
\end{polyrule}

\begin{lemma}
\cref{rule:1BGbranch} is safe. 
\end{lemma}

\begin{proof}
Let $B$ be such a $1$-BG-branch and $G'=G\setminus V(B^R)$.
Let $F$ be a $k$-edition of $G$ into a block graph.
Since the class of block graphs is hereditary, $(G\triangle F) \setminus V(B^R)$ is a block graph. It follows that $G'\triangle F$ is a block graph and $F$ is a $k$-edition of $G'$ into a block graph.

Conversely, let $F'$ be a $k$-edition of $G'$ into a block graph. Let $u$ be the attachment point of $B$ in $G$. Observe that since $B^R$ is connected and not empty, $u$ is not a cut-vertex of the block graph $B$ and its neighborhood in $B$ is either a maximal clique or a single vertex. 
Since $H' = G' \triangle F'$ and $B$ are block graphs and $N_B(u)$ is a single vertex or a maximal clique of $B$, Lemma \ref{lem:constructionBG} ensures that $H = (H', \{u\})\otimes(B^R, N_B(u))$ is a block graph. We can observe that $H = G \triangle F'$, hence $F'$ is a $k$-edition of $G$ into a block graph. This concludes the proof. 
\end{proof}

We now deal with the $2$-BG-branches, which require more technical skills. 
A $2$-BG-branch $B$ is \emph{clean} if $B^R$ is connected and if the two attachment points are not adjacent. For a clean $2$-BG-branch $B$, a \emph{min-cut} of $B$ is a set of edges $M \subseteq E(B)$ of minimum size such that the attachment points $p_1$ and $p_2$ are not in the same connected component of $B-M$ and $mc(B)$ is the size of a min-cut of $B$. 

\begin{lemma} \label{lem:2BGbranch}
Let $(G,k)$ be a yes-instance of \BGE{} reduced by Rule \ref{rule:1BGbranch}, and $B$ a clean $2$-BG-branch of $G$. If $|V(B)|\geq k+3$ then there exists a $k$-edition $F$ of $G$ into a block graph such that the set of pairs of $F$ containing a vertex of $V(B^R)$ is either empty or exactly a min-cut of $B$. 
\end{lemma}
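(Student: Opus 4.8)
The plan is to start from an \emph{optimal} (minimum-size) $k$-edition $F$ of $G$, set $H := G \triangle F$, write $p_1,p_2$ for the attachment points of $B$, $R := V(B^R)$, and split $F$ into $F_B$ (the pairs incident to a vertex of $R$) and $F_{\mathrm{out}} := F \setminus F_B$. If $F_B = \emptyset$ we are done, so assume $F_B \neq \emptyset$. Two easy observations will be used repeatedly. First, since the only vertices of $B$ with neighbours outside $V(B)$ in $G$ are $p_1$ and $p_2$, every edge of $G$ incident to $R$ lies in $V(B)$, so the deletions of $F$ incident to $R$ are edges of $B$. Second, neither $p_1$ nor $p_2$ is a cut-vertex of $B$: indeed $B - p_i$ contains the connected graph $B^R$ together with $p_{3-i}$, and $p_{3-i}$ has a neighbour in $B^R$ because $p_1p_2 \notin E(G)$; hence $N_B(p_i)$ is the unique block of $B$ containing $p_i$ minus $p_i$, which is a single vertex or a maximal clique of $B^R$. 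Also $H \setminus R = (G \setminus R) \triangle F_{\mathrm{out}}$ is a block graph by heredity.

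I would then distinguish cases according to whether $p_1$ and $p_2$ lie in the same connected component of $H \setminus R$. Suppose first they do \emph{not}. Then $p_1p_2 \notin E(H)$ (adjacent vertices lie in a common component), so $\{p_1,p_2\} \notin F$ and $F_{\mathrm{out}}$ contains no pair inside $V(B)$; consequently $G \triangle F_{\mathrm{out}}$ is obtained from the block graph $H \setminus R$ by re-attaching $B = G[V(B)]$ along the single vertices $p_1$ and $p_2$, which lie in two distinct components. Decomposing $B$ along its block-cut tree and applying Lemma~\ref{lem:constructionBG} repeatedly — each interface being a single vertex or a maximal clique by the observation above, and, where needed, using that $G$ is reduced by Rule~\ref{rule:1BGbranch} to ensure the pieces attach in this regular way — one checks that $G \triangle F_{\mathrm{out}}$ is a block graph, no new cycle being created since the two components were previously disconnected. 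As $|F_{\mathrm{out}}| < |F|$, this contradicts optimality. Hence $p_1$ and $p_2$ lie in the same component of $H \setminus R$.

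In this remaining case the heart of the argument is the inequality $|F_B| \geq mc(B)$. The idea is that if $F$ deleted fewer than $mc(B)$ edges of $B$, then $p_1$ and $p_2$ would remain joined by a path through $R$ in $H$; together with a $p_1$--$p_2$ path inside the common component of $H \setminus R$ this produces a cycle $C$ of $H$ whose interior splits into a part contained in $R$ and a part disjoint from $V(B)$, so that every chord of $C$ is a non-edge of $G$ incident to $R$ and hence lies in $F_B$. Since $H$ is a block graph, $H[V(C)]$ must be a clique as soon as $|C| \geq 4$, which already forces $\Omega(|C|^2)$ pairs into $F$; a short explicit analysis of the remaining small cases — using that $B^R$ is connected with at least $k+1$ vertices, which is exactly where the hypothesis $|V(B)| \geq k+3$ enters — then shows that spending too few deletions inside $B$ would cost more than $k$ edits. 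Thus $|F_B| \geq mc(B)$, and moreover the deletions of $F$ inside $V(B)$ form a $p_1$--$p_2$ edge-cut of $B$.

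Finally I would replace $F_B$ by an explicit min-cut. By the block structure of $B$ one may choose a min-cut $M$ that isolates, inside a single block $Q$ on the $p_1$--$p_2$ path of the block-cut tree, the vertex of $Q$ facing $p_1$; then $B - M = B_1 \sqcup B_2$ with $p_i \in B_i$, and each $B_i$ is an induced subgraph of $B$ — hence a block graph — in which $p_i$ is not a cut-vertex. Setting $F' := F_{\mathrm{out}} \cup M$, the graph $G \triangle F'$ is the block graph $H \setminus R$ with $B_1$ attached along $p_1$ and $B_2$ attached along $p_2$, a block graph by two applications of Lemma~\ref{lem:constructionBG} (each interface $N_{B_i}(p_i)$ being a single vertex or a maximal clique of $B_i - p_i$). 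As $M$ is disjoint from $F_{\mathrm{out}}$, we get $|F'| = |F_{\mathrm{out}}| + mc(B) \leq |F_{\mathrm{out}}| + |F_B| = |F| \leq k$, and the pairs of $F'$ incident to $R$ are exactly the min-cut $M$. The step I expect to be the main obstacle is precisely the inequality $|F_B| \geq mc(B)$: one must argue that preserving the $p_1$--$p_2$ connectivity outside the large branch $B^R$ forces spending a full min-cut's worth of deletions inside $B$, and this is where $|V(B)| \geq k+3$ is essential.
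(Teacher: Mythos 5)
Your overall strategy coincides with the paper's: handle separately the case where $p_1,p_2$ end up in different components of $H\setminus V(B^R)$ (drop all pairs touching $B^R$) and the case where they stay connected (show $F$ must contain a $p_1$--$p_2$ edge-cut of $B$, then swap in a min-cut via two applications of Lemma~\ref{lem:constructionBG}). The first and last parts of your argument are essentially the paper's proof. However, there is a genuine gap exactly where you flag it: the claim that $F$ must disconnect $p_1$ from $p_2$ inside $B$. Your proposed mechanism --- that the mixed cycle $C$ (a surviving $p_1$--$p_2$ path through $B^R$ plus a path in $H\setminus V(B^R)$) must induce a clique in $H$ and therefore forces $\Omega(|C|^2)$ edits --- does not give the bound you need: if the surviving path has one internal vertex and the outside path has one internal vertex, completing $C$ to a clique costs only two additions, and nothing in that count involves the other $\geq k$ vertices of $B^R$. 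Saying that ``a short explicit analysis of the remaining small cases'' closes this using $|V(B)|\geq k+3$ is an assertion, not an argument; the vertices of $B^R$ off the cycle never enter your count.

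The paper's argument is a per-vertex count over \emph{all} of $V(B)$, not a chord count on one cycle. Because $G$ is reduced by Rule~\ref{rule:1BGbranch}, every cut-vertex of $B$ lies on the unique shortest $p_1$--$p_2$ path $\pi_s$, so every vertex of $B$ off $\pi_s$ lies in a block of $B$ containing two \emph{consecutive} vertices of $\pi_s$. If any $p_1$--$p_2$ path survives in $B\triangle F$, then all of $V(\pi_s)$ lands in a single biconnected component of $H$, hence a single clique; since $\pi_s$ is an induced path on at least $3$ vertices, each $u_i$ with $i\leq r-2$ contributes the pair $(u_i,p_2)$ to $F$, and each off-path vertex $v$ must either join that clique (contributing $(v,p_1)$ or $(v,p_2)$, since $v$ cannot be adjacent to both in $G$) or lose one of its two consecutive neighbours on $\pi_s$ (a deletion). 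Distinct vertices contribute distinct pairs, so $|F|\geq |V(B)|-2\geq k+1$, the desired contradiction. This is the step your proposal is missing; once it is in place, your conclusion $|F_B|\geq mc(B)$ and the min-cut replacement go through as you describe.
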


\begin{proof}
Let $B$ be a clean $2$-BG-branch in $G$ with attachment points $p_1,p_2$ such that $|V(B)|\geq k+3$. We can observe that $N_B(p_1)$ and $N_B(p_2)$ are maximal cliques or single vertices since $B^R$ is connected. Since $B$ is reduced by \cref{rule:1BGbranch}, we can observe that the cut-vertices of $B$ are contained in any path $\pi$ from $p_1$ to $p_2$. 
Indeed, if there is a cut-vertex $x$ that is not in $\pi$, then $p_1$ and $p_2$ are in the same connected component of $B\setminus \{x\}$, implying that there is a connected component $K$ of $B\setminus \{x\}$ containing neither $p_1$ nor $p_2$. Since $B$ is a $2$-BG-branch, $N_G(K) = \{x\}$, hence $G[V(K)\cup \{x\}]$ induces a $1$-BG-branch of $G$ with attachment point $x$, a contradiction. Notice in particular that there is a unique shortest path $\pi_s$ between $p_1$ and $p_2$ that contain all cut-vertices of $B$. 
Moreover, since the only cut-vertices of $B$ are the ones in $\pi_s$, a vertex of $B$ that is not in $\pi_s$ lies in exactly one maximal clique of $B$ containing two consecutive vertices of $\pi_S$. 

\begin{figure}[h!]
    \centering
     \includegraphics[scale=2.1]{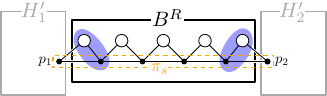}
     \caption{Illustration of the case where $p_1$ and $p_2$ are not in the same connected component of $H'$ in the \cref{lem:2BGbranch}. The white circles correspond to critical cliques of $B$. The blue sets correspond to $N_B(P_1)$ and $N_B(P_2)$.}
    \label{fig:2_BG_connected}
\end{figure}

Let $F$ be a $k$-edition of $G$ into a block graph, $H=G \triangle F$ and $H'=H\setminus V(B^R)$. First consider the case in which $p_1$ and $p_2$ are in two different connected components in $H'$. Let $H'_1, H'_2$ be the two connected components of $H'$ containing $p_1$ and $p_2$, respectively, and $H'_3$ the remaining of $H'$.  It follows from Lemma \ref{lem:constructionBG} that the graph $H^*$ resulting from the disjoint union of $$(((B^R, N_B(p_1))\otimes (H'_1, \{p_1\})), N_B(p_2)) \otimes (H'_2, \{p_2\})$$ and $H'_3$ is a block graph (see \cref{fig:2_BG_connected}). 
Let $F^*$ be the edition such that $H^*=G \triangle F^*$. By construction, $F^*$ is a subset of $F$ with no pair containing a vertex of $V(B^R)$.

Assume now that $p_1$ and $p_2$ are in the same connected component in $H'$ and let $\pi_{H'}$ denote a shortest path between them. Note that $\pi_{H'}$ may be of length $1$ if $p_1$ and $p_2$ are adjacent in $H'$. We first consider the case where there is a path $\pi_B=\{p_1=u_1,u_2,\dots,u_r=p_2\}$ (not necessarily induced) from $p_1$ to $p_2$ in $B$ that still exists in $B \triangle F$. Since $B$ is a clean $2$-BG-branch of $G$ reduced by \cref{rule:1BGbranch}, we know by the previous arguments that every cut-vertex of $B$ is in $\pi_B$. Note that all vertices in $\pi_B$ belong to the same biconnected component of $H$, which is a clique. Hence we can consider w.l.o.g. that $\pi_B$ is the path containing exactly $p_1, p_2$ and the cut-vertices of $B$, which is the shortest path in $B$ between $p_1$ and $p_2$. Since the vertices of $\pi_B$ induce a path in $B$ (on at least $3$ vertices since $p_1$ and $p_2$ are not adjacent) and a clique in $B \triangle F$, then for every $i\in \{1,\dots,r-2\}$, we have $(u_i, u_r)\in F$. 
Any vertex $v \in B\setminus V(\pi_B)$ is adjacent in $B$ to two vertices $w_1, w_2\in V(\pi_B)$ and if $v$ remains in the same biconnected component as $w_1$ and $w_2$ in $H$, then $(v,p_1)\in F$ or $(v,p_2) \in F$. Otherwise $F$ must contain $(v,w_1)$ or $(v,w_2)$. In any case, $F$ contains at least one pair for each vertex $v \in B\setminus \{u_{r-1}, u_r\}$. 
It follows from the previous arguments that $|F|\geq|B|-2$. However, since $|B|\geq k+3$ we have $|F|\geq k+1$, contradicting that $F$ is a solution, therefore there is no such path $\pi_B$.


We can assume that if $p_1$ and $p_2$ are in the same component of $H'$,  $F$ contains an edge-cut of $B$ or there would be a path $\pi_B$ from $p_1$ to $p_2$ in $B$ that still exists in $B\triangle F$. Let $M$ be a min-cut of $B$ and consider $B_1$ (resp. $B_2$) the connected component containing $p_1$ (resp. $p_2$) in $B \triangle M$.  Both $B^R_1=B_1\setminus\{p_1\}$ and $B^R_2=B_2\setminus\{p_2\}$ are induced subgraphs (one of them possibly empty) of $B$ and block graphs by heredity. By Lemma \ref{lem:constructionBG}, the following graph is a block graph: $$H^*=((H',\{p_1\})\otimes(B^R_1, N_{B_1}(p_1)),\{p_2\})\otimes (B^R_2, N_{B_2}(p_2))$$ 

\begin{figure}[h]
    \centering
    \includegraphics[width=6.5cm]{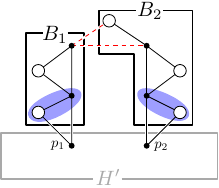}
    \caption{Illustration of the case where $p_1$ and $p_2$ are in the same connected component of $H'$ and $F$ contains a min-cut of $B$ in \cref{lem:2-SC-branchE}. 
    The white circles correspond to critical cliques of $B$. The blue sets correspond to $N_B(P_1)$ and $N_B(P_2)$. The red dotted edges between $B_1$ and $B_2$ are edges of an edge-cut of $B$ deleted by the edition.
    }
    \label{fig:2BG-branch}
\end{figure} 

Let $F^*$ the edition such that $H^* = G\triangle F^*$, we have $F^* = (F \setminus (V(B^R) \times V(G))) \cup M$, since $M$ is a min-cut of $B$, $|F^*| \leq |F|$ and $F^*$ is a $k$-edition of $G$.

In all cases, there exists a $k$-edition $F^*$ of $G$ into a block graph such that the set pairs of $F^*$ containing a vertex of $V(B^R)$ is either empty or exactly a min-cut of $B$. This concludes the proof. 
\end{proof}

\begin{polyrule} \label{rule:2BGbranch}  
Let $(G=(V,E),k)$ be an instance of \BGE{} and $B$ be a clean $2$-BG-branch of $G$ containing at least $k+3$ vertices with attachment points $p_1,p_2$. Remove $V(B^R)$ from $G$ and add a vertex $x$ adjacent to $p_1$ and $p_2$, a clique of size $min\{mc(B) -1, k\}$ adjacent to $p_1$ and $x$ and a clique of size $k$ 
adjacent to $p_2$ and $x$.
\end{polyrule}

\begin{lemma}
\cref{rule:2BGbranch} is safe.
\end{lemma}

\begin{proof}
Let $G'$ be the graph obtained from an application of \cref{rule:2BGbranch} on the $2$-BG-branch $B$ of graph $G$ with attachment points $p_1,p_2$ and $S$ the vertices introduced by this rule. Observe that $B' = G'[S \cup \{p_1\} \cup \{p_2\}]$ is a clean $2$-BG-branch of $G'$,  $|V(B')| \geq k+3$ and $mc(B') = min\{mc(B) , k+1\}$.

Let $F$ be a $k$-edition of $G$ that satisfies \cref{lem:2BGbranch} and $H = G \triangle F$. If $F$ contains a min-cut of $B$, observe that $mc(B') = mc(B) \leq k$ and consider $B_i', i\in \{1,2\}$ the connected component of $B' \triangle F_m$ containing $p_i$, where $F_m$ is a min-cut of $B'$. By \cref{lem:constructionBG} the graph $$H' = ((H\setminus V(B^R), p_1) \otimes (B_1',N_{B'}(p_1)), p_2) \otimes (B_2',N_{B'}(p_2))$$ is a block graph. Let $F'$ be the edition such that $H'  = G' \triangle F'$. Since $mc(B') = mc(B) \leq k$, by construction $|F'| = |F|$ , therefore $F'$ is a $k$-edition of $G'$. If $F$ does not contain a min-cut of $B$, then no vertex of $B^R$ is affected and $p_1, p_2$ are not in the same connected component of $H\setminus V(B^R)$. Let $H_1$ and $H_2$ be the connected components of $H\setminus V(B^R)$ that contain respectively $p_1$ and $p_2$ and $H_3$ the remaining components of $H\setminus V(B^R)$. By \cref{lem:constructionBG} the graph $H'$ corresponding to the disjoint union of $$(((B'^R, N_{B'}(p_1))\otimes (H_1, \{p_1\})), N_{B'}(p_2)) \otimes (H_2, \{p_2\})$$ 
and $H_3$ is a block graph.  We can observe that $H' = G' \triangle F$, hence $F$ is a $k$-edition of $G'$ into a block graph. 
The other way is symmetrical. 
\end{proof}

\subsection{Bounding the size of the kernel}

To conclude our kernelization algorithm we first state that reduction rules involving BG-branches can be applied in polynomial time.

\begin{lemma}\label{lem:branchBG}
\cref{rule:1BGbranch,rule:2BGbranch} can be 
applied in polynomial time. 
\end{lemma}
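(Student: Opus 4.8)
The plan is to prove, for each of the two rules, that one can (a) decide in polynomial time whether the rule applies and, if so, exhibit a witnessing branch $B$, and (b) given $B$, carry out the prescribed modification in polynomial time. Part (b) is the easy half: once $B$ and its attachment point(s) are known, deleting $V(B^{R})$ is immediate, and for Rule~\ref{rule:2BGbranch} the only nontrivial ingredient is the quantity $mc(B)$, which by definition is the size of a minimum $p_{1}p_{2}$ edge-cut of the block graph $B$; this can be computed in polynomial time by any maximum-flow routine (or, more cheaply, read off the block-cut tree of $B$, since in a block graph a minimum $p_{1}p_{2}$-cut has size $\min_{i}(|Q_{i}|-1)$ over the maximal cliques $Q_{i}$ on the unique path joining $p_{1}$ and $p_{2}$). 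So the real content of the lemma is the \emph{detection} of an applicable branch.

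For Rule~\ref{rule:1BGbranch}, the key step I would establish is a structural characterization: a non-empty set $W\subseteq V$ equals $V(B^{R})$ for some $1$-BG-branch $B$ with $B^{R}$ connected if and only if $W$ is the vertex set of a connected component of $G\setminus\{u\}$ for some cut-vertex $u$ of $G$, with $G[W\cup\{u\}]$ a block graph. The forward direction holds because the unique attachment point $u$ is the only vertex of $B$ having a neighbour outside $B$, so $N_{G}(W)\subseteq\{u\}$, and since $B$ is connected and $W\neq\emptyset$ in fact $N_{G}(W)=\{u\}$; hence $W$ is a union of components of $G\setminus\{u\}$ and, being connected, exactly one of them, while $G[W\cup\{u\}]=B$ is a block graph by definition; the converse is immediate (and $u$ is indeed a cut-vertex once Rule~\ref{rule:compBG} has been applied, so $B\neq G$). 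Detection is then: compute the cut-vertices of $G$ (linear time via DFS), and for each cut-vertex $u$ compute the components of $G\setminus\{u\}$ and test each $G[C\cup\{u\}]$ for being a block graph (recognizable in $O(n+m)$ time, as recalled in the proof of Lemma~\ref{lem:safecc-BG}); the total is polynomial, and if some pair $(u,C)$ passes we apply the rule to $B=G[C\cup\{u\}]$.

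For Rule~\ref{rule:2BGbranch}, I would prove the analogous characterization: the clean $2$-BG-branches $B$ with $|V(B)|\geq k+3$ are exactly the graphs $G[C\cup\{p_{1},p_{2}\}]$ where $p_{1},p_{2}$ are non-adjacent vertices, $C$ is a connected component of $G\setminus\{p_{1},p_{2}\}$ with $N_{G}(C)=\{p_{1},p_{2}\}$ and $|C|\geq k+1$, with $G[C\cup\{p_{1},p_{2}\}]$ a block graph, and where each of $p_{1},p_{2}$ has a neighbour in $V\setminus(C\cup\{p_{1},p_{2}\})$. The argument mirrors the previous one: $p_{1},p_{2}$ being the only attachment points forces $N_{G}(V(B^{R}))\subseteq\{p_{1},p_{2}\}$; since $B$ is connected, clean, and $p_{1},p_{2}$ are non-adjacent, $B^{R}$ must see both, so $N_{G}(V(B^{R}))=\{p_{1},p_{2}\}$ and $V(B^{R})$ is a single component of $G\setminus\{p_{1},p_{2}\}$; and ``$p_{i}$ is an attachment point'' is exactly ``$p_{i}$ has a neighbour outside $C\cup\{p_{1},p_{2}\}$''. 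Detection then enumerates the $O(n^{2})$ non-adjacent pairs $\{p_{1},p_{2}\}$, and for each the components of $G\setminus\{p_{1},p_{2}\}$, checking the (linear-time) conditions above; upon success it computes $mc(B)$ and applies the rule. This is polynomial.

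The step I expect to be the main, though modest, obstacle is making these two characterizations airtight, i.e.\ arguing carefully from the definitions of \emph{attachment point}, $B^{R}$ and \emph{clean} that $V(B^{R})$ must coincide with one connected component of $G$ minus the attachment point(s) --- including the degenerate cases where $B=G$ (already removed by Rule~\ref{rule:compBG}), where a would-be attachment point lacks an external neighbour, or where $G\setminus\{p_{1},p_{2}\}$ splits into several components each seeing only part of $\{p_{1},p_{2}\}$. Finally, to see that \emph{exhaustive} application is polynomial, I would note that each application of Rule~\ref{rule:1BGbranch} strictly decreases $|V(G)|$, and that Rule~\ref{rule:2BGbranch} applied to a branch $B$ produces a branch $B'$ in a fixed gadget form with $mc(B')=\min\{mc(B),k+1\}$ and $|V(B')|\leq 2k+3$, on which a further application returns an isomorphic graph; hence it suffices to apply Rule~\ref{rule:2BGbranch} only when it actually modifies $G$ (equivalently, only to branches not already in this gadget form), which bounds the number of applications polynomially.
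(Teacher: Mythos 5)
Your proof is correct and follows essentially the same approach as the paper's: enumerate candidate attachment points (single vertices, resp.\ non-adjacent pairs), examine the connected components of the graph with those vertices removed, and test in linear time whether each component together with the removed vertices induces a connected block graph. The extra details you supply (computing $mc(B)$ via max-flow and bounding the number of exhaustive applications) go beyond the paper's very terse argument but are consistent with it.
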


\begin{proof}
We show that we can enumerate all $1$-BG-branches and $2$-BG-branches in polynomial time. Recall that recognition of block graphs can be computed in linear time. All $1$-BG-branches can be enumerated by removing a vertex $v\in V(G)$ and looking among the remaining connected components those that induce a connected block graph together with $v$. 
We proceed similarly to detect clean $2$-BG-branches by removing a pair of non-adjacent vertices $u,v\in V(G)$ and looking among the remaining connected components those that induce a connected block graph together with $u$ and $v$.
\end{proof}

\THMBGE*

\begin{proof}
Let $(G=(V,E),k)$ be a reduced yes-instance of \BGE, $F$ a $k$-edition of $G$ and $H=G \triangle F$. We assume that $G$ is connected, the following arguments can be easily adapted if this is not the case by summing over all connected components of $G$. Let $A$ be the set of affected vertices of $H$. Since $|F| \leq k$, we have $|A| \leq 2k$. Let $T_A$ be the connected minimal induced subgraph of $H$ that spans $A$, and $\sd{T_A}$ the set of vertices of degree at least $3$ in $T_A$. By \cref{lem:bloc-span}, $|\sd{T_A} \setminus A| \leq 3\cdot |A| \leq 6k$.

First, we can observe that if there is a cut-vertex $x\in V(H)$ that is not in $V(T_A)$, since $T_A$ is the connected minimal induced subgraph of $H$ that spans all affected vertices, there is a connected component $K$ of $H\setminus \{x\}$ that does not contain any affected vertex. Since $B = H[V(K)\cup \{x\}]$ is a block graph by heredity, it follows that $B$ induces a $1$-BG-branch with attachment point $x$ and can be reduced by \cref{rule:1BGbranch}, a contradiction. It remains that every cut-vertex of $H$ lies in $V(T_A)$ and that every vertex of $V(H) \setminus V(T_A)$ is adjacent to a maximal clique of vertices or a single vertex of $V(T_A)$. Since the graph is reduced under \cref{rule:1BGbranch}, the second case can not happen or else there would a $1$-BG-branch.
Hence, there are two kinds of connected components in $H\setminus (A \cup \sd{T_A})$, namely the ones that are adjacent to a maximal clique of vertices of $A \cup \sd{T_A}$ and the ones that are adjacent to two non-adjacent vertices of $A \cup \sd{T_A}$ (i.e., the ones that contain connected components of $T_A \setminus (A\cup \sd{T_A})$). We say that these connected components are of type 1 or type 2 respectively (see \cref{fig:ex_graphe_reduit_BG}).

The type 1 connected components are cliques since $H$ is a block graph and have the same neighborhood in $H$, thus they are critical cliques reduced by \cref{rule:borneCC-BG} and contains at most $2k+2$ vertices. Since $T_A$ is a block graph, there is  at most $|A \cup \sd{T_A}| \leq 8k$ maximal cliques of vertices of $A\cup \sd{T_A}$, hence there are at most  $8k \cdot (2k+2)$ 
vertices in connected components of type 1. 

Now consider the connected components of type 2. We can observe that such a component $K$ together with its adjacent vertices $x,y \in A\cup \sd{T_A}$ form a clean 2-$BG$-branch, and thus contains at most $2k+3$ vertices. By \cref{lem:bloc-span}, $T_A \setminus (A\cup \sd{T_A})$ contains at most $2\cdot |A| = 4k$ connected components which is exactly the number of type 2 components, hence there are at most $4k \cdot (2k+1)$ 
vertices in connected components of type 2. 

 We conclude that if $(G,k)$ is a reduced yes-instance, then $|V(G)| \leq 24k^2 + 28k$. Therefore, given a reduced instance $(G,k)$ of \BGE{}, if $|V(G)| > 24k^2 + 28k$ we can safely conclude that $(G,k)$ is a no-instance and return a trivial no-instance of constant size, thus proving the size of our kernel. Finally we claim that the reduced instance can be computed in polynomial time. Indeed, all reduction rules can be 
exhaustively applied in polynomial time (\cref{lem:safecc-BG,lem:branchBG}). 
\begin{figure}
    \centering
    \includegraphics[width=10cm]{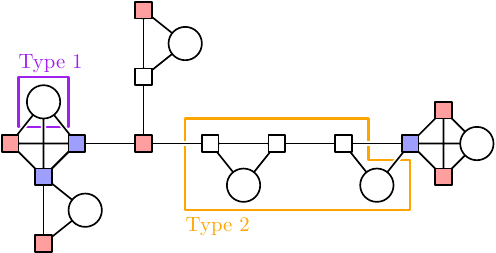}
    \caption{Illustration of a reduced instance in the proof of \cref{thm:taille_noyau_BGE}. Square nodes correspond to vertices of $T_A$, the ones filled in red are vertices of $A$, the ones filled in blue are vertices of $\sd{T_A}\setminus A$. Circle nodes correspond to critical cliques of $G$.}
    \label{fig:ex_graphe_reduit_BG}
\end{figure}
For the case of \BGD{}, all the lemmata in this section are still true. In particular \cref{lem:2BGbranch} holds by simply replacing "$k$-edition" with "$k$-deletion", and the proof for the deletion variant is contained in the edition one. Using the same set of rules, we have a kernel with $O(k^2)$ vertices for \BGD{}.
\end{proof}




\section{Kernelization algorithm for {\sc \SCE{}} }
\label{sec:sc}
As in the previous section, we begin this section by providing a high-level description of our kernelization algorithm. We use the critical clique graph of strictly chordal graphs, which is a block graph by definition, to bound the number of vertices of a reduced instance. While the technique used to bound the size of the kernels for block graphs and strictly chordal graphs is similar, the structure and rules for the latter are more involved. We thus provide a stand-alone proof of this result for the sake of completeness. 

Let us consider a positive instance $(G = (V,E),k)$ of \SCE{}, $F$ a suitable solution and $H = G \triangle F$. Denote by $\C(H)$ the critical clique graph of $H$ as described in \cref{def:ccg}. Since $|F| \leq k$, we know that at most $2k$ critical cliques of $\C(H)$ may contain affected vertices. Let $A$ be the set of such critical cliques, $T$ the minimum subgraph of $\C(H)$ that spans all critical cliques of $A$ and  $A'$ the set of critical cliques of degree at least $3$ in $T$. Recall that $|A'\setminus A| \leq 3\cdot |A|$ (\cref{lem:bloc-span}). We will define the notion of SC-branch, corresponding to well-structured strictly chordal subgraphs of $G$. We will focus our interest on two types of SC-branches: the ones that are connected to the rest of the graph by only one critical clique, called \emph{$1$-SC-branches}, and the ones that are connected to the rest of the graph by exactly two critical cliques, called \emph{$2$-SC-branches}. This time the $1$-SC-branches will only be reduced to two critical cliques. The connected components of the graph $T \setminus (A \cup A')$ are paths, which correspond to subgraphs of $2$-SC-branches, and we will bound their length by $k+3$.
We know that there are at most $4k$ such connected components, thus there is $O(k^2)$ critical cliques in $T$. Finally, the connected components of the graph $\C(H)\setminus V(T)$ correspond to $1$-SC-branches or sets of connected $1$-SC-branches. The latter case will require a specific reduction rule to be dealt with. This implies that every critical clique or maximal clique of $T$ can have a linear number of $1$-SC-branches of $\C(H)\setminus V(T)$ adjacent to it. Altogether, the graph $\C(H)$ contains $O(k^3)$ critical cliques, and each critical clique is of size at most $k+1$, hence the graph $G$ contains $O(k^4)$ vertices. \\

From now on, we always assume that the 
considered optimal editions (resp. completions, deletions) satisfy \cref{lem:homogen-compl-here}.

\subsection{Classical reduction rules}
First, we give classical reduction rules when dealing with modification problems. The first rule is safe for any target graph class hereditary and closed under disjoint union. The second 
one comes from the fact that strictly chordal graphs are hereditary and closed under true twin addition, combined with \cref{lem:homogen-compl-here}. 

\begin{polyrule} \label{rule:compSC}
    Let $C$ be a \bdsc{} connected component of $G$. 
    Remove $V(C)$ from $G$.
\end{polyrule}

\begin{polyrule}\label{rule:borneCC}
Let $K \subseteq V$ be a set of true twins of $G$ such that $|K| > k+1$. Remove $|K|-(k+1)$ arbitrary vertices in $K$ from $G$. 
\end{polyrule}

\begin{lemma}[Folklore, \cite{BPP10}]
\label{lem:safecc}
\cref{rule:compSC,rule:borneCC} are safe and can be applied in polynomial time. 
\end{lemma}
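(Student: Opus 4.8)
The plan is to handle the two rules separately, in each case establishing the forward direction by heredity and the backward direction by a ``gluing'' argument, and then to observe that both rules are recognizable and applicable in polynomial time.

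For Rule~\ref{rule:compSC}, the first step is to record that strictly chordal graphs are closed under disjoint union: by Theorem~\ref{thm:caracBD} the obstructions (dart, gem, hole) are all connected, so no obstruction of a disjoint union can use vertices of both parts, and chordality is trivially preserved; equivalently, $\C(G_1 \sqcup G_2) = \C(G_1) \sqcup \C(G_2)$ and a disjoint union of block graphs is a block graph. Then, if $C$ is a strictly chordal connected component of $G$ and $G' = G \setminus V(C)$, a $k$-edition $F$ of $G$ restricts to a $k$-edition of $G'$ by heredity, and conversely any $k$-edition $F'$ of $G'$ yields the strictly chordal graph $H' = G' \triangle F'$, hence the strictly chordal graph $H' \sqcup C = G \triangle F'$, so $F'$ is a $k$-edition of $G$.

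For Rule~\ref{rule:borneCC}, let $K$ be a set of true twins of $G$ with $|K| > k+1$, let $G'$ be obtained by deleting $t = |K|-(k+1)$ vertices of $K$, and let $K' \subseteq K$ be the $k+1$ surviving vertices; $K'$ is still a set of true twins in $G'$, so it lies in a critical clique $Q'$ of $G'$ with $|Q'| \geq k+1$, and one checks easily that $Q' \subseteq Q$, where $Q$ is the critical clique of $G$ containing $K$. The forward direction is again heredity. For the converse, I would take an optimal $k$-edition $F'$ of $G'$ satisfying Lemma~\ref{lem:homogen-compl-here}, so that $F'$ is homogeneous on critical cliques and $Q'$ remains a clique module of $H' = G' \triangle F'$; this forces all $k+1$ vertices of $Q'$ to be incident to exactly the same pairs of $F'$. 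If some vertex of $Q'$ were affected, $F'$ would contain at least $|Q'| \geq k+1$ pairs, contradicting $|F'| \leq k$; hence $F'$ is incident to no vertex of $Q'$. Therefore, in $H = G \triangle F'$ and for each deleted vertex $v$ and each $q \in Q' \subseteq Q$, one has $N_H[v] = N_G[v] = N_G[q] = N_H[q]$, so re-inserting the $t$ deleted vertices just adds $t$ true twins of a vertex of $Q'$ to $H[V(G')] = H'$; since strictly chordal graphs are closed under true twin addition, $H$ is strictly chordal, so $F'$ (of size $\leq k$) is a $k$-edition of $G$.

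Finally, for the running time: the connected components are computed in linear time, and each is tested for strict chordality either directly (holes excluded by a linear-time chordality test, darts and gems being bounded-size obstructions) or, via Theorem~\ref{thm:caracBD}, by building $\C(G)$ from the modular decomposition in linear time~\cite{PDS07} and checking it is a block graph; the same modular decomposition exposes all sets of true twins, so both rules can be applied, and exhaustively applied (each application strictly decreases $|V(G)|$), in polynomial time. The point requiring the most care is the backward direction of Rule~\ref{rule:borneCC}: it is precisely here that Lemma~\ref{lem:homogen-compl-here} is invoked to force an optimal solution to leave the large critical clique untouched, which is why the threshold $k+1$ is enough for strictly chordal graphs, in contrast with block graphs, where the failure of Lemma~\ref{lem:homogen-compl-here} obliged Rule~\ref{rule:borneCC-BG} to use the weaker bound $2k+2$.
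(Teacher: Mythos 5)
Your proposal is correct and follows exactly the route the paper intends: it omits a written proof (citing the lemma as folklore from~\cite{BPP10}) but justifies Rule~\ref{rule:compSC} by heredity plus closure under disjoint union, and Rule~\ref{rule:borneCC} by closure under true twin addition combined with Lemma~\ref{lem:homogen-compl-here}, which is precisely what you spell out. Your closing remark contrasting the $k+1$ threshold here with the $2k+2$ threshold of Rule~\ref{rule:borneCC-BG} for block graphs also matches the paper's discussion.
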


\subsection{Reducing Strictly Chordal branches}

We now consider the main structure of our kernelization algorithm, namely \emph{SC-branches}.

\begin{definition}[SC-branch]
Let $G=(V,E)$ be a graph. We say that a subgraph $B$ of $G$ is a \emph{SC-branch} if $V(B)$ is a union of critical cliques $K_1,\dots ,K_r \in \mathcal{K}(G)$ such that the subgraph of $\C(G)$ induced by $K_1,\dots ,K_r$ is a connected block graph.
\end{definition}

We emphasize that our definition of a SC-branch $B$ is stronger than simply requiring $B$ to be an induced \bdsc\ graph. For example, if $G$ is the dart graph, the subgraph obtained by removing the pendant vertex is \bdsc, but it is not a SC-branch because the corresponding critical cliques do not form a block graph in $\C(G)$. 
Let $B$ be a SC-branch of graph $G$ and let $K_1,\dots ,K_r$ be the critical cliques of $G$ contained in $V(B)$. We say that $K_i$ is an \emph{attachment point} of $B$ if $N_G(K_i) \setminus V(B) \neq \emptyset$.
A SC-branch $B$ is a $p$-SC-branch if it has exactly $p$ attachment points. We denote $B^R$ the induced subgraph of $B$ in which all attachment points have been removed. 

We first give structural lemmata on SC-branches that will be helpful to prove the safeness of our rules.

\begin{lemma}%
\label{lem:1-block}
Let $G = (V,E)$ be a graph and $B$ a SC-branch of $G$. For any attachment point $P$ of $B$, let $B' = B\setminus P$, consider the connected components $G_1, G_2, \dots, G_r$ of $B'$ and let $Q_i = N_B(P) \cap V(G_i)$. For every $i,\ 1 \leq i \leq r$, $Q_i$ is a critical clique, a maximal clique or intersects exactly one maximal clique of $G_i$.

\end{lemma}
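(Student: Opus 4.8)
The plan is to move everything to the critical clique graph. Write $K_1,\dots,K_t$ for the critical cliques of $G$ whose union is $V(B)$, so that $\mathcal B := \C(G)[\{K_1,\dots,K_t\}]$ is a connected block graph, and assume without loss of generality that $P = K_1$. I would first record two structural facts. Since each $K_j$ is a module of $G$, it is a module of every induced subgraph of $G$; hence in any induced subgraph $H$ of $B$ every critical clique is a union of some of the $K_j$'s, and two of them, $K_a$ and $K_b$, lie in a common critical clique of $H$ precisely when they are true twins in the induced subgraph of $\mathcal B$ on the critical cliques kept in $H$. Second, the connected components of $B' = B\setminus P$ correspond bijectively to the connected components $C_1,\dots,C_r$ of $\mathcal B - K_1$, with $V(G_i)=\bigcup_{K_j\in C_i}K_j$ (project a path of $B\setminus P$ to a walk of $\mathcal B - K_1$ and lift it back). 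Combining the two facts, $\C(G_i)$ is isomorphic to $\C(\mathcal B_i)$, where $\mathcal B_i := \mathcal B[C_i]$; since $\mathcal B_i$ is an induced subgraph of a block graph it is a block graph, hence strictly chordal, hence $\C(\mathcal B_i)$ is a block graph (Theorem~\ref{thm:caracBD}), so $G_i$ is strictly chordal.

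Next I would locate $Q_i$ inside $\mathcal B$. In a block graph the neighbourhood $N_{\mathcal B}(K_1)$ is the disjoint union, over the blocks $\beta$ of $\mathcal B$ containing $K_1$, of $\beta\setminus\{K_1\}$, and distinct such blocks lie in distinct components of $\mathcal B - K_1$ (standard block-cut-tree structure). So for each $i$ there is a unique block $\beta_i$ of $\mathcal B$ with $K_1\in\beta_i$ and $\sigma_i := \beta_i\setminus\{K_1\}\subseteq C_i$, and $Q_i=\bigcup_{K_j\in\sigma_i}K_j$; in particular $Q_i$ is a clique of $G_i$, and $\sigma_i\neq\emptyset$ because $\mathcal B$ is connected. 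Let $s := |\sigma_i|$.

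The argument then splits on $s$. If $s\geq 2$, I would first note that $\sigma_i$ is a maximal clique of $\mathcal B_i$: any node adjacent to all of $\sigma_i$ in $\mathcal B_i$, together with $\sigma_i$, forms a clique of the block graph $\mathcal B$ meeting $\beta_i$ in at least two nodes, hence is contained in $\beta_i$, hence already lies in $\sigma_i$. Passing from $\mathcal B_i$ to $\C(\mathcal B_i)\cong\C(G_i)$, the nodes of $\sigma_i$ whose only block of $\mathcal B_i$ is $\sigma_i$ are pairwise true twins and collapse to a single node, while every remaining node of $\sigma_i$ has a neighbour outside $\sigma_i$ that is private to it (by the same argument: a node outside $\sigma_i$ adjacent to two nodes of $\sigma_i$ would lie in $\beta_i$), so it stays alone in its twin class; a short verification then shows the image of $\sigma_i$ is still a maximal clique of $\C(G_i)$, or a single node if all of $\sigma_i$ collapses (in which case $\mathcal B_i=\sigma_i$ and $Q_i=V(G_i)$). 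Either way $Q_i$ is a maximal clique or a critical clique of $G_i$. If $s=1$, write $\sigma_i=\{L_1\}$. If $L_1$ lies in at least two blocks of $\mathcal B_i$ it cannot be a true twin of anything (two vertices share at most one block in a block graph), so $Q_i=L_1$ is a critical clique of $G_i$. Otherwise $L_1$ lies in a unique block $\beta'$ of $\mathcal B_i$; it collapses with the other nodes of $\beta'$ lying in no block of $\mathcal B_i$ other than $\beta'$ into a critical clique $M$ of $G_i$ with $Q_i\subseteq M$, and $M$ lies in a unique block of $\C(G_i)$, so $Q_i$ intersects exactly one maximal clique of $G_i$.

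The step I expect to be most delicate is the $s\geq 2$ case: one has to track exactly which nodes of $\sigma_i$ collapse when going from $\mathcal B_i$ to $\C(G_i)$ and check that the collapsed image is still a maximal clique (block) of $\C(G_i)$, using repeatedly that in a block graph every clique lies inside a single block and any two vertices share at most one block. Everything else reduces to the correspondence between the components of $B\setminus P$ and those of $\mathcal B - K_1$ together with the description of neighbourhoods in block graphs.
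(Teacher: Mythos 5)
Your proof is correct, but it follows a genuinely different route from the paper's. The paper argues directly in $G$ with forbidden induced subgraphs: it first shows each $Q_i$ is a clique (two non-adjacent vertices of $Q_i$, a shortest path between them in $G_i$ and a vertex of $P$ yield either a hole or a diamond among the critical cliques of $B$, contradicting the SC-branch definition), and then, assuming $Q_i$ is neither a critical nor a maximal clique and meets two maximal cliques of $G_i$, exhibits an explicit dart, gem or $C_4$ in each of the two sub-cases ($Q_i$ not a module; $Q_i$ a module but not a critical clique). You instead lift everything to $\mathcal{B}=\C(G)[\{K_1,\dots,K_t\}]$, which is a block graph by the definition of an SC-branch, identify $Q_i$ with the union of the critical cliques in $\beta_i\setminus\{K_1\}$ for the unique block $\beta_i$ of $\mathcal{B}$ containing $K_1$ and meeting $C_i$, and run a case analysis on $|\beta_i\setminus\{K_1\}|$ using that in a block graph every clique lies in a single block and two blocks share at most one vertex. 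Both arguments are sound; the paper's is more elementary and self-contained, while yours makes the structural origin of the trichotomy transparent and even tells you which of the three outcomes occurs in each configuration (cliqueness of $Q_i$ comes for free, rather than by contradiction). The steps you flag as needing verification --- the isomorphism $\C(G_i)\cong\C(\mathcal{B}_i)$ (which is legitimate because $V(G_i)$ is a union of whole critical cliques of $G$) and the check that the image of $\beta_i\setminus\{K_1\}$ remains a maximal clique of $\C(G_i)$ after collapsing twin classes --- do go through, so there is no gap.
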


\begin{proof}

First, we show that all sets $Q_i$ are cliques. Suppose that $Q_i$ is not a clique for some $1 \leq i \leq r$. Let $x$ and $y$ be non-adjacent vertices of $Q_i$ and $z\in P$. Since $G_i$ is connected, take a shortest path $\pi$ between $x$ and $y$ in $G_i$. The subgraph induced by the vertices $\{x,y,z\}$ and those of $\pi$ contains either a cycle of length at least $4$ if $z$ is not adjacent to any inner vertex of $\pi$, which is a contradiction since $B$ is a block graph, or a diamond with $z$ being one of its vertices of degree $3$. In the latter case, since $z$ is not in the same critical clique of $\C(G)$  as its true twin in the diamond, the critical cliques of $\C(G)$ that contain some vertices of this diamond also form a diamond in $\C(G)$.
Such a diamond is formed by critical cliques of $G$ contained in $B$, contradicting the definition of a SC-branch. 
In all cases we have a contradiction, it remains that $Q_i$ is a clique in $G$.\\  
Now suppose that $Q_i$, $1 \leq i \leq r$ is neither a critical clique nor a maximal clique of $G_i$ and intersects at least two maximal cliques of $G_i$. Suppose first that $Q_i$ is not a module, then there exist $u,v\in Q_i$ and $x \in V(G_i) \setminus Q_i$ adjacent to only $u$ or $v$. Since $Q_i$ is not a maximal clique of $G_i$, there exists $y \in V(G_i) \setminus Q_i$ adjacent to $u$ and $v$. Together with $w\in P$, the vertex set $\{u,v,w,x,y\}$ induces in $B$ a gem if $x$ and $y$ are adjacent and a dart otherwise, leading to a contradiction.
If $Q_i$ is a module, since it is not a critical clique (\ie a maximal clique module) there exist $u\in Q_i$ and $v\in  V(G_i) \setminus Q_i$ that lie in the same critical clique of $G_i$. Since $Q_i$ intersects two maximal cliques of $G_i$, there exist two non-adjacent vertices $x,y \in V(G_i) \setminus Q_i$ that are adjacent to both $u$ and $v$. Together with $w\in P$, the vertex set $\{u,v,w,x,y\}$ induces a dart in $B$, leading to a contradiction. It remains that $Q_i$ is a critical clique, a maximal clique or intersects exactly one maximal clique of $G_i$. 
\end{proof}

\begin{lemma} \label{lem:1-block_edition}
Let $G=(V,E)$ be a graph and $B$ a SC-branch of $G$. Let $F$ be an optimal edition of $G$ that respects \cref{lem:homogen-compl-here}, and $H = G \triangle F$. For any attachment point $P$ of $B$, let $C$ be the critical clique of $H$ which contains $P$ and $C' = C \setminus V(B^R)$. If $N_H(C') \cap V(B^R) \neq \emptyset$, then $C'$ is a critical clique or intersects exactly one maximal clique of $H' = H\setminus V(B^R)$.
\end{lemma}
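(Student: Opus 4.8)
The plan is to argue by contradiction: assume that $C'$ is neither a critical clique of $H'$ nor intersects exactly one maximal clique of $H'$. Since $C'\supseteq P$ is nonempty and any nonempty clique lies in at least one maximal clique, the second alternative can fail only if $C'$ lies in at least two maximal cliques of $H'$. Because $C$ is a critical clique of $H$, the set $C'=C\setminus V(B^R)$ is a clique module of $H$, hence of $H'$ as well. From ``$C'$ is not a critical clique of $H'$'' I extract a vertex $w$ in the critical clique of $H'$ that strictly contains $C'$; then $N_{H'}[w]=N_{H'}[x]$ for every $x\in C'$, and crucially $w\notin C$ (otherwise $w\in C\cap V(H')=C'$). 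From ``$C'$ lies in $\geq 2$ maximal cliques of $H'$'' I get that $N_{H'}(C')$ is not a clique; since $N_{H'}(C')=N_{H'}[w]\setminus C'$ and $w$ is adjacent to everything in $N_{H'}(C')\setminus\{w\}$, the set $N_{H'}(C')\setminus\{w\}$ is not a clique either, so I may pick $a,b\in N_{H'}(C')\setminus\{w\}$ with $ab\notin E(H')$.

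Next I pass to $H$ and work with minimal separators. For any $x\in C'$ the pairs $wx,wa,wb,xa,xb$ are edges of $H$ while $ab$ is a non-edge, so $a$ and $b$ are non-adjacent with a common neighbour; let $S$ be a minimal $ab$-separator of $H$. Any $ab$-separator contains $N_H(a)\cap N_H(b)$, hence $C'\cup\{w\}\subseteq S$; moreover every vertex of $C$ is a true twin of $x$ in $H$, hence also a common neighbour of $a$ and $b$, so in fact $C\cup\{w\}\subseteq S$, and $S$ is a clique since $H$ is chordal. The crux is then to show $S=C$. If $C\subsetneq S$, fix $x\in C$; since $w\in S\setminus C$ and $S$ is a clique, $wx\in E(H)$, but $w$ and $x$ are not true twins (as $w\notin C$), so some vertex $y\neq w,x$ is adjacent to exactly one of them, and $y\notin S$ because $S$ is a clique. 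Say $y\sim x$ and $y\not\sim w$. A minimal $wy$-separator $S'$ of $H$ must contain the common neighbour $x$ of $w$ and $y$; thus $S$ and $S'$ are minimal separators of $H$ both containing $x$, hence $S=S'$ by Theorem~\ref{thm:caracBD}(4) --- contradicting that $S'$ separates $w$ from $y$ while $w\in S=S'$. Therefore $S=C$, and now $w\in S=C$ contradicts $w\notin C$.

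I expect the step $S=C$ to be the main obstacle: it is exactly a ``a minimal separator cannot properly contain a critical clique'' argument, the same mechanism already used inside the proof of Lemma~\ref{lem:constructionSC}, and it is where the pairwise disjointness of minimal separators (Theorem~\ref{thm:caracBD}(4)) is essential; the only other delicate point is arranging the witnesses so that $w\notin C$ and $a,b\neq w$. I would also remark that the hypothesis $N_H(C')\cap V(B^R)\neq\emptyset$, although the separator argument above does not seem to require it, permits a more concrete finish: it supplies a vertex $z\in V(B^R)$ adjacent to all of $C'$, and when $z\not\sim w$ the five vertices $\{x,a,b,w,z\}$ induce a dart, a gem, or a hole in $H$ according to whether $z$ is adjacent to none, exactly one, or both of $a,b$, contradicting Theorem~\ref{thm:caracBD}(3) directly (the case $z\sim w$ then being handled by the separator argument).
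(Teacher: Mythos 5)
Your proof is correct, but it takes a genuinely different route from the paper's. Both arguments extract the same witnesses from the negated conclusion --- a vertex $w\notin C$ that becomes a true twin of $C'$ in $H'$, and two non-adjacent common neighbours $a,b$ coming from the two maximal cliques --- but the paper finishes by locating a vertex of $V(B^R)$ that distinguishes a vertex of $C'$ from its new twin and exhibiting an explicit dart, gem or $C_4$ in $H$, i.e.\ it invokes characterization (3) of Theorem~\ref{thm:caracBD}. You instead invoke characterization (4): the common neighbourhood of $a$ and $b$ forces a minimal separator $S \supseteq C\cup\{w\}$ of $H$, and the pairwise disjointness of minimal separators shows that no minimal separator of a strictly chordal graph can properly contain a critical clique --- the same mechanism as inside the proof of Lemma~\ref{lem:constructionSC}, as you observe. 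Your route is somewhat longer but buys two things: it makes explicit that the hypothesis $N_H(C')\cap V(B^R)\neq\emptyset$ is not actually needed for the conclusion (the paper uses it only to place its distinguishing vertex inside $V(B^R)$), and it factors out a reusable separator lemma rather than a case analysis on adjacencies; the paper's route is shorter and entirely local. One small presentational point: the detour through ``if $C\subsetneq S$ \dots\ therefore $S=C$'' is redundant, since you have already established $w\in S\setminus C$, so $C\subsetneq S$ holds unconditionally and the $y$/$S'$ argument is itself the final contradiction.
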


\begin{proof}
Suppose that $C'$ is not a critical clique of $H'$ and intersects at least two maximal cliques of $H'$. Since $C' \subseteq C$ is a module but not a critical clique, it is not a maximal clique module and thus there exist $u \in C'$ and $v \notin C'$ that lie in the same critical clique in $H'$. 
Notice that there must exists a vertex $w \in N_H(C') \cap V(B^R)$ 
adjacent to exactly one of $u$ and $v$ in $H$, since otherwise we would have 
$N_H(u)  \cap V(B) = N_H(v) \cap V(B)$ and $N_{H'}(u) = N_{H'}(v)$, meaning that $C$ is 
not a critical clique of $H$, a contradiction. 
Since $C'$ intersects two maximal cliques of $H'$ there exist two non-adjacent vertices $x,y \in V(H')$ that are adjacent to both $u$ and $v$. The vertex set $\{u,v,w,x,y\}$ induces a dart in $H$ if $w$ is not adjacent to neither $x$ nor $y$, a gem if it is adjacent to one of them, and if $w$ is adjacent to both $x$ and $y$, $\{w,x,v,y\}$ induces a $C_4$ in $H$. It remains that $C'$ is a critical clique of $H'$ or $C'$ is included in exactly one maximal clique of $H'$ if $N_H(C') \cap V(B^R) \neq \emptyset$.
\end{proof}

In the next result we show that there always exists an optimal edition that affects only the attachment point of a 1-SC-branch and its neighborhood. This will allow us to define our first reduction rule on SC-branches.

\begin{lemma} \label{lem:1-SC-branchE}
Let $G=(V,E)$ be a graph and $B$ a $1$-SC-branch of $G$ with attachment point $P$. There exists an optimal edition $F$ of $G$ such that:
\begin{itemize}
    \item The set of vertices of $B$ affected by $F$ is included in $P \cup N_B(P)$.
    \item In $H = G \triangle F$ the vertices of $N_B(P)$ are all adjacent to the same vertices of $V(G) \setminus V(B^R)$.
\end{itemize}
\end{lemma}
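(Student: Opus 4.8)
The plan is to start from an optimal edition $F$ of $G$ respecting Lemma~\ref{lem:homogen-compl-here}, chosen among all such editions so as to minimize the number of pairs of $F$ incident to $V(B^R)$, and to rewrite it into an equally good edition $F^\star$ satisfying both requirements. Write $H=G\triangle F$ and $H'=H\setminus V(B^R)$ (strictly chordal by heredity). A short argument using Lemma~\ref{lem:homogen-compl-here} and optimality shows that $P$ remains a clique module of $H$: if it did not, then by Lemma~\ref{lem:homogen-compl-here} all edges inside the clique $P$ would be deleted by $F$, and re-adding them would yield a strictly smaller edition (one checks via Theorem~\ref{thm:caracBD} applied to critical clique graphs that strict chordality is preserved). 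Hence $P$ lies in a critical clique $C$ of $H$; put $C'=C\setminus V(B^R)$, a clique module of $H'$ containing $P$. Let $G_1,\dots,G_r$ be the connected components of $B^R$ and $Q_i=N_B(P)\cap V(G_i)$; by Lemma~\ref{lem:1-block} each $Q_i$ is a critical clique, a maximal clique, or meets exactly one maximal clique of $G_i$.

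The rewriting keeps $H'$ unchanged, re-inserts the original graphs $G_1,\dots,G_r$ inside $V(B^R)$, and reconnects them to $H'$ by a single join composition on a set $S$ chosen as follows. If $N_H(C')\cap V(B^R)=\emptyset$, take $S=\emptyset$, i.e. $H^\star=H'\sqcup G_1\sqcup\cdots\sqcup G_r$, which is strictly chordal as a disjoint union of strictly chordal graphs (Theorem~\ref{thm:caracBD}). Otherwise Lemma~\ref{lem:1-block_edition} applies and gives that $C'$ is a critical clique of $H'$ or meets exactly one maximal clique of $H'$; using the standard fact that any maximal clique of $H'$ meeting the clique module $C'$ contains it, $P\subseteq C'$ has the same property whenever $C'$ meets exactly one maximal clique, and then we take $S=P$; in the remaining situation $C'$ is a critical clique of $H'$ and we take $S=C'$. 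In these two cases $H^\star=(H',S)\otimes(\bigcup_i G_i,\bigcup_i Q_i)$ is strictly chordal by Observation~\ref{obs:SCmulticomp}. In all cases the graphs $G_i$ are re-inserted untouched, so $F^\star$ affects no vertex of $V(B^R)\setminus N_B(P)$; and since $N_B(P)=\bigcup_i Q_i$ is glued to the single set $S$, all vertices of $N_B(P)$ have the same neighborhood $S$ in $V(G)\setminus V(B^R)$ in $H^\star$. Hence $F^\star$ meets both requirements of the lemma.

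It remains to check $|F^\star|\le|F|$. Since $H^\star$ and $H$ coincide on $V(G)\setminus V(B^R)$, it suffices to compare the numbers of pairs incident to $V(B^R)$. On the $H^\star$ side this number is $0$ when $S=P$ (there the $G_i$ together with the join $P\times N_B(P)$ reproduce exactly the edges of $G$ incident to $V(B^R)$), it is $|P|\cdot|N_B(P)|$ when $S=\emptyset$, and $|C'\setminus P|\cdot|N_B(P)|$ when $S=C'$. The first two cases are immediate: for $S=\emptyset$ one observes that $N_H(C')\cap V(B^R)=\emptyset$ forces $F$ to delete every one of the $|P|\cdot|N_B(P)|$ edges of $P\times N_B(P)$, so $F$ spends at least as much on pairs incident to $V(B^R)$. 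The case $S=C'$ is the crux: one must show $F$ already spends at least $|C'\setminus P|\cdot|N_B(P)|$ modifications on pairs incident to $V(B^R)$. The ingredients are that, $C$ being a critical clique of $H$, each $v\in N_B(P)$ is in $H$ either adjacent to all of $C$ — hence to all of $C'\setminus P$, which are non-edges of $G$ incident to $V(B^R)$ — or to none of $C$ — hence it loses all of $\{v\}\times P$, which are edges of $G$ incident to $V(B^R)$ — and that each $u\in C'\setminus P$, lying outside $B$ yet not being a true twin of $P$ in $G$ (else $u\in P$), is distinguished from $P$ by some vertex whose modification can be charged to a pair incident to $V(B^R)$; a careful accounting of these charges, using the minimality of the number of pairs of $F$ incident to $V(B^R)$ among optimal editions respecting Lemma~\ref{lem:homogen-compl-here}, yields the bound.

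The main obstacle is precisely this last accounting in the case $S=C'$ — distinguishing, and charging correctly, the vertices of $N_B(P)$ that get detached from $P$ in $H$ against those absorbed into $C'$, while keeping track of the vertices of $C'\setminus P$ — together with the preliminary fact that $P$ stays a clique module of $H$. Everything else is a routine application of the join-composition toolkit of Lemmas~\ref{lem:1-block} and~\ref{lem:1-block_edition} and Observation~\ref{obs:SCmulticomp}.
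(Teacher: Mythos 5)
Your overall strategy matches the paper's: keep $H'=H\setminus V(B^R)$, re-insert the untouched components $G_1,\dots,G_r$ of $B^R$, and reattach them through a single join composition justified by Lemmas~\ref{lem:1-block}, \ref{lem:1-block_edition}, \ref{lem:constructionSC} and Observation~\ref{obs:SCmulticomp}. The difference --- and the problem --- lies in the cost comparison. The paper's proof rests on one small device you do not use: it picks the vertex $x\in N_B(P)$ contained in the \emph{fewest} pairs $(x,b)\in F$ with $b\notin V(B^R)$, sets $N=N_H(x)\setminus V(B^R)$, and branches on whether $P\subseteq N$ or $P\cap N=\emptyset$. The accounting is then immediate: every $y\in N_B(P)$ pays at least as much as $x$ on cross pairs, and $x$ pays at least $|C'\setminus P|$ in the first branch (since $C'\subseteq N$ because $C$ is a module of $H$) and at least $|P|$ in the second, so the uniform rewiring costs no more than $F$.

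Your case split is instead driven by properties of $C'$ (no neighbor in $V(B^R)$ / meets exactly one maximal clique / is a critical clique meeting several), and in the last case you must show that $F$ already spends $|C'\setminus P|\cdot|N_B(P)|$ on pairs incident to $V(B^R)$. This does not follow from the dichotomy you state: a vertex $v\in N_B(P)$ that is detached from $P$ in $H$ pays only $|P|$, which can be strictly smaller than $|C'\setminus P|$, so an edition mixing attached and detached vertices can pay strictly less than $|C'\setminus P|\cdot|N_B(P)|$. You acknowledge this yourself and leave it as ``a careful accounting \dots\ yields the bound''; that accounting is precisely the missing heart of the proof. It can be closed --- either by the paper's minimizer trick, or by an exchange argument showing that if $|C'\setminus P|>|P|$ and some vertex of $N_B(P)$ stays attached, then replacing $H$ by the disjoint union of $H'$ and $G_1,\dots,G_r$ strictly decreases the cost, contradicting optimality --- but as written the proposal has a genuine gap at its central step. (Your cases $S=\emptyset$ and $S=P$ are correct; the latter is even slightly sharper than what the paper does in that sub-case, since it yields zero cost on pairs incident to $V(B^R)$.)
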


\begin{proof}
Let $F$ be an optimal edition of $G$ and let $H= G\triangle F$. Recall that we assume that $F$ satisfies \cref{lem:homogen-compl-here}. 
Let $G_1, G_2, \dots, G_r$ be the connected components of $B^R$ and let $Q_i = N_B(P) \cap V(G_i)$. By \cref{lem:1-block}, for every $i,\ 1 \leq i \leq r$, $Q_i$ is a critical clique, a maximal clique or intersects exactly one maximal clique of $G_i$. Moreover, the graphs $H' = H \setminus V(B^R)$ and $G_1, \dots , G_r $ are strictly chordal by heredity. 
Among the vertices of $Q_1,\dots ,Q_r$, choose one that is in the least number of pairs $(x,b)\in F$ with $b \in V(G) \setminus V(B^R)$, say $x$, and let $N = N_H(x) \setminus V(B^R)$. Observe that since $P$ is a critical clique, $P \cap N = \emptyset$ or $P \subseteq N$. If $P \subseteq N$, let $C$ be the critical clique of $H$ which contains $P$, take $C' = C\setminus V(B^R)$ and observe that $P \subseteq C' \subseteq N$. By \cref{lem:1-block_edition}, $C'$ is a critical clique or intersects exactly one maximal clique of $H'$. By \cref{lem:constructionSC} and \cref{obs:SCmulticomp}, the following graph is is strictly chordal: $$H^* = (\bigcup_{1 \leq i \leq r}G_i, \bigcup_{1 \leq i \leq r}Q_i) \otimes (H', C')$$ If $P \cap N = \emptyset$, let $H^*$ be the strictly chordal graph corresponding to the disjoint union of $H'$ and the graphs $G_1, \dots, G_r$. Let $F^*$ be such that $H^* = G \triangle F^*$, in both cases by construction $|F^*| \leq |F|$ and the desired properties are verified. 
\end{proof}

\begin{figure}
    \centering
    \includegraphics[width=7.5cm]{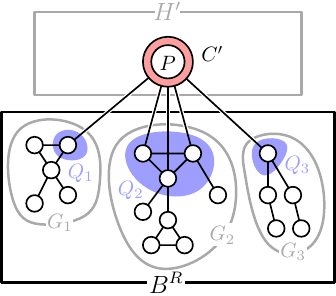}
    \caption{Illustration of the proof of \cref{lem:1-SC-branchE} for $r=3$. The graph $H^* = (G_1 \cup G_2 \cup G_3, Q_1 \cup Q_2 \cup Q_3) \otimes (H', C')$  is strictly chordal by \cref{lem:constructionSC}.
    }
    \label{fig:lemme-1-SC-branchC}
\end{figure}

\begin{polyrule} \label{rule:1-SC-branchE}
Let $(G=(V,E),k)$ be an instance of \SCE{}. If $G$ contains a $1$-SC-branch $B$ with attachment point $P$, then remove from $G$ the vertices of $V(B^R)$ and add a clique $K$ of size $min\{|N_B(P)|, k+1\}$ adjacent to $P$.
\end{polyrule}

\begin{lemma}
\cref{rule:1-SC-branchE} is safe.
\end{lemma}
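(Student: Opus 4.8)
The plan is to prove the equivalence ``$(G,k)$ is a yes-instance of \SCE{} if and only if the reduced instance $(G',k)$ is''. Write $K$ for the clique added by the rule, so $|K|=t:=\min\{|N_B(P)|,k+1\}$. First I would record the elementary structural facts that drive everything: since $B$ is a $1$-SC-branch, $B^R=B\setminus P$; as $P$ is a critical clique (hence a clique module), $N_B(P)\times P\subseteq E(G)$ and $N_G(V(B^R))\subseteq V(B^R)\cup P$, so $N_B(P)$ reaches the rest of $G$ only through $P$; and $B':=G'[P\cup K]$ is a $1$-SC-branch of $G'$ with attachment point $P$, $B'^R=G'[K]$ a clique on at most $k+1$ vertices, and $N_{B'}(P)=K$. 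Both directions then follow the same template: start from a \emph{structured} optimal edition provided by Lemma~\ref{lem:1-SC-branchE} (and the explicit construction in its proof), detach the branch, reuse that edition on the common vertex set $V\setminus V(B^R)$, re-attach the other branch by a join composition certified strictly chordal via Lemma~\ref{lem:constructionSC} and Observation~\ref{obs:SCmulticomp}, and finally compare the two budgets.

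For the forward direction, I would take $F$ as in Lemma~\ref{lem:1-SC-branchE}: $G[V(B^R)]$ is left untouched and, writing $H=G\triangle F$ and $H'=H\setminus V(B^R)$, either \textbf{(b)} $H=(\bigcup_i G_i,\bigcup_i Q_i)\otimes(H',N)$ with $P\subseteq N$ and $N$ a critical clique of $H'$ or intersecting exactly one maximal clique of it, the pairs of $F$ meeting $V(B^R)$ being exactly $N_B(P)\times(N\setminus P)$; or \textbf{(b')} $H$ is the disjoint union of $B^R$ with $H'$, those pairs being the deletions $N_B(P)\times P$. I would then set $F'$ to be the pairs of $F$ inside $V\setminus V(B^R)$, together with $K\times(N\setminus P)$ in case (b) and $K\times P$ in case (b'). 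Then $G'\triangle F'$ equals $(G'[K],K)\otimes(H',N)$ (resp.\ the disjoint union of the clique $K$ with $H'$), which is strictly chordal by Lemma~\ref{lem:constructionSC}, and since $|K|=t\leq|N_B(P)|$ the added cost is at most that removed, so $|F'|\leq|F|\leq k$.

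For the reverse direction, I would apply Lemma~\ref{lem:1-SC-branchE} to $G'$ and $B'$, getting an optimal $k$-edition $F'$ with $H'=G'\triangle F'$ and $H''=H'\setminus K$ such that either \textbf{(i)} $H'=(G'[K],K)\otimes(H'',N')$ with $P\subseteq N'$ and $N'$ a critical clique of $H''$ or intersecting exactly one maximal clique of it, the pairs of $F'$ meeting $K$ being $K\times(N'\setminus P)$; or \textbf{(ii)} $H'$ is the disjoint union of the clique $K$ with $H''$, those pairs being $K\times P$. I would reconstruct $F$ as the pairs of $F'$ inside $V\setminus V(B^R)$, plus $N_B(P)\times(N'\setminus P)$ in case (i) and nothing in case (ii). Then $H:=G\triangle F$ equals $(H'',N')\otimes\big((G_i,Q_i)_i\big)$ (resp.\ the disjoint union of $B^R$ with $H''$); it is strictly chordal by Lemma~\ref{lem:1-block} (each $Q_i$ is a critical clique, a maximal clique, or meets exactly one maximal clique of $G_i$), Lemma~\ref{lem:constructionSC} and Observation~\ref{obs:SCmulticomp}.

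The hard part will be the budget bookkeeping in the reverse direction when $t=k+1<|N_B(P)|$: re-attaching the whole set $N_B(P)$ to $N'\setminus P$ costs $|N_B(P)|\cdot|N'\setminus P|$, which a priori exceeds the cost $|K|\cdot|N'\setminus P|$ paid on $G'$. The observation that resolves this is that whenever $|K|=k+1$, having $N'\supsetneq P$ in case (i), or being in case (ii) at all, already forces $|F'|\geq|K|=k+1>k$, a contradiction; hence in that regime $N'=P$ and the re-attachment is free, whereas if $t<k+1$ then $t=|N_B(P)|$ and the multiplicities match exactly. In every case the re-attachment cost equals the number of pairs of $F'$ meeting $K$, so $|F|=|F'|\leq k$, and symmetrically $|F'|\leq|F|$ in the forward direction, which completes the proof.
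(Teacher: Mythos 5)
Your proof follows essentially the same route as the paper's: normalize the edition via Lemma~\ref{lem:1-SC-branchE}, detach the branch, transfer the edition to the common vertex set, and re-attach by join composition using Lemmas~\ref{lem:1-block}, \ref{lem:1-block_edition}, \ref{lem:constructionSC} and Observation~\ref{obs:SCmulticomp}, with the same key budget observation that $|K|=k+1$ makes it impossible for a $k$-edition to affect $K$ (resp.\ $N_B(P)$). One small slip that does not change the argument: in case (ii) of the reverse direction you must also place the deletions $N_B(P)\times P$ into $F$ to actually obtain the claimed disjoint union of $B^R$ and $H''$, and their cost equals that of $K\times P$ precisely because, as you note, this case forces $|K|=|N_B(P)|$.
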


\begin{proof}
Let $G'$ be the graph obtained from an application of \cref{rule:1-SC-branchE} on $1$-SC-branch $B$ of graph $G$ with attachment point $P$. We denote by $K$ be the clique that replaced $B$. 
Let $F$ be a $k$-edition of $G$ and $H = G \triangle F$. By \cref{lem:1-SC-branchE} we can assume that $F$ only affects vertices of $P \cup N_B(P)$ and all vertices of $N_B(P)$ are adjacent to the same vertices of $V(G) \setminus V(B^R)$. This implies that if $|N_B(P)| > k$ no $k$-edition can affect these vertices. Let $C$ be the critical clique of $H$ which contains $P$ and $C' = C\setminus V(B^R)$. By \cref{lem:1-block_edition}, $C'$ is a critical clique or intersects exactly one maximal clique of $H_a = H \setminus V(B^R)$. If $N_B(P)$ is adjacent to $C'$ in $H$, consider the graph $H' = (H_a, C') \otimes (K,K)$. Since $K$ is clique, by \cref{lem:constructionSC}, $H'$ is strictly chordal. Else, consider the strictly chordal graph $H'$ corresponding to the disjoint union of $H_a$ and $K$. Let $F'$ be the edition such that $H' = G' \triangle F'$. Since $|K| = min\{|N_B(P)|,k+1\}$, by construction we have $|F| = |F'|$, hence $F'$ is a $k$-edition of $G'$.

The other way is almost symmetrical, but we need to be careful that in this case, $B^R$ may contain several connected components. 
Let $F'$ be a $k$-edition of $G'$ and $H' = G' \triangle F'$. By \cref{lem:1-SC-branchE} we can assume that $F'$ only affects vertices of $P \cup K$ and all vertices of $K$ are adjacent to the same vertices of $V(G) \setminus K$. This implies that, if $|K| > k$, no $k$-edition can affect these vertices. Let $C$ be the critical clique of $H'$ which contains $P$ and $C' = C\setminus V(B^R)$. By \cref{lem:1-block_edition}, $C'$ is a critical clique or intersects exactly one maximal clique of $H_a' = H' \setminus K$. 
Let $G_1, G_2, \dots, G_r$ be the connected components of $B^R$ and let $Q_i = N_B(P) \cap V(G_i)$. By \cref{lem:1-block}, for every $i,\ 1 \leq i \leq r$, $Q_i$ is a critical clique, a maximal clique or intersects exactly one maximal clique of $G_i$.

If the clique $K$ is adjacent to $C'$ in $H'$, consider the graph $H = (H_a', C') \otimes (\bigcup_{1 \leq i \leq r}G_i, \bigcup_{1 \leq i \leq r}Q_i)$. By \cref{lem:constructionSC} and \cref{obs:SCmulticomp}, $H'$ is strictly chordal. Else, consider the strictly chordal 
graph $H$ corresponding to the disjoint union of $H_a'$ and $G_i$ for $1 \leq i \leq r $. Let $F$ be the edition such that $H = G \triangle F$, since $|K| = min\{|N_B(P)|,k+1\}$, by construction we have $|F| = |F'|$, hence $F$ is a $k$-edition of $G$.
\end{proof}

We now show a result that will allow us to define a rule bounding the number of $1$-SC-branches sharing the same neighborhood. In particular, if enough $1$-SC-branches share a neighborhood $N$, they force any edition of size at most $k$ to transform $N$ into a clique.

\begin{lemma}  \label{lem:voisinage-communBBE}
Let $(G=(V,E),k)$ be a yes-instance of \SCE{} reduced by \cref{rule:borneCC}. Let $B_1,\dots , B_l$ be disjoint $1$-SC-branches of $G$ with attachment points $P_1,\dots ,P_l$ which have the same neighborhood $N$ in $G\setminus \bigcup_{i=1}^l V(B_i)$ and form a disjoint union of cliques $Q_1,\dots Q_r$ in $G[P_1\cup\dots \cup P_l]$. If $\Sigma_{i=1}^l |P_i| > 2k+1$ then for every $k$-edition $F$ of $G$:

\begin{enumerate}[(i)]
    \item\label{vcBBE:i} If $r=1$, $N$ is a clique of $H=G \triangle F$ and $N$ is a critical clique, a maximal clique or intersects exactly one maximal clique of $H' = H\setminus \bigcup_{i=1}^l V(B_i)$,
    \item\label{vcBBE:ii} If $r>1$  and $(\Sigma_{i=1}^l |P_i|) - \max_{1\leq j \leq r}\{|Q_j|\} > k$ then $N$ is a critical clique of $H=G \triangle F$ and $N$ is a critical clique or intersects exactly one maximal clique of $H' = H\setminus \bigcup_{i=1}^l V(B_i)$. 
\end{enumerate}
\end{lemma}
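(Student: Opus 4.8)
The plan is to fix an arbitrary $k$-edition $F$ of $G$, set $H=G\triangle F$, and use that $H$ is strictly chordal and that $F$ affects at most $2k$ vertices; the whole argument is a contradiction against Theorem~\ref{thm:caracBD}, i.e.\ against the facts that $H$ has no induced dart, gem or hole and that $\C(H)$ has no induced diamond. The engine is a counting step producing conveniently placed \emph{unaffected} attachment vertices: since $\sum_i|P_i|>2k+1$ there are at least two unaffected vertices in $\bigcup_iP_i$, and any unaffected $p\in P_i$ satisfies $N_H[p]=N_G[p]$, so it is adjacent in $H$ to all of $N$, adjacent to all of $N_{B_i}(P_i)$, and has no other $H$-neighbour except inside $V(B_i)$, inside $N$, or among the attachment points lying in the same clique $Q_j$ as $P_i$. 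For part~(ii) I also invoke that the instance is reduced by Rule~\ref{rule:borneCC} (so every $P_i$ has at most $k+1$ vertices) together with the sharper hypothesis $\sum_i|P_i|-\max_j|Q_j|>k$ to control where the affected attachment vertices can sit, so as to obtain (or else reduce to) an unaffected $p$ and an unaffected $q$ lying in two distinct cliques $Q_a\neq Q_b$, hence non-adjacent in $G$ and in $H$, both still complete to $N$ in $H$.

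I would prove part~(ii) first, since the non-adjacent pair $p,q$ makes the forbidden configurations come out cleanly. \emph{$N$ is a clique of $H$:} if $n_1n_2\notin E(H)$ for $n_1,n_2\in N$ then $p$--$n_1$--$q$--$n_2$--$p$ is an induced $C_4$. \emph{$N$ is a module of $H$:} if some $v\notin N$ has $vn_1\in E(H)$ and $vn_2\notin E(H)$ for $n_1,n_2\in N$, then, according to whether $v$ is adjacent to both, to exactly one, or to neither of $p,q$, the set $\{v,n_1,n_2,p,q\}$ contains an induced $C_4$, gem, or dart. \emph{$N$ is maximal}, hence a critical clique of $H$: if $N\cup\{w\}$ were a strictly larger clique module of $H$ then $w$ would be adjacent in $H$, hence in $G$ (the relevant attachment vertices being unaffected), to every unaffected attachment vertex, and a case distinction on where $w$ lies --- inside some branch $B_m$, or outside every branch but then missing some $P_c$ in $G$ --- contradicts the very definition of $N$. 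Finally the claim about $H'=H\setminus\bigcup_iV(B_i)$ is obtained exactly as in Lemma~\ref{lem:1-block_edition}: $N$ is a clique module of $H$, hence of $H'$, and if it were neither a critical clique of $H'$ nor contained in a single maximal clique of $H'$, one picks $u\in N$ and $v\notin N$ in a common critical clique of $H'$ and non-adjacent $x,y$ adjacent to both, and combines $u,v,x,y$ with an unaffected attachment vertex (playing the role of the vertex ``$w$'' of that proof) to produce a dart, a gem, or a $C_4$ in $H$.

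For part~(i) the hypothesis is weaker --- $r=1$ forces all attachment points to be pairwise adjacent in $G$, so there is no pair of non-adjacent unaffected attachment vertices --- and accordingly one only claims that $N$ is a clique (not a critical clique) of $H$. Assuming $n_1n_2\notin E(H)$, any two unaffected attachment vertices $p,q$ are now adjacent in $H$, so $\{p,q,n_1,n_2\}$ induces a diamond; the task is to promote this diamond to a forbidden configuration using the block structure of a branch. If $p$ and $q$ lie in distinct critical cliques of $H$, the diamond lifts to an induced diamond of $\C(H)$, contradicting Theorem~\ref{thm:caracBD}; if they lie in the same critical clique $\widehat P$ --- which, $p$ being unaffected, has its neighbourhood confined by the branch structure --- one extends the configuration, using the branch $B_a$ or a second branch (which exists since $\sum_i|P_i|>2k+1$ while each $P_i$ has at most $k+1$ vertices), into an induced dart, gem, hole, or a diamond of $\C(H)$; the sub-case where the relevant $B^R$ is empty is handled by falling back to that second branch. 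The $H'$-part of~(i) is identical to that of~(ii).

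The delicate point --- and the reason the two size hypotheses have exactly the stated form --- lies in the counting together with the degenerate cases: in part~(ii) one must extract a pair of attachment vertices in distinct cliques $Q_j$ that is ``unaffected enough'' to witness the forbidden subgraphs, which is why the plain bound $2k+1$ is insufficient and one has to subtract the sacrificial clique $\max_j|Q_j|$, while the case where all unaffected attachment vertices happen to fall in a single $Q_j$ needs its own treatment; in part~(i) one instead has to work around the absence of such a pair via the branches, so the cases $B^R=\emptyset$ and ``the witness $v$ lies inside a branch'' require separate care. Throughout, one must verify that every $4$- and $5$-vertex configuration produced is genuinely \emph{induced} (no accidental chord, e.g.\ between $v$ and $p$, or between two vertices of $N$), which is precisely what the properties of unaffected vertices recorded in the first step guarantee.
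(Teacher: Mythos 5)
Your overall strategy is the paper's: build an induced dart, gem or $C_4$ out of vertices of $N$ together with well-chosen attachment vertices, and your treatment of part~(ii) (the $C_4$/gem/dart trichotomy according to the adjacency of the offending vertex $v$ to a non-adjacent pair $p,q$ of attachment vertices, and the reduction of the $H'$-claim to the argument of Lemma~\ref{lem:1-block_edition}) matches the paper's proof. However, there are two concrete gaps. First, your ``engine'' extracts the witnesses by the crude count ``$\sum_i|P_i|>2k+1$ and at most $2k$ vertices are affected, hence two unaffected attachment vertices exist,'' and for~(ii) you need them in \emph{distinct} cliques $Q_a\neq Q_b$; this is not guaranteed by the stated hypotheses, since all unaffected attachment vertices may lie in a single $Q_j$ (the other cliques can be entirely affected by pairs internal to them). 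The paper argues contrapositively: assuming $N$ is not a clique (resp.\ not a module), every cross-pair of unaffected attachment vertices yields a $C_4$ (resp.\ dart/gem), so all cliques but one must be fully affected, and the hypothesis $(\sum_i|P_i|)-\max_j|Q_j|>k$ is then used to contradict $|F|\leq k$. The analogous sharper count is needed in~(i) as well, where the paper repeatedly uses that each ``bad'' candidate among a set of more than $k$ attachment vertices outside a fixed $P_t$ is charged a private pair of $F$.

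Second, and more seriously, your central construction for part~(i) does not close the case you yourself flag as delicate. When the two unaffected attachment vertices $p,q$ lie in the same critical clique of $H$ (which forces them into the same $P_a$, and which can happen when all unaffected attachment vertices fall into one $P_a$), they are true twins of $H$, so \emph{no} vertex of $H$ is adjacent to exactly one of them; the diamond $\{p,q,n_1,n_2\}$ therefore cannot be extended to a dart or gem through $p$ or $q$, and adding a common neighbour $z$ of $p,q$ that misses $n_1,n_2$ yields the graph $K_2\vee\overline{K_3}$, which is strictly chordal, so no contradiction arises. The paper's construction is different: it picks $u\in P_t$ that retains in $H$ a neighbour $z\in N_{B_t}(P_t)$ \emph{inside its branch}, and a second attachment vertex $v\in Q_1\setminus P_t$ from a \emph{different} branch, unaffected with respect to $\{x,y,z,u\}$; then $z$ is adjacent to $u$ but not to $v$, and $\{x,y,u,v,z\}$ is a dart, a gem, or contains a $C_4$. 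The existence of such $u$ and $v$ rests on $|Q_1\setminus P_j|>k$ and $|Q_1\setminus P_t|>k$ (via Rule~\ref{rule:borneCC} and the observation that at most one $P_j$ has $N_{B_j}(P_j)=\emptyset$), not on the mere existence of two unaffected attachment vertices. Your ``diamond in $\C(H)$'' alternative is valid in the sub-case where the two unaffected attachment vertices fall into distinct critical cliques of $H$, but the remaining true-twin sub-case requires the paper's two-branch construction, which your sketch does not supply.
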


\begin{proof} We prove the result by considering each item in separate cases. 
\paragraph{Case~(\ref{vcBBE:i}) $r = 1$:} We first prove that $N$ has to be clique in $H$. Suppose that $N$ is not a clique in $H$, then there exist two vertices $x,y \in N$ such that $xy \notin E(H)$. The attachment points are critical cliques, so by \cref{rule:borneCC}, $|P_i| \leq k+1, i\in \{1, \dots , l\}$.  We can observe that  there is at most one critical clique $P_j$ such that $N_{B_j}(P_j) = \emptyset$. Indeed, since $r=1$ all attachment points lie in the same clique in $G[P_1\cup\dots \cup P_l]$, we have that $N_G[P_i]\setminus N_{B_i}(P_i) = N \cup P_1\cup\dots \cup P_l$, hence there is only one such $P_j$. It follows that $|Q_1 \setminus P_j|  > k$, thus there exists a vertex $u$ in some $P_t, t\neq j$ such that in $H$, $u$ is adjacent to a vertex $z \in N_{B_t}(P_t)$ and the vertices $x$ and $y$. Since $|Q_1\setminus P_t|  > k$, there exists a vertex $v \in Q_1\setminus P_t$ such that there is no pair of $F$ containing both $v$ and a vertex of $\{x,y,z,u\}$. The set $\{x,y,u,v,z\}$ induces a dart if neither $x$ nor $y$ is adjacent to $z$, a gem if one of them is adjacent to $z$, or if both of them are adjacent to $z$, $\{x, y,  z, v\}$ induces a $C_4$ in $H$, a contradiction. It follows that $N$ has to be a clique.

We now show that $N$ is a critical clique, a maximal clique or intersects exactly one maximal clique of $H'$. Suppose that $N$ is neither a critical clique nor a maximal clique of $H'$ and intersects at least two maximal cliques of $H'$. Suppose first that $N$ is not a module, there exist $u,v\in N$, $x \in V(H') \setminus N$ adjacent to only $u$ or $v$. Since $N$ is not a maximal clique of $H'$, there exists $y \in V(H') \setminus N$ adjacent to $u$ and $v$. Together with a vertex $w\in \bigcup_{i=1}^l P_i$ not affected by $F$, the vertex set $\{u,v,w,x,y\}$ induces in $H$ a gem if $x$ and $y$ are adjacent and a dart else, leading to a contradiction. If $N$ is a module and is not a critical clique, $N$ is not a maximal clique module, thus there exist $u \in N$, a vertex $v$ in the same module as $u$ in $H'$ and a vertex $w\in \bigcup_{i=1}^l P_i$ not affected by $F$ such that $vw \notin E(H)$. Since $N$ intersects two maximal cliques of $H'$ there exist two non-adjacent vertices $x,y \in V(H')$ that are adjacent to both $u$ and $v$. The vertex set $\{u,v,w,x,y\}$ induces a dart in $H$ since $w$ is only adjacent to $u$. It remains that if $r=1$, $N$ is a critical clique, a maximal clique or is included in exactly one maximal clique of $H'$.

\paragraph{Case~(\ref{vcBBE:ii}) $r > 1$ and $(\Sigma_{i=1}^l |P_i|) - \max_{1\leq j \leq r}\{|Q_j|\} > k$:} We claim that $N$ is a clique in $H$. Suppose again for a contradiction that this is not the case, then there exist two vertices $x,y \in N$ such that $xy \notin E(H)$. For any pair of vertices $u_i \in Q_i, u_j \in Q_j, i\neq j$, the set $\{x,u_i,y,u_j\}$ induces a $C_4$ in $H$ if neither $u_j$ nor $u_i$ is affected by $F$. It follows that the vertices of $Q_i$ or $Q_j$ must be affected by $F$. This implies that at most one $Q_i$ is unaffected by $F$, and since $(\Sigma_{i=1}^l |P_i|) - \max_{1\leq j \leq r}\{|Q_j|\} > k$ we have $|F| > k$, a contradiction. Hence $N$ has to be a clique in $H$. 

Next we show that $N$ is a critical clique of $H$. Suppose that $N$ is not a module in $H$, then there exist $ x,y \in N$ and $ z \in V(G) \setminus N$ adjacent to only one of the vertices $x$ or $y$, say w.l.o.g. $x$. For any pair of vertices $u_i \in Q_i, u_j \in Q_j, i\neq j$, if neither $u_i$ nor $u_i$ is affected by $F$, then at least one of them is not adjacent to $z$ and the set $\{x,y,u_i,u_j,z\}$ induces a dart if neither $u_i$ nor $u_j$ is adjacent to $z$ or a gem if one of them is adjacent to $z$. This implies that at most one $Q_i$ is unaffected by $F$, as before we have  $|F| > k$, a contradiction. Hence $N$ has to be a module in $H$. 
If $N$ is not a critical clique of $H$, then it is strictly contained in some critical clique $N'$. A vertex $x \in N' \setminus N$ must have the same closed neighborhood than $N$ in $H$. If $x$ is not in some $P_i$ or $N_{B_i}(P_i)$, then $F$ must affect all vertices of $P_1 \cup \dots \cup P_l$, implying $|F| > k$, a contradiction. 
If $x$ is in some $P_i$ or $N_{B_i}(P_i)$, then  $F$ must affect all vertices of every $Q_j$ such that $P_i\nsubseteq Q_j$. Since $(\Sigma_{i=1}^l |P_i|) - \max_{1\leq j \leq r}\{|Q_j|\} > k$, this implies $|F| > k$, a contradiction. It remains that $N$ is a critical clique in $H$. 

Finally, since $N$ is a module in $H'$, if it is not a critical clique, there exist $u \in N$, a vertex $v$ in the same critical clique as $u$ in $H'$ and a vertex $w\in \bigcup_{i=1}^l P_i$ not affected by $F$ such that $vw \notin E(H)$. If $N$ intersects at least two maximal cliques of $H'$ there exist two non-adjacent vertices $x,y \in V(H')$ that are adjacent to both $u$ and $v$. The vertex set $\{u,v,w,x,y\}$ induces a dart in $H$ since $w$ is only adjacent to $u$. It remains that $N$ is a critical clique or intersects exactly one maximal clique of $H'$.
\end{proof}

\begin{polyrule}\label{rule:voisinage_communBBE}
Let $(G=(V,E),k)$ be an instance of \SCE{}. Let $B_1,\dots , B_l$ be disjoint $1$-SC-branches of $G$ with attachment points $P_1,\dots ,P_l$ which have the same neighborhood $N$ in $G\setminus \bigcup_{i=1}^l V(B_i)$ and form a disjoint union of cliques $Q_1,\dots Q_r$ in $G[P_1\cup\dots \cup P_l]$. Assume that $\Sigma_{i=1}^l |P_i| > 2k+1$, then :
\begin{itemize}
    \item If $r=1$, remove the vertices $\bigcup_{i=1}^l V(B_i)$, add two adjacent cliques $K$ and $K'$ of size $k+1$ with neighborhood $N$ and a vertex $u_K$ adjacent to every vertex of $K$.
    \item If $r>1$ and $(\Sigma_{i=1}^l |P_i|) - \max_{1\leq j \leq r}\{|Q_j|\} > k$, remove the vertices $\bigcup_{i=1}^l V(B_i)$ and add two non-adjacent cliques of size $k+1$ with neighborhood $N$.
\end{itemize}
\end{polyrule}

\begin{lemma}
\cref{rule:voisinage_communBBE} is safe.
\end{lemma}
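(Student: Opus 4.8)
\emph{Strategy.} We show that $(G,k)$ and $(G',k)$ are equivalent instances. Set $\mathcal B=\bigcup_{i=1}^{l}V(B_i)$ and $G_0=G\setminus\mathcal B$. The rule deletes exactly $\mathcal B$ and attaches the new gadget only to $N\subseteq V(G_0)$, so $G_0$ is a common induced subgraph of $G$ and $G'$; for an edition of either graph, the subset of its pairs lying inside $V(G_0)$ is an edition of $G_0$. Each $P_i$ is a critical clique of $G$ whose neighbourhood outside $\mathcal B$ is $N$, so $N$ is complete to $\bigcup_iP_i$ and the only edges between $\mathcal B$ and $G_0$ run between $\bigcup_iP_i$ and $N$; hence $G=(G_0,N)\otimes(G[\mathcal B],\bigcup_iP_i)$, and likewise $G'=(G_0,N)\otimes(D',W')$, where $D'$ is the graph induced by the gadget (including $u_K$ when $r=1$) and $W'$ is $K\cup K'$ (case $r=1$) or the union of the two added cliques (case $r>1$). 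The plan is: from a $k$-edition $F$ of $G$ (resp.\ of $G'$), keep only its pairs $F_0$ inside $V(G_0)$; Lemma~\ref{lem:voisinage-communBBE} guarantees that $N$, in $G_0\triangle F_0$, meets exactly the hypothesis on the anchor set required by Lemma~\ref{lem:constructionSC} (case $r=1$) or by Observation~\ref{obs:SCmulticomp} (case $r>1$), and then that lemma / observation re-glues the missing side of the decomposition at no extra cost.

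\emph{The branch side is strictly chordal with a maximal-clique anchor.} Since each $P_i$ is a critical clique and the instance is reduced by Rule~\ref{rule:borneCC}, $|P_i|\leq k+1$, so $\Sigma_i|P_i|>2k+1$ forces $l\geq 2$. Each $B_i$ is an SC-branch, so the critical cliques of $G$ inside $V(B_i)$ induce a connected block graph in $\C(G)$; because the branches touch $G_0$ and one another only through the attachment-point cliques, a critical clique $K\neq P_i$ of $G$ inside $B_i$ keeps all its neighbours in $V(B_i)$, while $P_i$ gains in $G[\mathcal B]$ the neighbours $P_j\subseteq V(B_j)$ for $j\neq i$ (using $l\geq 2$); one checks that consequently no critical clique of $G$ inside $\mathcal B$ is merged in $G[\mathcal B]$, so $\C(G[\mathcal B])$ is the disjoint union of the block graphs $\C(G)$ restricted to each $B_i$ with every set $\{P_i:P_i\subseteq Q_j\}$ turned into a clique. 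Gluing block graphs along a clique of cut-nodes yields a block graph, hence $\C(G[\mathcal B])$ is a block graph and $G[\mathcal B]$ is strictly chordal by Theorem~\ref{thm:caracBD}; moreover $\{P_i:P_i\subseteq Q_j\}$ is a maximal clique of $\C(G[\mathcal B])$, so $Q_j$ is a maximal clique of the corresponding connected component of $G[\mathcal B]$ (of all of $G[\mathcal B]$ when $r=1$).

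\emph{The two directions.} Let $F$ be a $k$-edition of $G$ and $H=G\triangle F$. Then $H\setminus\mathcal B=G_0\triangle F_0$ is strictly chordal by heredity, and by Lemma~\ref{lem:voisinage-communBBE} the set $N$ is, in $G_0\triangle F_0$, a critical clique, a maximal clique, or intersects exactly one maximal clique (case $r=1$), resp.\ a critical clique or intersects exactly one maximal clique (case $r>1$). Since $D'$ is strictly chordal with $W'$ a maximal clique of it (case $r=1$), resp.\ is a disjoint union of two cliques each taken as its own anchor (case $r>1$), Lemma~\ref{lem:constructionSC}, resp.\ Observation~\ref{obs:SCmulticomp}, shows that $G'\triangle F_0=(G_0\triangle F_0,N)\otimes(D',W')$ is strictly chordal; as $|F_0|\leq|F|\leq k$, $(G',k)$ is a yes-instance. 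Conversely, let $F'$ be a $k$-edition of $G'$ and $F'_0$ its pairs inside $V(G_0)$. The gadget of $G'$ is a family of disjoint $1$-SC-branches with common neighbourhood $N$ outside them fitting the hypotheses of Lemma~\ref{lem:voisinage-communBBE}: when $r=1$, these are $G'[\{u_K\}\cup K]$ (attachment point $K$, nonempty interior $\{u_K\}$) and $G'[K']$, forming one clique ($r'=1$) with $\Sigma_i|P'_i|=2k+2>2k+1$; when $r>1$, the two added cliques are $1$-SC-branches forming two cliques ($r'=2$), with $\Sigma_i|P'_i|=2k+2>2k+1$ and $\Sigma_i|P'_i|-\max_j|Q'_j|=k+1>k$. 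The only use of Rule~\ref{rule:borneCC} in the proof of Lemma~\ref{lem:voisinage-communBBE} is, in the case $r'=1$, to bound attachment-point cliques by $k+1$, which holds here by construction; hence $N$ meets, in $G_0\triangle F'_0$, the hypothesis required on the anchor, and Lemma~\ref{lem:constructionSC}, resp.\ Observation~\ref{obs:SCmulticomp}, applied to $(G_0\triangle F'_0,N)\otimes(G[\mathcal B],\bigcup_iP_i)$ — the branch side being handled by the previous step — shows $G\triangle F'_0$ is strictly chordal; as $|F'_0|\leq|F'|\leq k$, $(G,k)$ is a yes-instance.

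\emph{Main obstacle.} The technical heart is the second step: verifying that joining the SC-branches through the attachment-point cliques preserves the block-graph structure of their critical clique graphs (no merging of critical cliques, anchors staying maximal cliques), where $\Sigma_i|P_i|>2k+1$ — forcing $l\geq 2$ via Rule~\ref{rule:borneCC} — is exactly what makes the argument go through. The rest is a routine combination of Lemmas~\ref{lem:voisinage-communBBE} and~\ref{lem:constructionSC} with Observation~\ref{obs:SCmulticomp} once everything is packaged as join compositions, the only additional care being to read the gadget of $G'$ as $1$-SC-branches meeting the hypotheses of Lemma~\ref{lem:voisinage-communBBE}.
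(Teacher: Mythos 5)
Your proof is correct and follows essentially the same route as the paper's: restrict the edition to $G\setminus\bigcup_i V(B_i)$, invoke Lemma~\ref{lem:voisinage-communBBE} to control how $N$ sits in the edited remainder, and re-glue the other side (the gadget, respectively the original branches) via Lemma~\ref{lem:constructionSC} and Observation~\ref{obs:SCmulticomp}; you are in fact more explicit than the paper in verifying that the gadget itself satisfies the hypotheses of Lemma~\ref{lem:voisinage-communBBE}. One small overclaim: when some $Q_j$ consists of a single attachment point $P_i$ (possible for $r>1$ even though $l\geq 2$), $P_i$ may merge with a neighbour inside $B_i$, so $Q_j$ need not be a maximal clique of its component of $G[\mathcal B]$ and critical cliques can be merged there --- but $Q_j$ is then a critical clique of that component or meets exactly one of its maximal cliques, which is all that Lemma~\ref{lem:constructionSC} requires, so the argument goes through.
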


\begin{proof}
Let $G'$ be the graph obtained after the application of \cref{rule:voisinage_communBBE} on $B_1,\dots , B_l$, and $S$ the new vertices introduced by this rule. Notice that the subgraph $G_S' = G'[S]$ is strictly chordal and form two $1$-SC-branches that satisfy the conditions of \cref{lem:voisinage-communBBE}. First, let $F$ be a $k$-edition of $G$, $H = G \triangle F$ and $H_a = H\setminus \bigcup_{i=1}^l V(B_i)$. By \cref{lem:voisinage-communBBE} $N$ is a critical clique, a maximal clique (unless $r>1$) or intersects exactly one maximal clique of $H_a$. Therefore, by \cref{lem:constructionSC} and \cref{obs:SCmulticomp} the graph $H' = (H_a, N) \otimes (G_S', K\cup K')$ is strictly chordal and the edition $F'$ such that $H' = G' \triangle F'$ is of size at most $|F| \leq k$. 

Conversely, let $F'$ be a $k$-edition of $G'$, $H' = G' \triangle F'$ and $H_a' = H'\setminus S$. Notice that the subgraph $G_b = G[\bigcup_{i=1}^l V(B_i)]$ is strictly chordal. By \cref{lem:voisinage-communBBE} $N$ is a critical clique, a maximal clique (unless $r>1$) or intersects exactly one maximal clique of $H_a'$. Let $Q_1, \dots , Q_r$ the cliques of $G_b$ adjacent to $N$, using arguments similar to the ones of \cref{lem:1-block}, it is easy to see that each $Q_i$ is critical clique, a maximal clique or intersects exactly one maximal clique of $G_b$. Therefore, by \cref{lem:constructionSC} and \cref{obs:SCmulticomp} the graph $H = (H_a', N) \otimes (G_b, Q_1 \cup \dots \cup Q_r)$ is strictly chordal and the edition $F$ such that $H = G \triangle F$ is of size at most $|F'| \leq k$. 
\end{proof}

\subsection{Reducing the 2-SC-branches}

Let $B$ be a $2$-SC-branch of a graph $G$ reduced by \cref{rule:1-SC-branchE},  with attachment points $P_1$ and $P_2$. We say that $B$ is \emph{clean} if $B^R$ is connected, and that the \emph{length} of a clean $2$-SC-branch is the length of a shortest path between the two attachment points in $\C(B)$. We say that the \emph{min-cut} of $B$ is a set of edges $M \subseteq E(B)$ of minimum size such that $P_1$ and $P_2$ are not in the same connected component of $B-M$ and $mc(B)$ is the size of a min-cut of $B$.
We can observe that a min-cut of $B$ contains the edges between a pair of consecutive critical cliques of a shortest path between $P_1$ and $P_2$ and the edges between one of these critical cliques and the critical cliques they have in their common neighborhood. The following observation will be useful to reduce $2$-SC-branches. 

\begin{observation}\label{obs:2ed}
Let $F$ be an optimal edition of $G$ and $F_1 \subseteq F$. If $F_2$ is an optimal edition of the graph $G\triangle F_1$ , then $F_1 \cup F_2$ is an optimal edition of $G$.
\end{observation}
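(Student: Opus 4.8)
The plan is to reduce everything to the elementary toggling identity for edge modification: for any graph $G$ and any sets of pairs $X,Y\subseteq V\times V$ one has $(G\triangle X)\triangle Y = G\triangle (X\triangle Y)$, because editing by a set of pairs just flips the edge/non-edge status of each pair, and flipping twice is the identity; moreover $X\triangle Y = X\cup Y$ whenever $X\cap Y=\emptyset$. With this in hand the statement follows from three short observations.

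First I would use that $F_1\subseteq F$: then $F_1$ and $F\setminus F_1$ are disjoint and $F = F_1\cup (F\setminus F_1) = F_1\triangle (F\setminus F_1)$, so $(G\triangle F_1)\triangle (F\setminus F_1) = G\triangle F$ lies in the target class. Hence $F\setminus F_1$ is an edition of $G\triangle F_1$ of size $|F|-|F_1|$, and since $F_2$ is an \emph{optimal} edition of $G\triangle F_1$ we get $|F_2|\leq |F|-|F_1|$. Next, $(G\triangle F_1)\triangle F_2 = G\triangle (F_1\triangle F_2)$ also lies in the target class, so $F_1\triangle F_2$ is an edition of $G$; optimality of $F$ forces $|F_1\triangle F_2|\geq |F|$. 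Finally I would combine the two bounds: $|F_1\triangle F_2| = |F_1|+|F_2| - 2\,|F_1\cap F_2| \leq |F| - 2\,|F_1\cap F_2|$, and together with $|F_1\triangle F_2|\geq |F|$ this forces $F_1\cap F_2=\emptyset$. Consequently $F_1\triangle F_2 = F_1\cup F_2$, its size is $|F_1|+|F_2|=|F|$, and being an edition of $G$ of minimum size it is an optimal edition of $G$.

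The argument is essentially bookkeeping, so there is no serious obstacle; the one point that needs care — and where a careless proof would go wrong — is that $F_1$ and $F_2$ need not be disjoint \emph{a priori}, so $F_1\cup F_2$ cannot be immediately read off as the graph $(G\triangle F_1)\triangle F_2$. The key realization is that optimality of $F$ leaves no slack for cancellation: any pair lying in $F_1\cap F_2$ would turn $F_1\triangle F_2$ into a strictly smaller edition of $G$, contradicting the minimality of $|F|$. Once disjointness is established, $F_1\cup F_2$ and $F_1\triangle F_2$ coincide and everything else is the identity $(G\triangle X)\triangle Y = G\triangle (X\triangle Y)$.
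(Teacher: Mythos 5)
Your proof is correct. The paper states this as an observation and gives no proof at all, so there is no authorial argument to compare against; your write-up supplies the justification the authors treated as immediate. You also correctly isolate the one genuinely non-obvious point --- that $F_1$ and $F_2$ need not be disjoint a priori, so $F_1\cup F_2$ is not directly the edit set realizing $(G\triangle F_1)\triangle F_2$ --- and your resolution via the chain $|F_1|+|F_2|\leq |F|\leq |F_1\triangle F_2| = |F_1|+|F_2|-2|F_1\cap F_2|$ is exactly the right bookkeeping: it forces $F_1\cap F_2=\emptyset$ and equality of sizes simultaneously, from which optimality of $F_1\cup F_2$ follows.
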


\begin{lemma} \label{lem:2-SC-branchE}
Let $(G=(V,E),k)$ be a yes-instance of \SCE{} and $B$ a clean $2$-SC-branch of length at least $k+3$ with attachment points $P_1,P_2$. There exists an optimal edition $F$ of $G$ such that:
\begin{itemize}
    \item  If $P_1$ and $P_2$  are in the same connected component of $(G\triangle F)\setminus V(B^R)$, then $F$ contains an edge-cut of $B$. 
    This edge-cut is $(P_1 \times N_B(P_1))$ or $(P_2 \times N_B(P_2))$ or a min-cut of $B$.
    \item In each case, the other vertices of $B$ affected by $F$ are included in $P_1 \cup N_B(P_1) \cup P_2 \cup N_B(P_2)$. 
    \item In $G\triangle F$ the vertices of $N_B(P_1)$ (resp. $N_B(P_2)$) are all adjacent to the same vertices of $V(G) \setminus V(B^R) $.

\end{itemize}
\end{lemma}

\begin{proof} Let $F$ be an optimal edition of $G$, $H = G \triangle F$. Let $C_1$ (resp. $C_2$) be the critical clique of $H$ which contains $P_1$ (resp. $P_2$). Let $H' = H\setminus V(B^R)$, $C_1' = C_1\setminus V(B^R)$ and $C_2' = C_2\setminus V(B^R)$.  

We first consider the case where $P_1$ and $P_2$ are in different connected component of $H'$. Let $H'_1$ and $H'_2$ be the two connected components of $H'$ containing $P_1$ and $P_2$, respectively, and $H'_3$ the remaining of $H'$.
For $i\in \{1,2\}$, \cref{lem:1-block} implies that $N_B(P_i)$ is a clique in $G$ since $B^R$ is connected, and $N_B(P_i)$ is a critical clique, a maximal clique or intersects exactly one maximal clique of $B^R$. By \cref{lem:1-block_edition}, $C_i'$ is a critical clique or intersects exactly one maximal clique of the connected component $H'_i$. 

By \cref{lem:constructionSC}, the graph $H^*$ corresponding to the disjoint union of $$((B^R, N_B(P_1)) \otimes (H'_1, C_1')), N_B(P_2)) \otimes (H'_2, C_2')$$ and $H'_3$ is strictly chordal (see \cref{fig:2-SC-branch}) and the edition $F^*$ such that $H^* = G \triangle F^*$ verifies the desired properties. Since  $|F^*| \leq |F|$ by construction, we conclude that $F^*$ is optimal. 

\begin{figure}[h]
    \centering
    \includegraphics[width=11cm]{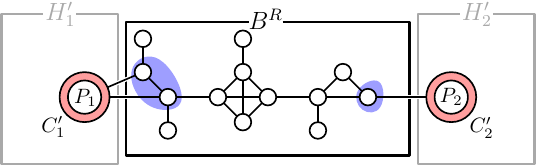}
    \caption{Illustration of the case where $B\triangle F$ is connected in \cref{lem:2-SC-branchE}.}
    \label{fig:2-SC-branch}
\end{figure}

Assume now that $P_1$ and $P_2$ are in the same connected component in $H'$ and let $\pi_{H'}$ denote a shortest path between $c_1 \in P_1$ and $c_2 \in P_2$. Note that $\pi_{H'}$ may be of length $1$ if $c_1$ and $c_2$ are adjacent in $H'$. We first consider the case where there is a path $\pi_B=\{c_1=u_1,u_2,\dots,u_r=c_2\}$ (not necessarily induced) from $c_1$ to $c_2$ in $B$ that still exists in $B \triangle F$. Observe that such this path $\pi_B$ is of length at least $k+3$ since $B$ is of length at least $k+3$.
There is no edge between the two paths in $G$, so $H[V(\pi_B) \cup V(\pi_H)]$ admits at least one induced cycle of length at least 4, which is a contradiction. We conclude that if $P_1$ and $P_2$  are in the same connected component of $H\setminus V(B^R)$, then $F$ contains an edge-cut of $B$. Else there would be a path $\pi_B$ from $c_1$ to $c_2$ in $B$ that still exists in $B\triangle F$.

Hence we can assume that $F$ contains an edge-cut of $B$.
First, if $F_1 = (P_1 \times N_B(P_1)) \subseteq F$, consider the graph $G_1 = G \triangle F_1$. Observe that $B_1 = B \setminus P_1$ is a $1$-SC-branch of $G_1$ with attachment point $P_2$. By \cref{lem:1-SC-branchE} there exists an optimal edition $F_2$ of $G_1$ where the vertices of $B_1$ affected by $F_2$ are in $P_2 \cup N_{B_1}(P_2)$. By \cref{obs:2ed}, $F_1 \cup F_2$ is an optimal edition and it verifies the desired properties. The same arguments can be used if $(P_2 \times N_B(P_2)) \subseteq F$. 

Now, if $F$ contains neither $(P_1 \times N_B(P_1))$ nor $(P_2 \times N_B(P_2))$, let $F_1 = F \cap (V(B^R) \times V(B^R))$.
There are two components in $B \triangle F_1$, say $B_1$ the one containing $P_1$ and $B_2$ one containing $P_2$. These connected components are $1$-SC-branches of $G_1 = G \triangle F_1$ with attachment points $P_1$ and $P_2$. As before \cref{lem:1-SC-branchE} applies on the $1$-SC-branches $B_1$ and $B_2$, thus there is an optimal edition $F_2$ of $G_1$ where the only vertices of $B_1$ and $B_2$ affected are included in $P_1 \cup N_B(P_1)$ and $P_2 \cup N_B(P_2)$. If $F_1$ is a min-cut of $B$, the edition $F_1 \cup F_2$ is an optimal edition of $G$ that verifies the desired properties, else, replacing $F_1$ by a min-cut of $B$ results in a smaller edition that verifies the properties. 
\end{proof}

\begin{polyrule} \label{rule:2-SC-branchE}
Let $(G=(V,E),k)$ be an instance of \SCE{} and $B$ a clean $2$-SC-branch of $G$ of length at least $k+3$ with attachment points $P_1, P_2$. Remove the vertices $V(B^R)$, and add the following path of $k+2$ cliques:  
\begin{center}
$K_{min\{|N_B(P_1)|, k+1\}} - K_{k+1} - K_1 - K_{|mc(B)|} - K_{k+1}^1 - \dots - K_{k+1}^{k-3} - K_{min\{|N_B(P_2)|, k+1\}}$
\end{center} 
\noindent where $K_n$ is the clique of size $n$ and  $K_{min\{|N_B(P_1)|, k+1\}}$ (resp. $K_{min\{|N_B(P_2)|, k+1\}}$) is adjacent to $P_1$ (resp. $P_2$).
\end{polyrule}

\begin{lemma} \label{lem:2-SC-branchE-saferule}
\cref{rule:2-SC-branchE} is safe.
\end{lemma}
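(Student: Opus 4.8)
The plan is to follow the template of the safeness proof of Rule~\ref{rule:2BGbranch}, substituting Lemma~\ref{lem:2-SC-branchE} for Lemma~\ref{lem:2BGbranch}, and Lemma~\ref{lem:constructionSC} together with Observation~\ref{obs:SCmulticomp} for Lemma~\ref{lem:constructionBG}. Write $G'$ for the graph produced by the rule and $S$ for the set of vertices it introduces. First I would check that $B' := G'[S \cup P_1 \cup P_2]$ is again a clean $2$-SC-branch of $G'$ with attachment points $P_1,P_2$: $B'^R = G'[S]$ is the path of cliques, hence connected; $\C(B')$ is a path, hence a block graph; and $P_1,P_2$ are exactly the critical cliques of $B'$ with a neighbour outside $V(B')$, non-adjacent since they sit at the two ends of a path of $k+2$ cliques. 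I would then record, by direct inspection of the gadget, the numerical facts that make the rule work: $G'\setminus V(B'^R) = G\setminus V(B^R)$ as labelled graphs; $B'$ has length $k+3$, so Lemma~\ref{lem:2-SC-branchE} also applies to $B'$; $N_{B'}(P_i)$ is a critical clique of $B'$ of size $\min\{|N_B(P_i)|,k+1\}$ for $i\in\{1,2\}$; and $mc(B') = \min\{mc(B),k+1\}$. The last identity is the only slightly delicate point: every cut of $B'$ at an attachment point $P_i$ costs $|P_i|\cdot\min\{|N_B(P_i)|,k+1\} \geq \min\{|P_i|\cdot|N_B(P_i)|,\ k+1\} \geq \min\{mc(B),k+1\}$ (using $|P_i|\cdot|N_B(P_i)| \geq mc(B)$ and $|P_i|\geq 1$), every other cut costs at least $k+1$, and the cut between $K_1$ and one of its two neighbours realizes exactly $\min\{mc(B),k+1\}$. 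In particular $mc(B)\leq k$ implies $mc(B')=mc(B)$, while $mc(B)>k$ implies $mc(B')>k$; likewise $|N_B(P_i)|\leq k$ implies $|N_{B'}(P_i)|=|N_B(P_i)|$, while if $|N_B(P_i)|>k$ then no $k$-bounded edition can modify the adjacencies of those vertices at all.

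For the forward direction, assume $(G,k)$ is a yes-instance. Since $B$ is clean of length $\geq k+3$, Lemma~\ref{lem:2-SC-branchE} yields an optimal edition $F$, $|F|\leq k$, $H = G\triangle F$, whose interaction with $V(B^R)$ consists of (at most) a min-cut of $B$, together with a homogeneous modification of $N_B(P_1)$ and of $N_B(P_2)$ towards $V(G)\setminus V(B^R)$; and, as in the proof of Lemma~\ref{lem:2-SC-branchE} via Lemma~\ref{lem:1-block_edition}, writing $C_i$ for the critical clique of $H$ containing $P_i$ and $C_i' = C_i\setminus V(B^R)$, each $C_i'$ is a critical clique of, or intersects exactly one maximal clique of, $H\setminus V(B^R)$. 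I would distinguish whether $P_1,P_2$ lie in the same component of $H\setminus V(B^R)$. If not: either $F$ does not meet $V(B^R)$, and I attach $B'$ intact onto $H\setminus V(B^R)$ along $C_1'$ and $C_2'$; or $F\cap(V(B^R)\times V(B^R))$ is a min-cut of $B$, so $mc(B)\leq k$ and $mc(B')=mc(B)$, and I first split $B'$ by a min-cut $M'$ of the same size into components $B_1'\ni P_1$, $B_2'\ni P_2$ (whose near-attachment cliques are, by Lemma~\ref{lem:1-block}, critical cliques, maximal cliques, or intersect a single maximal clique of their component) and attach $B_1',B_2'$. If $P_1,P_2$ do lie in the same component of $H\setminus V(B^R)$, the argument inside the proof of Lemma~\ref{lem:2-SC-branchE} (a long induced cycle would appear otherwise) forces $B$ to be disconnected in $B\triangle F$ with $F$ containing a min-cut of $B$ inside $V(B^R)$, so again $mc(B')=mc(B)\leq k$ and I proceed as in the previous sub-case. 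In every case, Lemma~\ref{lem:constructionSC} and Observation~\ref{obs:SCmulticomp} show that gluing the relevant pieces of $B'$ to $H\setminus V(B^R)$ along $C_1'$ and $C_2'$ produces a strictly chordal graph $H'$ on $V(G')$; the corresponding edition $F'$ is obtained from $F$ by replacing the min-cut of $B$ (if any) with $M'$ of the same size, and by re-pointing the homogeneous modifications of $N_B(P_i)$ onto $N_{B'}(P_i)$ of the same size, so $|F'|\leq|F|\leq k$ and $(G',k)$ is a yes-instance.

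The backward direction is symmetric, starting from an optimal edition $F'$ of $G'$ satisfying Lemma~\ref{lem:2-SC-branchE} applied to $B'$ and rebuilding an edition of $G$. The one extra point to handle is that, when $F'$ contains a min-cut of $B'$ (which forces $mc(B')\leq k$, hence $mc(B')=mc(B)\leq k$, so a min-cut of $B$ of the same size exists), cutting $B$ may leave $B^R$ with several connected components; Observation~\ref{obs:SCmulticomp} is exactly what legitimizes the several simultaneous join compositions along $C_1'$ and $C_2'$, with Lemma~\ref{lem:1-block} guaranteeing that each near-attachment clique is a critical clique, a maximal clique, or intersects exactly one maximal clique of its component of $B^R$.

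I expect the main obstacle to be twofold. \textbf{First}, one must verify that $B'$ is genuinely indistinguishable from $B$ by any $k$-bounded edition: its length must remain $\geq k+3$ so that Lemma~\ref{lem:2-SC-branchE} keeps applying, the sizes $|N_{B'}(P_i)|$ must equal $|N_B(P_i)|$ precisely on the regime $\leq k$ where edits are affordable, and $mc(B')$ must equal $mc(B)$ precisely on the regime $mc(B)\leq k$ where cutting is affordable --- this is the role of the isolated clique $K_1$ next to the clique of size $|mc(B)|$, which localizes a single cheap cut and pins down its value, exactly as the clique of size $\min\{mc(B)-1,k\}$ does in Rule~\ref{rule:2BGbranch}. \textbf{Second}, one must enumerate, in both directions, the possible shapes of an optimal edition near a clean $2$-SC-branch --- branch intact and attached at two distinct components; branch split by a min-cut; near-attachment cliques merged with the outside --- and check that the join-composition lemmata reconstruct strict chordality in each, with the edition cost preserved exactly so that no hidden surplus creeps in; the bookkeeping in the min-cut sub-cases, ensuring $|F'|\leq|F|$ with matching cut sizes on the two sides, is where the argument is most error-prone.
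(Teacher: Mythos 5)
Your proposal is correct and follows essentially the same route as the paper's proof: verify that $B'$ is a clean $2$-SC-branch of length $k+3$ with $mc(B')=\min\{mc(B),k+1\}$, normalize the edition via Lemma~\ref{lem:2-SC-branchE} and Lemma~\ref{lem:1-block_edition}, split on whether $F$ contains a min-cut of $B$, and reassemble with Lemma~\ref{lem:constructionSC} (and Observation~\ref{obs:SCmulticomp} for the multi-component backward case). Your explicit verification of the gadget's min-cut value is a welcome elaboration of a fact the paper merely asserts.
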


\begin{proof}
Let $G'$ the graph obtained from an application of \cref{rule:2-SC-branchE} on the $2$-SC-branch $B$ of graph $G$ with attachment points $P_1,P_2$ and $S$ the vertices introduced by this rule. Observe that $B' = G'[S \cup P_1 \cup P_2]$ is a clean $2$-SC-branch of $G'$ of length $k+3$ and $mc(B') = min \{k+1, mc(B)\}$. 

Let $F$ be a $k$-edition of $G$ that satisfies \cref{lem:2-SC-branchE} and $H = G \triangle F$. Observe that if $mc(B) \geq k+1$, then $F$ does not affect vertices of $V(B)$ other than $P_1 \cup N_B(P_1) \cup P_2 \cup N_B(P_2)$ and if $|N_B(P_i)| > k, i\in \{1,2\}$ $F$ does not affect the vertices of $N_B(P_i)$. 
Let $C_i$ be the critical clique of $H$ which contains $P_i$ and $C_i' = C_i\setminus B^R$. 
By \cref{lem:1-block_edition}, $C'_i$ is a critical clique or intersects exactly one maximal clique of $H_a = H \setminus V(B^R)$. Observe moreover that $N_{B'}(P_i)$ is critical clique of $B'^R$. 

We now construct a $k$-edition $F'$ of $G'$. We consider first the tree cases where $F$ contains an edge-cut of $B$.
If w.l.o.g. $(P_1 \times N_B(P_1)) \subseteq F$ and $(P_2 \times N_B(P_2)) \cap F = \emptyset$. Observe that $(C_1' \times N_{B'}(P_1) \subseteq F$ since $C_1$ is a critical clique of $H$. By \cref{lem:constructionSC} the graph $H' = ((H_a, C_1') \otimes (B'^R,N_{B'}(P_1))$ is strictly chordal.
If $(P_1 \times N_B(P_1)) \subseteq F$ and $(P_2 \times N_B(P_2)) \subseteq F$ then $H'$ the graph resulting of the disjoint union of $H_a$ and $B'^R$ is strictly chordal.
If $F$ contains a min-cut of $B$, then consider $B_i', i\in \{1,2\}$ the connected component of $B' \triangle F_1'$ containing $P_i$, where $F_1'$ is a min-cut of $B'$. By \cref{lem:constructionSC} the graph $H' = ((H_a, C_1') \otimes (B_1',N_{B'}(P_1)), C_2') \otimes (B_2',N_{B'}(P_2))$ is strictly chordal. 

Otherwise, $P_1$ and $P_2$ are not in the same connected component of $H_a$ by \cref{lem:2-SC-branchE}. Let $H_1$ and $H_2$ be these components and $H_3$ be the remaining of $H_a$. By \cref{lem:constructionSC}, the graph $H'$ corresponding to the disjoint union of $((B'^R, N_{B'}(P_1)) \otimes (H_1, C_1'), N_{B'}(P_2)) \otimes (H_2, C_2')$ and $H_3$ is strictly chordal. 

In any of these four cases, let $F'$ be the edition such that $H'  = G' \triangle F'$, by construction $|F'| \leq |F|$, therefore $F'$ is a $k$-edition of $G'$. 

The other way is symmetrical.
\end{proof}

\subsection{Bounding the size of the kernel}
\label{sec:size}

We first state that reduction rules involving SC-branches can be applied in polynomial time.

\begin{lemma}\label{lem:detect}
    Given an instance $(G = (V,E),k)$ of \SCE{}, \cref{rule:1-SC-branchE,rule:voisinage_communBBE,rule:2-SC-branchE} can be applied in polynomial time. 
\end{lemma}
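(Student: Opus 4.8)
The plan is to show that each of Rules~\ref{rule:1-SC-branchE}, \ref{rule:voisinage_communBBE} and~\ref{rule:2-SC-branchE} can be exhaustively applied in polynomial time, which reduces to two tasks: enumerating the relevant structures ($1$-SC-branches, families of $1$-SC-branches with a common neighborhood, clean $2$-SC-branches), and detecting when a rule is applicable. The crucial observation is that strictly chordal graphs can be recognized in polynomial time (e.g.\ via Theorem~\ref{thm:caracBD}, testing chordality and pairwise-disjointness of minimal separators, or by checking the critical clique graph is a block graph), and that the critical clique graph $\C(G)$ itself can be computed in polynomial time using modular decomposition~\cite{PDS07}. All the structural conditions appearing in the rules (being a union of critical cliques whose trace in $\C(G)$ is a connected block graph, having a prescribed number of attachment points, cleanness, length, min-cut size) are polynomial-time testable once $\C(G)$ is at hand.

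First I would explain how to enumerate SC-branches with few attachment points. A $1$-SC-branch is determined by its attachment point $P$, a critical clique of $G$: given $P$, one removes $P$ from $\C(G)$ and inspects each resulting connected component together with $P$, keeping those whose union of critical cliques induces a connected block graph in $\C(G)$ and such that $P$ is the only critical clique with a neighbor outside. Since there are at most $|V(G)|$ critical cliques, this enumerates all $1$-SC-branches in polynomial time; checking applicability of Rule~\ref{rule:1-SC-branchE} (does such a branch exist, and is $|N_B(P)|$ large enough to matter) and performing the replacement is then immediate. For Rule~\ref{rule:voisinage_communBBE}, after listing all $1$-SC-branches one groups them by their outside-neighborhood $N$ (a polynomial number of groups, computed by sorting); within each group one checks whether $\Sigma |P_i| > 2k+1$ and, if the attachment points split into $r>1$ cliques in $G[P_1\cup\dots\cup P_l]$, whether additionally $(\Sigma|P_i|)-\max_j|Q_j|>k$; the partition into the cliques $Q_j$ is read off $\C(G)$. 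For clean $2$-SC-branches (Rule~\ref{rule:2-SC-branchE}) one iterates over all pairs of critical cliques $\{P_1,P_2\}$, removes them from $\C(G)$, and tests whether a single remaining component together with $P_1,P_2$ forms a connected block graph in $\C(G)$ with exactly those two attachment points and connected $B^R$; the length of the branch is the distance between $P_1$ and $P_2$ in $\C(B)$ (breadth-first search), and $mc(B)$ is computed by a max-flow computation in $B$ between $P_1$ and $P_2$ (or directly from the structure of $\C(B)$, as noted after the definition of min-cut).

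Finally I would package these observations: each rule has a polynomial-time applicability test and a polynomial-time execution, and a single application strictly decreases either $|V(G)|$ or a suitable potential (the replacement structures are of size bounded by $O(k)$, hence smaller than the branches they replace once the size threshold is met), so the rules are applied at most a polynomial number of times in total. I expect the only mildly delicate point to be making sure that the enumeration of $2$-SC-branches and the min-cut computation are genuinely carried out on the right object --- one must work with $B$ (equivalently $\C(B)$) rather than with $B^R$ --- but this is routine given that block graphs and their block-cut trees are computable in linear time~\cite{HT73} and minimal separators of block graphs are single vertices. No genuinely new idea beyond ``everything relevant is polynomial-time computable'' is needed here.
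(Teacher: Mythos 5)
Your proposal is correct and follows essentially the same route as the paper's proof: compute the critical clique graph, enumerate $1$-SC-branches (resp.\ clean $2$-SC-branches) by removing one critical clique (resp.\ a pair of critical cliques) from $\C(G)$ and testing whether the remaining components together with the removed clique(s) induce a connected block graph in $\C(G)$. The extra details you supply (grouping branches by common neighborhood for Rule~\ref{rule:voisinage_communBBE}, the min-cut computation, and the termination argument) are sound elaborations the paper leaves implicit.
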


\begin{proof}
    We rely on the linear time computation of the critical clique graph $\C(G)$ of $G$~\cite{PDS09} and the linear time recognition of block graphs. 
    We show that we can enumerate all $1$-SC-branches and $2$-SC-branches in polynomial time. Since an attachment point is by definition 
    a critical clique of $G$, one can detect $1$-SC-branches by removing a
    critical clique of $\C(G)$ and looking among the remaining connected components
    those that induce a connected block graph together with $P$ (in $\C(G)$). 
    Considering a maximal set of such components together with $P$ gives a $1$-SC-branch $B$. 
    We proceed similarly to detect clean $2$-SC-branches by removing a pair of critical cliques $P_1, P_2$ of $\C(G)$, and looking among the remaining connected components those that induce a connected block graph (in $\C(G)$) together with $P_1$ and $P_2$. Such components together with $P_1$ and $P_2$ gives a clean $2$-SC-branch $B$. 
    Since there is $O(|V(G)|)$ critical cliques, the result follows. 
\end{proof}



\THMSCE*

\begin{proof} 
Let $(G=(V,E),k)$ be a reduced yes-instance of \SCE, $F$ a $k$-edition of $G$ and $H=G \triangle F$. We assume that $G$ is connected, the following arguments can be easily adapted if this is not the case by summing over all connected components of $G$. The graph $H$ is strictly chordal, thus the graph $\C(H)$ is a block graph. We first show that $\C(H)$ has  $O(k^3)$ vertices. 
We say that a critical clique of $\C(H)$ is affected if it contains a vertex affected by $F$. Let $A$ be the set of affected critical cliques of $\C(H)$. Since $|F| \leq k$, we have $|A| \leq 2k$. Let $T_A$ be the minimal connected induced subgraph of $H$ that spans all critical cliques of $A$, and $\sd{T_A}$ the set of vertices of degree at least $3$ in $T_A$ (see \cref{fig:ex_graphe_reduit}). Recall that such a subgraph is unique 
by \cref{lem:bloc-span}. 

We first show that $|V(T_A)|$ contains $O(k^2)$ critical cliques. By \cref{lem:bloc-span}, $|\sd{T_A} \setminus A| \leq 3\cdot |A| \leq 6k$. The connected components of the graph $T_A \setminus (A \cup \sd{T_A})$ are paths since every vertex is of degree at most $2$ and by \cref{lem:bloc-span} there are at most $4k$ such paths. Let $R$ be one of these paths, it is composed of unaffected critical cliques, thus there exists a clean $2$-SC-branch $B$ of $G$ that contains $R$. Moreover, the endpoints of $R$ are the attachment points of $B$, hence $B$ have been reduced by \cref{rule:2-SC-branchE}. Thus $R$ is of length at most $k+3$. It remains that $|V(T_A)| \leqslant 2k + 6k + (k+3) \cdot 4k = O(k^2)$.

We will now show that $\C(H) \setminus V(T_A)$ contains $O(k^3)$ critical cliques. First observe that each connected component of $\C(H) \setminus V(T_A)$ is adjacent to some vertices of $T_A$ since the graph is reduced by \cref{rule:compSC}. Since $\C(H)$ is a block graph, connected components of $\C(H) \setminus V(T_A)$ are adjacent to either a critical clique of $T_A$ or a maximal clique of $T_A$ (else, there would be a diamond in $\C(H)$). 
We claim that there are $O(k^3)$ critical cliques in the connected components of $\C(H) \setminus V(T_A)$ adjacent to maximal cliques of $T_A$. Since $T_A$ is a block graph and $|V(T_A)| = O(k^2)$, there are $O(k^2)$ maximal cliques in $T_A$. Moreover, there is only one connected component of $\C(H) \setminus V(T_A)$ adjacent to each maximal clique, otherwise there would be a diamond in $\C(H)$. Take $K$ a maximal clique of $T_A$ and let $X$ be its adjacent connected component of $\C(H) \setminus V(T_A)$. Observe that $X$ has to be a union of $1$-SC-branches and their attachment points form a clique. By \cref{rule:voisinage_communBBE}, there are at most $2k+1$ $1$-SC-branches in $X$ and each one has been reduced by \cref{rule:1-SC-branchE}, thus $X$ contains at most $4k+2$ critical cliques.

Finally, we claim that there are $O(k^2)$ critical cliques in the connected components of $\C(H) \setminus V(T_A)$ adjacent to critical cliques of $T_A$. First take a critical clique of $T_A\setminus A$ and its adjacent connected components of $\C(H) \setminus V(T_A)$, recall that $|T_A\setminus A| = O(k^2)$. Observe that they form a $1$-SC-branch, reduced by \cref{rule:1-SC-branchE}, thus the adjacent connected component actually consists in only one critical clique. Next, take a critical clique $a$ of $A$, and let $C_1, \dots, C_r$ be the connected components of $\C(H) \setminus V(T_A)$ adjacent to $a$ and $Y$ the union of these connected components (see \cref{fig:ex_graphe_reduit}). Observe that each $C_i$ has to be a union of $1$-SC-branches and their attachment points form a clique $Q_i$. Let $B_1, \dots, B_l$ with attachment points $P_1, \dots , P_l$ be the $1$-SC-branches of $Y$. By \cref{rule:voisinage_communBBE}, there are at most $2k+1$ $1$-SC-branches by connected component and if $\Sigma_{i=1}^l |P_i| > 2k+1$, then $(\Sigma_{i=1}^l |P_i|) - \max_{1\leq j \leq r}\{|Q_j|\} \leq k$, implying that there are at most $3k+1$ $1$-SC-branches in $Y$. Each of these $1$-SC-branches is reduced by \cref{rule:1-SC-branchE}, hence $Y$ contains at most $6k+2$ critical cliques in total. Since $|A| = O(k)$, there is $O(k)$ such sets of $1$-SC-branches in $\C(H) \setminus V(T_A)$. Altogether, it remains that $\C(H) \setminus V(T_A)$ contains $O(k^2)$ critical cliques adjacent to critical cliques of $T_A$.

\begin{figure}[h]
    \centering
    \includegraphics[width=10cm]{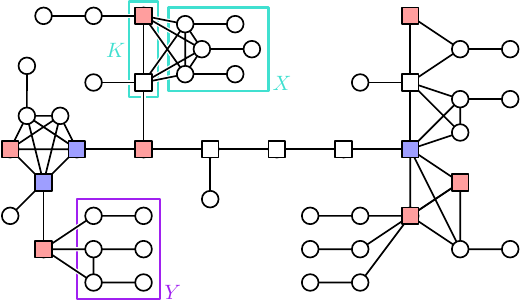}
    \caption{Illustration of the critical clique graph of a reduced instance in the proof of \cref{thm:taille_noyau_SCE}. Square nodes correspond to vertices of $T_A$, the ones filled in red are vertices of $A$, the ones filled in blue are vertices of $\sd{T_A}\setminus A$.}
    \label{fig:ex_graphe_reduit}
\end{figure}

 We conclude that $|V(\C(H))| = O(k^3)$. Each unaffected critical clique of $H$ contains $k+1$ vertices by \cref{rule:borneCC}. Moreover, each affected critical clique of $H$ contains at most $k+1$ unaffected vertices since they are clique modules of $G$ (in particular, they were contained in a critical clique of $G$ which was reduced by \cref{rule:borneCC}). Therefore, taking into accounts the at most $2k$ affected vertices, we have that $|V(G)|\leq (k+1)|V(\C(H))|+2k = O(k^4)$. Therefore, given a reduced instance $(G,k)$ of \SCE{}, if $|V(G)| > g(k)$ for a function $g(k)= O(k^4)$ (defined by this proof) we can safely conclude that $(G,k)$ is a no-instance and return a trivial no-instance of constant size, thus proving the size of our kernel. 
To finish the proof, we claim that a reduced instance can be computed in polynomial time. Indeed, \cref{lem:safecc} states that it is possible to reduce exhaustively a graph under \cref{rule:compSC,rule:borneCC} in linear time. Once this is done, it remains to apply exhaustively \cref{rule:1-SC-branchE,rule:voisinage_communBBE,rule:2-SC-branchE} which is ensured by \cref{lem:detect}.
\end{proof}


\section{Kernels for the completion and deletion variants} 
\label{sec:sccd}

In this section we present kernels for \SCC{} and \SCD{}. It is clear that \cref{rule:compSC,rule:borneCC} are also safe for these variants of the problem. We can also observe that \cref{rule:1-SC-branchE,rule:voisinage_communBBE,rule:2-SC-branchE} are safe for the completion and deletion variants since they are a special case of the edition. 
However, we can use more efficient rules to manage the $2$-SC-branches and bound their length to a constant number of critical cliques. This is due to only allowing edge addition or edge deletion. In the case of \SCC{}, one can observe that for a graph $G$, it is sufficient to consider the clean $2$-SC-branches $B$ of length $3$ which have attachment points that belong to different connected components of $G\setminus V(B^R)$ since a SC-branch remains connected in any completion of $G$. The next lemma 
follows from this observation and its proof is contained in the proof of \cref{lem:2-SC-branchE}. From this lemma we can devise an improved rule for the completion variant.

\begin{lemma} \label{lem:2-SC-branchC}
Let $G=(V,E)$ be graph and $B$ a clean $2$-SC-branch of length at least 3 with attachment points $P_1,P_2$ that are in different connected components of $G\setminus V(B^R)$. There exists an optimal completion $F$ of $G$ such that:
\begin{itemize}
    \item The set of vertices of $B$ affected by $F$ is included in $P_1 \cup N_B(P_1) \cup P_2 \cup N_B(P_2)$,
    \item In $H = G + F$, the vertices of $N_B(P_1)$ (resp. $N_B(P_2)$) are all adjacent to the same vertices of $V(G) \setminus V(B^R)$.
\end{itemize}
\end{lemma}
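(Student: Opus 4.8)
The plan is to mimic the first part of the proof of Lemma~\ref{lem:2-SC-branchE}, exploiting that in a completion no edge is ever deleted, so a connected induced subgraph stays connected and a min-cut is never "used". Let $F$ be an optimal completion of $G$ and $H = G + F$. Since $B$ is a SC-branch and $F$ only adds edges, $B \triangle F = B + (F \cap (V(B)\times V(B)))$ is still connected; more importantly $B^R$ stays connected, so by Lemma~\ref{lem:1-block} the set $N_B(P_i)$ is a critical clique, a maximal clique, or intersects exactly one maximal clique of $B^R$ for $i\in\{1,2\}$.

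The key reduction is the following. If $F$ already contains the full join $P_1 \times N_B(P_1)$, then in $G_1 = G + (P_1\times N_B(P_1))$ the graph $B_1 = B\backslash P_1$ is a $1$-SC-branch with attachment point $P_2$ (here we use that $P_1$ and $P_2$ lie in distinct components of $G\backslash V(B^R)$, so removing $P_1$ does not disconnect the rest of the branch from $P_2$ in a harmful way; $B_1^R$ is connected because $B^R$ was). Then Lemma~\ref{lem:1-SC-branchE}, applied to the completion variant (it is a special case of the edition, as noted in the text), yields an optimal completion $F_2$ of $G_1$ affecting, inside $B_1$, only vertices of $P_2 \cup N_{B_1}(P_2)$, with all of $N_{B_1}(P_2)$ having the same neighborhood outside $B^R$; by Observation~\ref{obs:2ed}, $F_1 \cup F_2$ is an optimal completion of $G$ with the desired properties. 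The symmetric argument handles the case $P_2 \times N_B(P_2) \subseteq F$.

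It remains to treat the case where $F$ contains neither $P_1 \times N_B(P_1)$ nor $P_2 \times N_B(P_2)$. Here, since $F$ adds no edges inside $V(B^R)$ that could merge $P_1$'s side with $P_2$'s side without paying too much, and since $P_1$, $P_2$ are in different connected components of $G\backslash V(B^R)$, one argues that in $H$ the critical cliques $C_1' = C_1\backslash B^R$ and $C_2' = C_2\backslash B^R$ (where $C_i$ is the critical clique of $H$ containing $P_i$) must lie in different connected components of $H\backslash V(B^R)$: otherwise a shortest path in $H\backslash V(B^R)$ between a vertex of $P_1$ and a vertex of $P_2$ together with a shortest path in $B+F$ of length at least $3$ would, by the absence of edges between them in $G$, create an induced cycle of length $\geq 4$, contradicting chordality of $H$. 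Then by Lemma~\ref{lem:1-block_edition} each $C_i'$ is a critical clique or intersects exactly one maximal clique of the component $H_i$ of $H\backslash V(B^R)$ containing it, and by Lemma~\ref{lem:constructionSC} the graph $H^*$ obtained as the disjoint union of $((B^R, N_B(P_1))\otimes(H_1,C_1'), N_B(P_2))\otimes(H_2,C_2')$ with the remaining components is strictly chordal; the corresponding $F^*$ satisfies $|F^*|\leq|F|$, hence is an optimal completion, and by construction it only affects vertices of $B$ in $P_1\cup N_B(P_1)\cup P_2\cup N_B(P_2)$ with the required common-neighborhood property.

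The main obstacle I expect is the length bound: in Lemma~\ref{lem:2-SC-branchE} the hypothesis "length at least $k+3$" was used to force $C_1'$ and $C_2'$ into different components (a short branch could be swallowed by a single clique), whereas here the hypothesis is merely "length at least $3$" together with the external hypothesis that $P_1$, $P_2$ are in different components of $G\backslash V(B^R)$. So the delicate point is verifying that this weaker structural hypothesis, rather than a large length, suffices to rule out the "$B+F$ connected but $C_1',C_2'$ in the same component of $H\backslash V(B^R)$" configuration — which it does, because any such configuration would route two internally-disjoint paths of total length $\geq 4$ between $P_1$ and $P_2$, one entirely inside $B^R$'s side and one entirely outside, producing a long induced cycle. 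Once this is checked, everything else is a direct transcription of the corresponding arguments in the proof of Lemma~\ref{lem:2-SC-branchE}.
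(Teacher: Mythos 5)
Your proposal is correct and follows the paper's own route: the paper derives this lemma by observing that a completion never disconnects a SC-branch, so only the connected-branch case of the proof of Lemma~\ref{lem:2-SC-branchE} is relevant, with the hypothesis that $P_1$ and $P_2$ lie in different components of $G\setminus V(B^R)$ (together with length at least $3$) taking over the role that the length bound $k+3$ played there in forcing $C_1'$ and $C_2'$ into different components of $H\setminus V(B^R)$ --- exactly the delicate point you single out. One remark: your first case is vacuous for a completion, since $P_1\times N_B(P_1)\subseteq E(G)$ by definition of $N_B(P_1)$, so a completion $F$ (being disjoint from $E$) can never contain these pairs and $G+(P_1\times N_B(P_1))=G$; this is precisely why the disconnection cases of Lemma~\ref{lem:2-SC-branchE} disappear here, and that paragraph can simply be deleted.
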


\begin{polyrule} \label{rule:2-SC-branchC}
Let $(G=(V,E),k)$ be an instance of \SCC{}  and $B$ a clean $2$-SC-branch of $G$ of length at least $3$ with attachment points $P_1, P_2$ that are in different connected components of $G\setminus V(B^R)$. Remove the vertices $V(B^R)$ and add two new cliques $K_1$ and $K_2$ of size respectively $min\{|N_B(P_1)|, k+1\}$ and $min\{|N_B(P_2)|, k+1\}$ with edges $(P_1\times K_1), (K_1\times K_2), (K_2 \times P_2)$.

\end{polyrule}

In the case of \SCD{}, we can observe that if the two attachment points of $B$, a $2$-SC-branch of length at least $3$, are in the same connected component of $G\setminus V(B^R)$, then either $B$ is disconnected or the attachment points are not in the same connected component of $(G - F)\setminus V(B^R)$ for any deletion $F$ of $G$. Indeed, if this was not the case, we could find an induced cycle of length at least $4$ since edge additions are not allowed. The next lemma follows from this observation and its proof is contained in the proof of \cref{lem:2-SC-branchE}. As previously we can devise an improved rule from this lemma for the deletion variant.

\begin{lemma} \label{lem:2-SC-branchD}
Let $G=(V,E)$ be graph and $B$ a clean $2$-SC-branch of length at least $3$ with attachment points $P_1,P_2$. There exists an optimal deletion $F$ of $G$ such that:
\begin{itemize}
    \item  If $P_1$ and $P_2$  are in the same connected component of $(G\triangle F)\setminus V(B^R)$, then $F$ contains an edge-cut of $B$. 
    This edge-cut is $(P_1 \times N_B(P_1))$ or $(P_2 \times N_B(P_2))$ or a min-cut of $B$.
    \item In each case, the other vertices of $B$ affected by $F$ are included in $P_1 \cup N_B(P_1) \cup P_2 \cup N_B(P_2)$,
    \item In $G - F$ the vertices of $N_B(P_1)$ (resp. $N_B(P_2)$) are all adjacent to the same vertices of $V(G) \setminus V(B^R) $.
\end{itemize}
\end{lemma}

\begin{polyrule} \label{rule:2-SC-branchD}
Let $(G=(V,E),k)$ be an instance of \SCD{} and $B$ a clean $2$-SC-branch of $G$ of length at least $3$ with attachment points $P_1, P_2$. Remove the vertices $V(B^R)$, and add the following path of $5$ cliques:  
\begin{center}
$K_{min\{|N_B(P_1)|, k+1\}} - K_{k+1} - K_1 - K_{mc(B)} - K_{min\{|N_B(P_2)|, k+1\}}$
\end{center} 
\noindent where $K_n$ is the clique of size $n$ and  $K_{min\{|N_B(P_1)|, k+1\}}$ (resp. $K_{min\{|N_B(P_2)|, k+1\}}$) is adjacent to $P_1$ (resp. $P_2$).
\end{polyrule}

\begin{lemma}
\cref{rule:2-SC-branchC,rule:2-SC-branchD} are safe and can be applied in polynomial time.
\end{lemma}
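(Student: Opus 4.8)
The plan is to run the argument of Lemma~\ref{lem:2-SC-branchE-saferule} essentially unchanged, feeding it the completion-specific structural Lemma~\ref{lem:2-SC-branchC} for Rule~\ref{rule:2-SC-branchC} and the deletion-specific Lemma~\ref{lem:2-SC-branchD} for Rule~\ref{rule:2-SC-branchD} in place of Lemma~\ref{lem:2-SC-branchE}. Let $G'$ be obtained from a clean $2$-SC-branch $B$ of $G$, with attachment points $P_1,P_2$, by the corresponding rule, and let $S$ be the set of newly added vertices. The first step is the bookkeeping that the gadget $B':=G'[S\cup P_1\cup P_2]$ is itself a clean $2$-SC-branch of $G'$ of constant length, with $N_{B'}(P_i)$ a critical clique of size $\min\{|N_B(P_i)|,k+1\}$ for $i\in\{1,2\}$, that $P_1,P_2$ still lie in distinct components of $G'\setminus V(B'^R)=G\setminus V(B^R)$ (for Rule~\ref{rule:2-SC-branchC}), and that $mc(B')=\min\{mc(B),k+1\}$ (for Rule~\ref{rule:2-SC-branchD}): in the path of five cliques added by that rule the cheapest cross-level is the one incident to the singleton clique, and it has size $\min\{k+1,mc(B)\}$. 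I also record that Lemma~\ref{lem:1-block} and Lemma~\ref{lem:1-block_edition} apply to any $F$ for which $G\triangle F$ is strictly chordal, not only to optimal editions, since their proofs only exhibit a dart/gem/$C_4$ out of $H$ together with the fact that the relevant set is a critical clique of $H$; hence they are available in the completion and deletion settings.

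For Rule~\ref{rule:2-SC-branchC}, start from a $k$-completion $F$ of $G$ normalized by Lemma~\ref{lem:2-SC-branchC}, so that inside $B$ only vertices of $P_1\cup N_B(P_1)\cup P_2\cup N_B(P_2)$ are affected, each $N_B(P_i)$ is uniform towards $V(G)\setminus V(B^R)$, and if $|N_B(P_i)|>k$ it is untouched. Since a completion cannot disconnect $B$, only the ``$B$ connected'' sub-case of the proof of Lemma~\ref{lem:2-SC-branchE} is relevant; there one shows, using that $P_1,P_2$ lie in distinct components of $G\setminus V(B^R)$ and that the normalization leaves the inner part of $B$ untouched, that the critical cliques $C_i':=C_i\setminus V(B^R)$ (where $C_i$ is the critical clique of $H:=G\triangle F$ containing $P_i$) lie in distinct components $H_1,H_2$ of $H_a:=H\setminus V(B^R)$. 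By Lemma~\ref{lem:1-block_edition} each $C_i'$ is a critical clique or intersects exactly one maximal clique of $H_i$, and $N_{B'}(P_i)$ intersects exactly one maximal clique of $B'^R$, so repeated use of Lemma~\ref{lem:constructionSC} and Observation~\ref{obs:SCmulticomp} shows that reattaching $B'^R$ to $H_a$, joining $N_{B'}(P_i)$ onto $C_i'$, yields a strictly chordal graph $H'$. The set $F'$ with $H'=G'\triangle F'$ satisfies $|F'|\le|F|$: within $S$ nothing changes, while the join on side $i$ adds $|N_{B'}(P_i)|\cdot|C_i'\setminus P_i|$ edges, and in $H$ the set $N_B(P_i)$ is already adjacent to all of $C_i'$, so $F$ already pays at least $|N_B(P_i)|\cdot|C_i'\setminus P_i|\ge|N_{B'}(P_i)|\cdot|C_i'\setminus P_i|$; when $|N_B(P_i)|>k$ one checks in addition that $C_i'=P_i$, so that side contributes nothing. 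The converse direction is symmetric: normalize a $k$-completion $F'$ of $G'$ by Lemma~\ref{lem:2-SC-branchC}, use Lemma~\ref{lem:1-block} to see that $N_B(P_i)$ is a critical clique / maximal clique / intersects exactly one maximal clique of the connected $B^R$, and reattach $B^R$ via Lemma~\ref{lem:constructionSC}; the cost is again preserved because either $|N_B(P_i)|\le k+1$, so $N_{B'}(P_i)$ has the same size, or $|N_B(P_i)|>k$ and then $F'$ keeps $N_{B'}(P_i)$ \emph{inert} with $C_i'=P_i$.

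For Rule~\ref{rule:2-SC-branchD} the argument is identical except that both outcomes of Lemma~\ref{lem:2-SC-branchD} must be treated. If $F$ disconnects $B$ one may assume it contains a min-cut of $B$; one maps it to a min-cut of $B'$ of size $mc(B')=\min\{mc(B),k+1\}\le mc(B)$, glues the two resulting pieces of $B'^R$ onto the components of $H_a$ containing $C_1'$ and $C_2'$ via Lemma~\ref{lem:constructionSC}, and gets $F'$ with $|F'|\le|F|$; if $F$ leaves $V(B^R)$ untouched, $C_1',C_2'$ lie in distinct components of $H_a$ and one reassembles exactly as in the completion case. The converse is symmetric. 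Polynomial running time then follows from Lemma~\ref{lem:detect}: all clean $2$-SC-branches of $G$ can be enumerated in polynomial time, testing the length bound and, for Rule~\ref{rule:2-SC-branchC}, whether $P_1$ and $P_2$ lie in different connected components of $G\setminus V(B^R)$ is immediate, and each replacement gadget has $O(k)$ vertices and is built in polynomial time.

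The hard part is not obtaining strict chordality of the reassembled graphs --- that is delivered by Lemmata~\ref{lem:constructionSC}, \ref{lem:1-block} and~\ref{lem:1-block_edition} --- but the two-directional cost accounting $|F'|\le|F|$ and $|F|\le|F'|$, i.e.\ proving that the gadget is neither cheaper nor more expensive to modify than $B^R$. The decisive point is the threshold $k+1$: a neighbourhood $N_B(P_i)$ of size at most $k$ is reproduced faithfully, so local modifications around the attachment point cost exactly the same on both sides, whereas a neighbourhood of size larger than $k$ is, by Lemma~\ref{lem:2-SC-branchC}/\ref{lem:2-SC-branchD}, kept inert by any $k$-solution on either side, and one must argue that inertness forces $C_i'=P_i$ so that the residual join contributes no edge. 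For Rule~\ref{rule:2-SC-branchD} the analogous delicate point is that $mc(B')=mc(B)$ whenever $mc(B)\le k$ and $mc(B')=k+1>k$ otherwise, so that ``$B$ is separable within the budget'' is mirrored exactly by ``$B'$ is separable within the budget''. A secondary subtlety, already present in the editing proof, is confirming for Rule~\ref{rule:2-SC-branchC} that $C_1'$ and $C_2'$ really end up in distinct components of $H_a$, rather than merely being attached to it in a way that keeps the reassembly strictly chordal.
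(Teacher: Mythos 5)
Your proposal is correct and follows exactly the route the paper intends: the paper in fact states this lemma without proof, relying on the reader to adapt the argument of Lemma~\ref{lem:2-SC-branchE-saferule} with Lemmata~\ref{lem:2-SC-branchC} and~\ref{lem:2-SC-branchD} substituted for Lemma~\ref{lem:2-SC-branchE}, which is precisely what you do. Your write-up is more explicit than the paper on the two genuinely delicate points (the cost accounting around the $k+1$ threshold and the computation $mc(B')=\min\{mc(B),k+1\}$ for the five-clique gadget), and both check out.
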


With these new rules, we can observe that in the proof of \cref{thm:taille_noyau_SCE}, the connected minimal induced subgraph $T$ that spans all the affected critical cliques contains $O(k)$ critical cliques. Indeed, the quadratic number of critical cliques in $T$ resulted from the paths in $T$ that contained a linear number of critical cliques.  However, with \cref{rule:2-SC-branchC,rule:2-SC-branchD}, these paths contain $O(1)$ critical cliques, implying that $T$ contains $O(k)$ critical cliques. With this observation, it follows that a reduced graph contains $O(k^2)$ critical cliques and thus contains $O(k^3)$ vertices.


\THMSCCD*

\section{Conclusion}
\label{sec:conclu}

We presented polynomial kernels for the edition and deletion variants of block graph edge modification problems and the three variants of strictly chordal edge modification problems. 
Our conviction is that the approach based on decompositions of the target class, combined with the ability of reducing the size of the bags of the decomposition and of limiting the number of affected bags to $O(k)$ is a promising starting point for edge modification problems, especially into subclasses of chordal graphs. The technique has been employed especially for classes that admit a tree-like decompositions with disjoint bags (e.g., 3-leaf power~\cite{BPP10}, trivially perfect graphs~\cite{DPT23,DP23} and $\mathcal{M}$-free graphs~\cite{KU14}), also for other types of tree-like decompositions with non-disjoint bags (e.g., ptolemaic graphs~\cite{CGP21}). We generalize it here to strictly chordal graphs, that have a decomposition into disjoint bags as nodes of a block graph.

The difficulty is that, at this stage, each class requires ad-hoc arguments and reduction rules, based on its specific decomposition. An ambitious goal would be to obtain a generic algorithm for edge modification problems into any class of chordal graphs, plus a finite set of obstructions, as conjectured by Bessy and Perez~\cite{BP13} for completion problems. As an intermediate step, we ask whether $4$-leaf power completion, deletion and editing problems admit a polynomial kernel. 

On another level, it would be really interesting to obtain polynomial kernelization lower bounds for edge-modification problems. To the best of our knowledge, such bounds are only known for vertex-deletion problems; more precisely, any such problem where the target class is closed under subgraphs does not admit a kernel with $O(k^{2-\epsilon})$ edges unless $coNP \subseteq NP/poly$, see~\cite{DvM14}. 

\bibliography{mybib}
\end{document}